\documentclass[11pt]{article}

\usepackage[dvips,letterpaper,margin=1in,bottom=1in]{geometry}

\usepackage{amsmath,amssymb,amsthm}
\usepackage{microtype}
\usepackage{xcolor}
\usepackage{array}
\usepackage{float}
\usepackage{algorithm}
\usepackage{algpseudocode}
\usepackage{authblk}
\algrenewcommand\algorithmicrequire{\textbf{Input:}}
\algrenewcommand\algorithmicensure{\textbf{Output:}}
\usepackage{tikz}
\usetikzlibrary{arrows.meta,calc,positioning}
\usepackage{hyperref}

\floatstyle{ruled}
\restylefloat{algorithm}

\hypersetup{
  colorlinks=true,
  linkcolor=blue!55!black,
  urlcolor=blue!55!black,
  citecolor=blue!55!black,
  pdftitle={Time-Dependent Hamiltonian Simulation with Optimal Query Complexity}
}
\allowdisplaybreaks

\newtheorem{theorem}{Theorem}
\newtheorem{lemma}[theorem]{Lemma}
\newtheorem{proposition}[theorem]{Proposition}
\theoremstyle{definition}
\newtheorem{definition}{Definition}
\newtheorem{convention}{Convention}
\theoremstyle{plain}
\newcommand{\Sys}{\mathcal H_{\mathsf S}}
\newcommand{\Anc}{\mathcal H_{\mathsf A}}
\newcommand{\Time}{\mathcal H_{\mathsf T}}
\newcommand{\Priv}{\mathcal H_{\mathrm{priv}}}
\newcommand{\HAMT}{\mathrm{HAM\mbox{-}T}}
\newcommand{\eps}{\varepsilon}
\newcommand{\norm}[1]{\left\lVert #1\right\rVert}
\newcommand{\ket}[1]{\lvert #1\rangle}
\newcommand{\bra}[1]{\langle #1\rvert}
\newcommand{\ketbra}[2]{\lvert #1\rangle\!\langle #2\rvert}
\newcommand{\bigO}{\mathcal O}

\title{Time-Dependent Hamiltonian Simulation\\with Optimal Query Complexity}

\author[1]{Boyang Chen\thanks{\texttt{by-chen24@mails.tsinghua.edu.cn}}}
\author[2,3]{Minbo Gao\thanks{\texttt{gmb17@tsinghua.org.cn}}}
\author[4,5]{Xinzhao Wang\thanks{\texttt{xinzhaowang3@gmail.com}}}
\author[4,5]{Shuo Zhou\thanks{\texttt{antientropy@pku.edu.cn}}}

\affil[1]{Department of Computer Science and Technology,
Tsinghua University, Beijing, China}

\affil[2]{Institute of Software,
Chinese Academy of Sciences, Beijing, China}

\affil[3]{University of Chinese Academy of Sciences,
Beijing, China}

\affil[4]{Center on Frontiers of Computing Studies,
Peking University, Beijing, China}

\affil[5]{School of Computer Science,
Peking University, Beijing, China}
\date{}

\begin{document}
\maketitle
\vspace{-2.2em}

\begin{abstract}

We give a query-optimal algorithm for simulating a general \(n\)-qubit
time-dependent Hamiltonian \(H(t)\) on \([0,T]\), assuming that \(H\) is
Lipschitz continuous and \(\norm{H(t)}\leq\alpha\).  In the standard
\(\HAMT\) access model, the algorithm approximates the time-ordered
propagator \(U_H(T)\) to error \(\eps\) using
\[
 \bigO\!\left(
 \alpha T+\frac{\log(1/\eps)}
 {\log\!\bigl(e+\log(1/\eps)/(\alpha T)\bigr)}
 \right)
\]
\(\HAMT\) queries.  This matches the known query lower bound for
time-independent Hamiltonians, showing that time dependence incurs no
asymptotic query overhead.

Our method first constructs a one-query transducer that, given an auxiliary
state, implements an approximation to \(U_H(T)\) and returns the state
unchanged.  A weighted combination of circuits that apply the transducer
different numbers of times makes the error caused by omitting this state
decay factorially, yielding the stated optimal
precision dependence.  For time-independent Hamiltonians, the same
method also gives a query-optimal alternative to qubitization.
\end{abstract}

\newpage

\section{Introduction}

Hamiltonian simulation is one of the central algorithmic tasks in quantum
computation. It formalizes the original motivation for quantum computers as
devices for efficiently reproducing quantum dynamics~\cite{Fey82,Llo96}.
In the basic time-independent setting, given access to $H$, the goal is to
implement $e^{-iHt}$ with complexity in terms of the evolution time and target
accuracy. This problem has driven advances from sparse-Hamiltonian algorithms and
linear-combination-of-unitaries methods to query-optimal algorithms based on quantum signal
processing and qubitization, as well as the broader framework of quantum
singular value transformation~\cite{AT03,BACS07,CW12,BCK15,LC17,LC19,GSLW19}.
Hamiltonian simulation also serves as a fundamental subroutine in algorithms
for molecular energy estimation, linear systems of equations, and
thermal-state preparation~\cite{AGDLH05,HHL09,PW09}.

Many natural physical and algorithmic settings instead involve
time-dependent Hamiltonians.
Time-dependent Hamiltonians arise in driven systems,
quantum control, and adiabatic
computation~\cite{Eck17,BCR10,FGGS00,
AVK+07},
and may also be introduced algorithmically through transformations such as the
interaction picture~\cite{LowWiebe18}. They further serve as primitives in
quantum algorithms for nonautonomous differential equations and linear
nonunitary dynamics~\cite{BC24,ALL23,CJL25}.
This leads to the general task of simulating a \emph{time-dependent} Hamiltonian
$H(t)$ on $t\in[0,T]$. The target operation is the time-ordered
propagator
\begin{equation*}
    U_H(T)
    =
    \mathcal{T}\exp\!\left(
        -i\int_0^T H(t)\,\mathrm{d}t
    \right),
\end{equation*}
where $\mathcal{T}$ orders the Hamiltonians chronologically.

Determining the query complexity of time-dependent Hamiltonian simulation is a fundamental problem in quantum algorithm design that has long remained open, while the optimal query complexity of time-independent Hamiltonian simulation is fully characterized.
For $\|H\|\leq\alpha$ and evolution scale $\alpha T\geq1$, the optimal
query complexity for achieving error at most $\eps$ is
\begin{align}\label{eq:timeIndependentBound}
\Theta\!\left(
        \alpha T
        +
        \frac{\log(1/\eps)}
        {\log\!\bigl(e+\log(1/\eps)/(\alpha T)\bigr)}
    \right)
\end{align}
queries~\cite{LC19,GSLW19}. This leaves open whether general time dependence
requires additional queries or whether the time-independent bound remains
achievable.

Existing methods do not attain the bound \eqref{eq:timeIndependentBound} for a general time-dependent
Hamiltonian.
Truncated Dyson-series algorithms partition the time interval into bounded-norm
segments and approximate the segment propagators
separately~\cite{LowWiebe18,KieferovaEtAl19,BerryEtAl20}.
These methods exploit the factorial
convergence of the Dyson series and can scale with the integrated
normalization of the Hamiltonian. In the standard segmented construction,
however, the precision-dependent truncation query cost is incurred on every segment,
so it combines \emph{multiplicatively} with the evolution scale $\alpha T$ rather than
\emph{additively}.
Another family of methods represents time dependence by a time-independent
Hamiltonian on a larger clock or Floquet-Hilbert
space~\cite{WatkinsEtAl24,MizutaFujii23,Mizuta23,CJL25,LiWang25}.
Floquet methods give optimal or nearly-optimal bounds for certain periodic and
multiperiodic Hamiltonians, while discrete-clock methods apply more generally
but their known query bounds still contain additional precision-dependent
overhead. 
Consequently, neither approach yields a general algorithm matching 
the time-independent lower bound.
Recent work proves optimal gate and query lower bounds for
time-dependent simulation under classical term access~\cite{ZAH26}.
Its query lower bound concerns noncoherent access and therefore does
not resolve the coherent-oracle query complexity studied here.
In addition, we are not aware of any lower
bound for time-dependent simulation in the usual coherent setting stronger than the bound
in~\eqref{eq:timeIndependentBound}.

This motivates the central question:

\begin{quote}
     \emph{What is the optimal query complexity of simulating general time-dependent Hamiltonians?}
\end{quote}

In this paper, we determine the optimal query complexity and show that general time
dependence incurs \emph{no} asymptotic query overhead. Our algorithm uses a
transducer in the framework of Belovs, Jeffery, and Yolcu~\cite{BJY24} to
encode the complete sequence of time-ordered steps, rather than
approximating and composing the short-time propagators independently. Beyond resolving the
time-dependent problem, the same framework yields a query-optimal alternative
to qubitization for time-independent Hamiltonian simulation.

\subsection{Main results}

We use the time-indexed block-encoding-access model of Low and
Wiebe~\cite{LowWiebe18}, denoted by $\HAMT$ and formally defined in
Definition~\ref{def:HAMTOracle}.  Our main result is as follows.

\begin{theorem}[Informal version of Theorem~\ref{thm:main}]
\label{thm:informalMain}
Given $\HAMT$ access to a Lipschitz-continuous time-dependent Hamiltonian
$H(t)$ on $[0,T]$ satisfying $\norm{H(t)}\leq\alpha$, there is a quantum
algorithm that approximates $U_H(T)$ to error at most $\eps$ using
\[
    \bigO\!\left(
        \alpha T
        +
        \frac{\log(1/\eps)}
        {\log\!\bigl(e+\log(1/\eps)/(\alpha T)\bigr)}
    \right)
\]
\(\HAMT\) queries.
\end{theorem}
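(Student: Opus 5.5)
The plan follows the two-stage structure in the abstract: a transducer that simulates the whole evolution with essentially one query, followed by a factorially convergent correction for the state the transducer needs.

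\textbf{Stage 1: discretize and build the transducer.} Discretize $[0,T]$ into $M=\mathrm{poly}(\alpha T/\eps)$ time points. Lipschitz continuity makes the time-ordered product of short-time propagators $\prod_j e^{-iH(t_j)\Delta t}$ within $\eps/3$ of $U_H(T)$. Only $\log M$ enters the clock register, so query counts are unaffected. Next, write the Dyson series of the discretized product as a single vector object. Its terms are sums over increasing sequences of times $t_{j_1}<\dots<t_{j_k}$ of products $H(t_{j_k})\cdots H(t_{j_1})$. Following the Belovs--Jeffery--Yolcu framework, I will design a transducer $S$ acting on $\Sys\otimes\Priv$ with these properties:
\begin{itemize}
\item its private space carries a clock register together with a ``Dyson order'' register;
\item each interaction with the oracle is one $\HAMT$ call at the current time index;
\item on input $\ket{\psi}\ket{\phi_0}$, where $\ket{\phi_0}$ is a fixed catalytic auxiliary state, it outputs $(U\ket{\psi})\ket{\phi_0}$ up to small error.
\end{itemize}
The key design choice is to encode time ordering in the catalyst itself, via amplitudes of walk states over the clock. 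The transducer then never needs to compose separate segment propagators. Its transduction action is a single reflection-type unitary using $\bigO(1)$ queries.

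\textbf{Stage 2: bound the catalyst and remove it.} I will bound the norm of $\ket{\phi_0}$ by roughly $e^{\bigO(\alpha T)}$ in the relevant weighted sense, so that the transducer's query/complexity parameter $W$ scales like $\alpha T$. To avoid needing $\ket{\phi_0}$, apply the transducer $m$ times starting from a simple initial state. After $m$ applications, the unresolved component decays like $(c\,\alpha T)^k/k!$ in a polynomial-in-$m$ expansion. A linear combination of circuits with $m=0,\dots,K$ applications, with coefficients chosen like Taylor/Richardson weights, cancels the low orders. The residual error is then $(c\,\alpha T)^K/K!$. With oblivious amplitude amplification, where the combined operator is close to unitary, this yields a unitary approximation. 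Choosing $K=\Theta\!\bigl(\alpha T+\log(1/\eps)/\log(e+\log(1/\eps)/(\alpha T))\bigr)$ makes $(c\alpha T)^K/K!\le\eps/3$ by Stirling's formula. Since the total query count is $\bigO(K)$, this gives the claimed bound.

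\textbf{Main obstacle.} The hard step is the transducer construction itself. I need to show that a single transducer with $W=\bigO(\alpha T)$ and $\bigO(1)$ queries per step exactly realizes the time-ordered product. In particular, the catalyst must be well defined and the error analysis must scale with the integrated norm $\alpha T$ rather than with $M$. I would attempt this via a block-tridiagonal span-program-like structure on the clock register. I would verify the ordering property by induction on the clock index, and bound the catalyst norm through the Dyson-series coefficients $(\alpha T)^k/k!$.
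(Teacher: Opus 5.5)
Your outline has the right global shape (a one-query transducer, then a weighted combination of reuse lengths), but both load-bearing steps are missing, and the mechanisms you sketch for them would not work as stated.

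\textbf{The transducer.} You never construct it, and the object you describe is not a Belovs--Jeffery--Yolcu transducer. In their framework the catalyst is $\Gamma\ket\psi$, which depends linearly on the input and lives in a direct summand. The identity $S(\ket\psi\oplus\Gamma\ket\psi)=U\ket\psi\oplus\Gamma\ket\psi$ holds exactly. It is not a fixed state $\ket{\phi_0}$ in a tensor factor that works up to small error, because the catalyst must carry the trajectory of $\ket\psi$. The paper achieves this by replacing $e^{-i\delta H(t_j)}$ with the Cayley (implicit midpoint) step $\ket{\psi_{j+1}}-\ket{\psi_j}=-i(w/\alpha)H(t_j)\ket{y_j}$, which is linear in a single application of $H(t_j)$. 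The local catalyst $\ket{x_j}=\sqrt{w/2}(I+iO_j)\iota_0\ket{y_j}$ sits on time label $j$. One $\HAMT$ query acts on all labels at once. A fixed oracle-free unitary $R$, applied for $j=0,\dots,J-1$, then advances $\ket{\psi_j}$ to $\ket{\psi_{j+1}}$ while restoring $\ket{x_j}$. This yields $\norm{\Gamma}^2\le\alpha T$ directly from $\norm{\ket{x_j}}^2=w\norm{\ket{y_j}}^2$. Your $e^{\bigO(\alpha T)}$ estimate via Dyson coefficients is not what is needed. It also contradicts your claim that the relevant parameter, which is $\norm{\Gamma}^2$, is $\bigO(\alpha T)$.

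\textbf{The factorial decay.} You assert it rather than derive it, and Taylor/Richardson weights are the wrong tool. The error after $N$ reuses with zero private input is exactly $U_C-P_N=C\,G_N(A)\Gamma$. Here $G_N(z)=\frac1N\sum_{\ell<N}z^\ell$ and $A$ is the private-to-private block. You give no derivation of an expansion in $m$ with coefficients $(c\alpha T)^k/k!$. For a general contraction, such a Ces\`aro mean admits no expansion in $1/N$ that extrapolation could exploit. Extrapolation weights also typically have $\ell_1$ norm growing with $K$. The LCU normalization $\Lambda=\sum_N|\lambda_N|$ costs on the order of $\Lambda$ rounds of amplitude amplification, so your $\bigO(K)$ query count is unjustified without $\Lambda=\bigO(1)$.

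The missing idea is structural. The paper shows $A=i(I-\Delta)\HAMT$, where $\Delta$ is lower-triangular in the time label with off-diagonal blocks $s^2c^{r-j-1}\Pi_{\mathsf A,0}$. Since $\HAMT^2=I$,
\[
 I+A^2=\Delta+\HAMT\Delta\HAMT-\Delta\HAMT\Delta\HAMT
\]
contains no $\Delta$-free term. Let $Q(z)=(1+z^2)/2$. Summing $Q(A)^q$ over nondecreasing label sequences and applying Schur's test gives $\norm{Q(A)^q}\le(12e\alpha T/q)^q$. The combination polynomial $E(z)=\sum_N\lambda_NG_N(z)$ must therefore contain $Q(z)^q$, i.e.\ vanish to order $q$ at $z=\pm i$, rather than cancel low orders in $m$. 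The specific choice $E_q=Q^qG_{2q}$ keeps $\Lambda=2-2^{1-q}<2$. One degree-three oblivious amplification step then suffices, for $12q$ queries in total.
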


This matches the lower bound in \eqref{eq:timeIndependentBound}, which already
applies to time-independent instances. Consequently,
general time dependence incurs \emph{no asymptotic query overhead}: up to constant
factors, its query complexity is the same as that of optimal time-independent
Hamiltonian simulation.

Our upper bound applies to Lipschitz-continuous Hamiltonians without assuming
periodicity or locality.  The Lipschitz
constant determines how finely time must be sampled and therefore affects the
gate and ancilla complexities, but not the query bound.
Table~\ref{tab:comparison} compares representative results.

\begin{table}[htb]
\centering
\footnotesize
\begingroup
\renewcommand{\arraystretch}{1.22}
\begin{tabular}{@{}>{\raggedright\arraybackslash}m{0.14\linewidth}
                    >{\raggedright\arraybackslash}m{0.20\linewidth}
                    >{\raggedright\arraybackslash}m{0.34\linewidth}
                    >{\raggedright\arraybackslash}m{0.24\linewidth}@{}}
\hline
\textbf{Work} & \textbf{Approach} & \textbf{Query complexity} & \textbf{Additional structure} \\
\hline
\cite{LowWiebe18,KieferovaEtAl19}
& Truncated Dyson series
&
  \(\displaystyle
  \bigO\biggl(
  \alpha T\,\frac{\log(\alpha T/\eps)}
  {\log\log(\alpha T/\eps)}
  \biggr)\) & -- \\
\cite{Mizuta23}
& Floquet-Hilbert-space embedding and qubitization
& \(\displaystyle
  \bigO\biggl(
  \alpha T+\frac{\log(1/\eps)}
  {\log\log\log(1/\eps)}
  \biggr)\) & Multiperiodic with \(O(1)\) periods and Fourier modes,
  and frequency scale \(O(1)\)\\
\cite{WatkinsEtAl24}
& Discrete-clock embedding and qubitization
& \(\displaystyle \widetilde{\mathcal O}\biggl(
  \alpha T+\log(1/\eps)+\frac{\beta T^{2}}{\eps}
  \biggr)\) & -- \\
\cite{LiWang25}
& Discrete clock with Gaussian quadrature
& \(\displaystyle \bigO\biggl(
  \alpha T\,\frac{\log(\alpha T/\eps)}
  {\log\log(\alpha T/\eps)}
  \biggr)\) & -- \\
This work
& Transducer-based LCU simulation
& \(\displaystyle \bigO\biggl(\alpha T+
  \frac{\log(1/\eps)}
  {\log\!\bigl(e+\log(1/\eps)/(\alpha T)\bigr)}
  \biggr)\) & --\\
\hline
\end{tabular}
\endgroup
\caption{Representative query complexities for time-dependent Hamiltonian
simulation.  Here \(\alpha\geq\max_{t\in[0,T]}\|H(t)\|\) is the oracle
normalization and \(\beta\) satisfies
\(\|H(t)-H(s)\|\leq\beta|t-s|\).  The Floquet row displays the precision
dependence for fixed \(\alpha T\), while the full joint bound is given by
Mizuta~\cite{Mizuta23}.  For the discrete-clock bound,
\(\widetilde{\mathcal O}\) suppresses multiplicative \(\log T\) and
\(\log\log(1/\eps)\) factors.}
\label{tab:comparison}
\end{table}

\medskip
\noindent
\textbf{Alternative time-independent construction.}
For $H(t)\equiv H$, the same framework also applies to time-independent
Hamiltonian simulation.  Rather than using qubitization, it takes
a weighted combination of
transducer-reuse circuits with different reuse lengths, causing the errors
from omitting the auxiliary state to cancel.  It therefore gives a different
query-optimal algorithm for time-independent Hamiltonian simulation.

\subsection{Related work}

\paragraph{Time-dependent product formulas.}
Product-formula methods for nonautonomous Schr\"odinger evolution were
introduced and analyzed by Huyghebaert and De
Raedt~\cite{HuyghebaertDeRaedt90}.  Higher-order decompositions of time-ordered
exponentials were later developed in~\cite{WiebeEtAl10}, and a time-dependent
Suzuki--Trotter expansion was used to treat arbitrarily rapidly varying local
Hamiltonians~\cite{PoulinEtAl11}.  For unbounded Hamiltonians,
state-dependent vector-norm bounds and commutator scaling for time-dependent
Trotter and generalized Trotter methods were established
in~\cite{AnFangLin21}.  More recent work derives explicit error bounds with
commutator scaling for time-dependent product and
multi-product formulas~\cite{MizutaIkedaFujii24}.  Time-dependent
multi-product formulas also arise from discrete-clock
constructions~\cite{WatkinsEtAl24}.  For low-energy inputs,
the simulation costs were reduced for time-dependent spin Hamiltonians
under a low-energy-support assumption~\cite{ZhouEtAl26}.  These guarantees
often exploit decomposability, commutator or locality structure, separated
energy scales, or a low-energy promise.  Our result instead addresses
worst-case black-box query complexity in the $\HAMT$ model without such
structure.

\paragraph{Dyson-series methods.}
A time-dependent extension based on a truncated Dyson series was suggested
alongside the truncated-Taylor-series method~\cite{BerryEtAl15} and later
developed into explicit algorithms that coherently combine Hamiltonian
samples at ordered times~\cite{LowWiebe18,KieferovaEtAl19}.  A rescaling
construction replaces the dependence on the maximum normalization by its time
integral~\cite{BerryEtAl20}.  By contrast, our algorithm combines
transducer-reuse circuits rather than terms in a truncated Dyson series.

\paragraph{Magnus-expansion methods.}
The qHOP algorithm uses a first-order Magnus approximation to simulate highly
oscillatory dynamics with commutator scaling and, for the Schr\"odinger
equation, exhibits second-order superconvergence~\cite{AnFangLin22}.  A
second-order Magnus algorithm has only logarithmic dependence on time
derivatives and exhibits fourth-order superconvergence for the Schr\"odinger
equation in the interaction picture~\cite{FangLiuSarkar25}.  A later fully
discrete analysis proves a superconvergence estimate uniform in the spatial
resolution~\cite{BornsWeilFangZhang26}.  High-order truncated Magnus algorithms
extend commutator scaling to arbitrary order while retaining only logarithmic
dependence on the Hamiltonian's time variation~\cite{FangLiuZhu25}.  These
methods can exploit small commutators while remaining effective for rapidly
varying Hamiltonians.

\paragraph{Clock and Floquet methods.}
Clock embeddings represent the dynamics by a time-independent Hamiltonian on
an enlarged Hilbert space.  Floquet-Hilbert-space methods give optimal or
nearly-optimal bounds for certain periodic and multiperiodic
Hamiltonians~\cite{MizutaFujii23,Mizuta23}.  Related constructions use an
enlarged discrete clock~\cite{WatkinsEtAl24}.  The continuous Sambe--Howland
clock provides a general framework for several digital and analog clock
constructions~\cite{CJL25}.  Li and Wang separate the clock and system
evolutions and approximate the resulting integrals by Gaussian
quadrature~\cite{LiWang25}.  This approach reduces the
precision dependence of discrete-clock simulation from polynomial to
polylogarithmic, although this factor still multiplies the evolution scale.
Across these approaches, a finite-dimensional implementation requires
discretizing or truncating the added clock or Fourier coordinate, which can
introduce additional precision-dependent overhead.  By contrast, the time
register in our construction only labels catalyst
components acted on in parallel by one \(\HAMT\) query and is not evolved by a
clock Hamiltonian.

\paragraph{Transducers.}
Transducers were introduced by Belovs, Jeffery, and Yolcu as an abstraction
of quantum algorithms for general state conversion and efficient subroutine
composition~\cite{BJY24}, building on the quantum Las Vegas query complexity
framework of Belovs and Yolcu~\cite{BY23}.  An expository account is given
in~\cite{Jef24}.  Recent applications use transducer-based purification to
obtain perfect completeness with an infinite counter~\cite{JW26} and
transducers to improve quantum-walk algorithms for sampling and
estimation~\cite{ARZ26}.  In
the present work, we construct a
transducer that uses one \(\HAMT\) query to implement a discretized
approximation to the time-ordered propagator \(U_H(T)\) when provided with an
auxiliary state, called the catalyst, and returns that state unchanged.  The general
reuse construction of Belovs, Jeffery, and Yolcu removes the need to prepare
this auxiliary state, but its generic norm estimate gives an
\(O(1/\eps^2)\) dependence on precision~\cite[Theorem~3.2]{BJY24}.  For the
transducer constructed here, we improve this to the optimal logarithmic precision dependence on precision in Theorem~\ref{thm:informalMain}.

\subsection{Discussion and future directions}

\paragraph{Improving gate complexity.}
Theorem~\ref{thm:informalMain} gives the optimal query complexity, but the
circuit implementation nevertheless has an additional gate cost: each application of the
one-query transducer performs one fixed update for every point of the finite
time grid, so the gate count grows linearly with the number of time steps.  A
higher-order time discretization might reduce the number of steps needed to
reach a given accuracy.

Appendix~\ref{app:continuousClock} gives another possible direction.  It
constructs a transducer on a continuous time register, which can be viewed as
the limit in which the finite time-step size tends to zero.  This removes the
error caused by approximating the evolution with finitely many short-time
steps, although the Hamiltonian must still be sampled on a finite grid.  The
clock states needed by the algorithm lie in a finite-dimensional
subspace, so replacing the continuous register by a finite register introduces
no further approximation error.  This construction retains the optimal query
complexity.  However, no efficient gate implementation of the resulting
finite-dimensional transducer is known, so this approach does not currently
improve the gate complexity of the direct construction.

\paragraph{Adiabatic and low-energy simulation.}
The lower bound matched by our algorithm concerns worst-case simulation of
the full propagator.  Adiabatic state preparation poses a different task:
the goal is to follow a selected low-energy state or subspace rather than
approximate the full propagator on arbitrary inputs.  Recent
low-energy analyses of time-dependent product formulas demonstrate that such
a promise can reduce the simulation cost~\cite{ZhouEtAl26}.  It is therefore
natural to ask whether the present method can similarly benefit from this
low-energy promise.  This would require showing that, for inputs in the
relevant subspace, both the catalyst norm and the error caused by omitting the
catalyst are controlled by the evolution within that subspace rather than by
the norm of the full Hamiltonian.

\paragraph{Beyond Hamiltonian simulation.}
Transducers provide a general framework for composing quantum algorithms, so
improving the general procedure for removing their catalysts could have
applications beyond Hamiltonian simulation.  The existing estimate for this
procedure applies to every transducer and therefore does not use the additional
structure present in our construction.  Here, combining different reuse
lengths cancels part of the error caused by omitting the catalyst.  Every term
that remains contains transitions that preserve the order of the time labels,
and transitions across widely separated labels are exponentially suppressed.
The resulting ordered sums give the factorial decay.  It
would be useful to identify general conditions under which a transducer's
catalyst can be removed with only logarithmic overhead in \(1/\eps\).

\section{Technical overview}
\label{sec:overview}

The construction has three stages.  First, we construct a unitary that,
when given an auxiliary vector \(\Gamma\ket\psi\), uses one \(\HAMT\)
query to implement \(U_C\), an approximation to \(U_H(T)\), and returns
the auxiliary vector unchanged.  In the terminology of Belovs,
Jeffery, and Yolcu~\cite{BJY24}, this unitary is a transducer and the
auxiliary vector is its catalyst.  The catalyst is generally difficult
to prepare, so this construction alone does not give an algorithm for
\(U_C\).  We therefore use their reuse construction, which
arranges \(N\) calls to the transducer so that an implementation supplied with the
scaled catalyst \(\Gamma\ket\psi/\sqrt N\) produces the desired output.
Omitting this scaled vector changes the output by \(O(N^{-1/2})\),
which gives a quadratic dependence on \(1/\eps\).  Our main additional
step is to derive a more precise expression for this error and choose a
weighted combination of different reuse lengths so that their errors
partly cancel.  The contributions that survive this cancellation must
follow the order of the short-time steps, and their total size decreases
factorially with the largest reuse length.  This gives the optimal
precision dependence in Theorem~\ref{thm:informalMain}.

\paragraph{Implementing the evolution with one query and a catalyst.}
The first stage uses the transducer formulation of Belovs, Jeffery,
and Yolcu~\cite[Sec.~3.1]{BJY24}.

\begin{definition}[Transducer]
Let \(\mathcal H_{\mathrm{pub}}\) and
\(\mathcal H_{\mathrm{priv}}\) be Hilbert spaces, not necessarily of
the same dimension, and let \(U\) be a unitary on
\(\mathcal H_{\mathrm{pub}}\).  A unitary \(S\) on
\(\mathcal H_{\mathrm{pub}}\oplus\mathcal H_{\mathrm{priv}}\) is a
transducer implementing \(U\) if there exists a linear map
\(\Gamma\colon\mathcal H_{\mathrm{pub}}\to
\mathcal H_{\mathrm{priv}}\) such that, for every vector
\(\ket\psi\in\mathcal H_{\mathrm{pub}}\),
\begin{equation}
 S\bigl(\ket\psi\oplus\Gamma\ket\psi\bigr)
 =U\ket\psi\oplus\Gamma\ket\psi.
\end{equation}
This is an identity of vectors, so
\(\ket\psi\oplus\Gamma\ket\psi\) need not be normalized.
\end{definition}
The vector \(\Gamma\ket\psi\) is called the catalyst because it enables
\(S\) to implement \(U\) and is returned unchanged.

To construct \(S\), we first approximate \(U_H(T)\) by a product of
\(J\) short-time unitaries.  We divide \([0,T]\) into \(J\)
intervals with endpoints \(t_j=jT/J\), for \(j=0,\ldots,J\).  The
oracle for this grid is
\(\HAMT=\sum_{j=0}^{J-1}\ketbra{j}{j}_{\mathsf T}\otimes O_j\),
where \(\mathsf T\) is a time-index register and each \(O_j\) is a
Hermitian unitary that block-encodes \(H(t_j)/\alpha\).  In particular,
\(O_j^2=I\) and \(\HAMT^2=I\).  We approximate the evolution on interval \(j\) by
the Cayley transform of \(H(t_j)\):
\[
\begin{aligned}
 U_j^C:=\left(I-\frac{iT}{2J}H(t_j)\right)
   \left(I+\frac{iT}{2J}H(t_j)\right)^{-1}=U_H(t_{j+1},t_j)+\bigO\!\left(\frac{T^2}{J^2}\right).
\end{aligned}
\]
Starting from the initial state \(\ket{\psi_0}\), define
\[
 \ket{\psi_{j+1}}=U_j^C\ket{\psi_j},\qquad
 U_C=U_{J-1}^C\cdots U_0^C.
\]
Thus \(\ket{\psi_J}=U_C\ket{\psi_0}\).  Accumulating the \(J\)
local errors gives
\[
 \norm{U_C-U_H(T)}
 \leq \frac{\beta T^2}{2J}+\frac{(\alpha T)^3}{12J^2}.
\]
Increasing \(J\) therefore controls this approximation error.  Its value
affects the gate complexity but not the query complexity: the transducer
below uses a single \(\HAMT\) query.

We next construct a transducer for each \(U_j^C\).  For every \(j\),
there is a local catalyst \(\ket{x_j}\) such that, after the oracle
block \(O_j\) at time \(t_j\) has been applied, a unitary
\(R_j\) performs
\begin{equation}
 \ket{\psi_j}\oplus O_j\ket{x_j}
 \xrightarrow{\,R_j\,}
 \ket{\psi_{j+1}}\oplus\ket{x_j}.
 \label{eq:localUpdateOverview}
\end{equation}
The unitary \(R_j\) uses no further query; its construction is given in
Section~\ref{subsec:onequery}.

To implement the whole product \(U_C\), we associate the local
catalyst \(\ket{x_j}\) with time-index label \(j\).  The global
catalyst is
\begin{equation}
 \Gamma\ket{\psi_0}
 :=\sum_{j=0}^{J-1}\ket j_{\mathsf T}\ket{x_j}.
 \label{eq:catalystRegister}
\end{equation}
The construction satisfies \(\norm{\Gamma}^2\leq\alpha T\); this
bound controls the cost of removing the catalyst below.
By the definition above, a single \(\HAMT\) query applies \(O_j\) to
the component with time-index label \(j\).  Its action on the global
catalyst is
\[
 \HAMT\bigl(\Gamma\ket{\psi_0}\bigr)
 =\sum_{j=0}^{J-1}\ket j_{\mathsf T}
   O_j\ket{x_j}.
\]
We then apply \(R_j\), for \(j=0,\ldots,J-1\), in increasing order,
with each \(R_j\) acting on the public state and the private component
with time-index label \(j\).  By \eqref{eq:localUpdateOverview}, these
operations successively transform the public state from
\(\ket{\psi_0}\) to \(\ket{\psi_J}=U_C\ket{\psi_0}\) while
restoring every \(\ket{x_j}\), and hence restore the global catalyst
\(\Gamma\ket{\psi_0}\).  Thus the \(\HAMT\) query followed by
\(R_0,\ldots,R_{J-1}\) defines a unitary \(S\) that uses one
\(\HAMT\) query and satisfies
\begin{equation}
 S\bigl(\ket{\psi_0}\oplus\Gamma\ket{\psi_0}\bigr)
 =U_C\ket{\psi_0}\oplus\Gamma\ket{\psi_0}.
 \label{eq:overviewTransducer}
\end{equation}

\paragraph{Finite reuse without a catalyst input.}
The second stage uses the reuse construction of Belovs, Jeffery,
and Yolcu~\cite[Theorem~3.2]{BJY24}.  It arranges \(N\) calls to \(S\) into a circuit
that implements \(U_C\) when supplied with only the scaled catalyst
\(\Gamma\ket{\psi_0}/\sqrt N\).

Figure~\ref{fig:catalystClock} compares the circuit supplied with the
scaled catalyst with the circuit whose private input is zero.  In
panel~(a), the scaled catalyst
\(\Gamma\ket{\psi_0}/\sqrt N\) is supplied as the private input to the first
call.  Let \(\mathsf K\) be the register that labels the order of the
\(N\) calls.  The operation \(F_N\) distributes the public input
uniformly over its labels:
\[
 \ket0_{\mathsf K}\ket{\psi_0}
 \xrightarrow{\,F_N\,}
 \frac1{\sqrt N}\sum_{\ell=0}^{N-1}
 \ket\ell_{\mathsf K}\ket{\psi_0}.
\]
The \(\ell\)th call acts on the public component with
\(\mathsf K=\ell\).  For \(\ell=0\), linearity of the transducer identity
\eqref{eq:overviewTransducer} gives
\[
 S\left(
 \frac{\ket{\psi_0}}{\sqrt N}
 \oplus\frac{\Gamma\ket{\psi_0}}{\sqrt N}
 \right)
 =\frac{U_C\ket{\psi_0}}{\sqrt N}
 \oplus\frac{\Gamma\ket{\psi_0}}{\sqrt N}.
\]
Thus the first call transforms the component with \(\mathsf K=0\) into
\(U_C\ket{\psi_0}/\sqrt N\) and returns the scaled catalyst unchanged, so
it can be supplied to the next call.  Repeating the same
identity for \(\ell=1,\ldots,N-1\) transforms the component with
\(\mathsf K=\ell\) into \(U_C\ket{\psi_0}/\sqrt N\) at each call, yielding
\[
 \frac1{\sqrt N}\sum_{\ell=0}^{N-1}
 \ket\ell_{\mathsf K}U_C\ket{\psi_0}.
\]
Uncomputing the register \(\mathsf K\) with \(F_N^\dagger\) gives the
final public output \(\ket0_{\mathsf K}U_C\ket{\psi_0}\).
Panel~(b) shows the same reuse circuit with its private input set to
zero.  Let \(\ket0_{\mathsf K}P_N\ket{\psi_0}\) denote its final public
component.  In standard block-encoding terminology~\cite{GSLW19}, the reuse
unitary block-encodes \(P_N\): projecting its auxiliary input and output
onto the all-zero state gives \(P_N\).

\begin{figure}[H]
\centering
\scalebox{0.86}{%
\begin{tikzpicture}[
  x=0.74cm,
  y=0.78cm,
  font=\small,
  >=Latex,
  panel/.style={
    draw=black!14,
    rounded corners=3pt,
    fill=black!1.2,
    line width=0.5pt
  },
  paneltitle/.style={
    font=\small\bfseries,
    text=black!82,
    anchor=west
  },
  transducer/.style={
    draw=black!58,
    rounded corners=2.4pt,
    fill=white,
    minimum width=1.70cm,
    minimum height=1.28cm,
    line width=0.68pt,
    font=\small\bfseries,
    align=center
  },
  prep/.style={
    draw=black!50,
    rounded corners=8pt,
    fill=white,
    minimum width=1.12cm,
    minimum height=0.60cm,
    line width=0.62pt,
    font=\small
  },
  public/.style={
    draw=black!64,
    line width=0.72pt,
    -{Latex[length=1.8mm,width=1.15mm]}
  },
  publicbus/.style={
    draw=black!54,
    line width=0.62pt
  },
  private/.style={
    draw=blue!66!black,
    line width=1.00pt,
    -{Latex[length=1.9mm,width=1.2mm]}
  },
  target/.style={
    draw=black!64,
    line width=0.72pt,
    -{Latex[length=1.9mm,width=1.2mm]}
  },
  smalllabel/.style={
    font=\scriptsize,
    text=black!68,
    align=center,
    inner sep=1pt
  },
  publiclabel/.style={
    font=\scriptsize,
    text=black!68,
    fill=black!1.2,
    inner xsep=1.5pt,
    inner ysep=0.8pt
  },
  privlabel/.style={
    font=\scriptsize,
    text=blue!66!black,
    fill=black!1.2,
    inner xsep=1.5pt,
    inner ysep=0.8pt
  },
  targetlabel/.style={
    font=\scriptsize,
    text=black!68,
    align=center,
    inner sep=1pt
  },
  note/.style={
    font=\scriptsize,
    text=black!58,
    align=center,
    inner sep=1pt
  }
]

\begin{scope}[shift={(0,7.20)}]
  \node[panel, minimum width=16.25cm, minimum height=5.6cm,
        anchor=south west] at (-0.4,0) {};
  \node[paneltitle] at (0,6.65) {(a) With the scaled catalyst};

  \node[smalllabel, anchor=south west] at (0.13,5)
    {$\ket0_{\mathsf K}\ket{\psi_0}$};
  \node[prep] (Fa) at (2.55,4.85) {$F_N$};
  \draw[public] (0,4.85) -- (Fa.west);
  \draw[publicbus] (Fa.east) -- (15.35,4.85);
  \node[note, anchor=south] at (9,5)
    {$\displaystyle \sum_{\ell=0}^{N-1}
      \ket{\ell}_{\mathsf K}\ket{\psi_0}/\sqrt{N}$};

  \node[transducer] (a0) at (5.30,2.75)
    {$S$\\[-2pt]\normalfont\scriptsize $\mathsf K=0$};
  \node[transducer] (a1) at (9.40,2.75)
    {$S$\\[-2pt]\normalfont\scriptsize $\mathsf K=1$};
  \node[transducer] (aN) at (15.35,2.75)
    {$S$\\[-2pt]\normalfont\scriptsize $\mathsf K=N-1$};

  \foreach \x/\boxname in {5.30/a0,9.40/a1,15.35/aN}{
    \draw[public] (\x,4.85) -- (\boxname.north);
    \node[publiclabel, anchor=base east]
      at ({\x-0.10},3.94)
      {$\ket{\psi_0}/\sqrt N$};
  }
  \node[note, anchor=base] at (12.35,3.94) {$\cdots$};

  \draw[public] (a0.south) -- (5.30,0.90);
  \draw[public] (a1.south) -- (9.40,0.90);
  \draw[public] (aN.south) -- (15.35,0.90);
  \draw[publicbus] (5.30,0.90) -- (15.35,0.90);

  \node[smalllabel, anchor=base] at (5.30,0.4)
    {$U_C\ket{\psi_0}/\sqrt N$};
  \node[smalllabel, anchor=base] at (9.40,0.4)
    {$U_C\ket{\psi_0}/\sqrt N$};
  \node[note, anchor=base] at (12.35,0.4) {$\cdots$};
  \node[smalllabel, anchor=base] at (15.35,0.4)
    {$U_C\ket{\psi_0}/\sqrt N$};

  \node[prep] (Fad) at (17.80,0.90) {$F_N^\dagger$};
  \draw[public] (15.35,0.90) -- (Fad.west);
  \draw[target] (Fad.east) -- (21.0,0.90);
  \node[targetlabel, anchor=south] at (19.7,1.05)
    {$\ket0_{\mathsf K}\,U_C\ket{\psi_0}$};

  \draw[private] (0,2.75) -- (a0.west);
  \draw[private] (a0.east) -- (a1.west);
  \draw[private] (a1.east) -- (11.80,2.75);
  \draw[private] (12.90,2.75) -- (aN.west);
  \draw[private] (aN.east) -- (21,2.75);

  \node[privlabel, anchor=south] at (2.1,2.90)
    {$\Gamma\ket{\psi_0}/\sqrt N$};
  \node[note, text=blue!66!black, fill=black!1.2]
    at (12.35,2.75) {$\cdots$};
  \node[privlabel, anchor=south] at (18.8,2.90)
    {$\Gamma\ket{\psi_0}/\sqrt N$};
\end{scope}

\begin{scope}
  \node[panel, minimum width=16.25cm, minimum height=5.6cm,
        anchor=south west] at (-0.4,0) {};
  \node[paneltitle] at (0,6.65)
    {(b) With zero private input};

  \node[smalllabel, anchor=south west] at (0.13,5)
    {$\ket0_{\mathsf K}\ket{\psi_0}$};
  \node[prep] (Fb) at (2.55,4.85) {$F_N$};
  \draw[public] (0,4.85) -- (Fb.west);
  \draw[publicbus] (Fb.east) -- (15.35,4.85);
  \node[note, anchor=south] at (9,5)
    {$\displaystyle \sum_{\ell=0}^{N-1}
      \ket{\ell}_{\mathsf K}\ket{\psi_0}/\sqrt{N}$};

  \node[transducer] (b0) at (5.30,2.75)
    {$S$\\[-2pt]\normalfont\scriptsize $\mathsf K=0$};
  \node[transducer] (b1) at (9.40,2.75)
    {$S$\\[-2pt]\normalfont\scriptsize $\mathsf K=1$};
  \node[transducer] (bN) at (15.35,2.75)
    {$S$\\[-2pt]\normalfont\scriptsize $\mathsf K=N-1$};

  \foreach \x/\boxname in {5.30/b0,9.40/b1,15.35/bN}{
    \draw[public] (\x,4.85) -- (\boxname.north);
    \node[publiclabel, anchor=base east]
      at ({\x-0.10},3.94)
      {$\ket{\psi_0}/\sqrt N$};
  }
  \node[note, anchor=base] at (12.35,3.94) {$\cdots$};

  \draw[public] (b0.south) -- (5.30,0.90);
  \draw[public] (b1.south) -- (9.40,0.90);
  \draw[public] (bN.south) -- (15.35,0.90);
  \draw[publicbus] (5.30,0.90) -- (15.35,0.90);

  \node[smalllabel, anchor=base] at (5.30,0.4) {$\ket{p_0}$};
  \node[smalllabel, anchor=base] at (9.40,0.4) {$\ket{p_1}$};
  \node[note, anchor=base] at (12.35,0.4) {$\cdots$};
  \node[smalllabel, anchor=base] at (15.35,0.4) {$\ket{p_{N-1}}$};

  \node[prep] (Fbd) at (17.80,0.90) {$F_N^\dagger$};
  \draw[public] (15.35,0.90) -- (Fbd.west);
  \draw[target] (Fbd.east) -- (21.0,0.90);
  \node[targetlabel, anchor=south] at (19.7,1.05)
    {$\ket0_{\mathsf K}\,P_N\ket{\psi_0}$};

  \draw[private] (0,2.75) -- (b0.west);
  \draw[private] (b0.east) -- (b1.west);
  \draw[private] (b1.east) -- (11.80,2.75);
  \draw[private] (12.90,2.75) -- (bN.west);
  \draw[private] (bN.east) -- (21,2.75);

  \node[privlabel, anchor=south] at (2.1,2.90) {$\ket{z_0}=0$};
  \node[privlabel, anchor=south] at (7.35,2.90) {$\ket{z_1}$};
  \node[note, text=blue!66!black, fill=black!1.2]
    at (12.35,2.75) {$\cdots$};
  \node[privlabel, anchor=south east] at (14.18,2.90)
    {$\ket{z_{N-1}}$};
  \node[privlabel, anchor=south] at (18.8,2.90) {$\ket{z_N}$};
\end{scope}

\end{tikzpicture}
}
\caption{State flow for finite reuse.  In panel~(a), the same scaled
catalyst is reused and the public outputs recombine into
\(U_C\ket{\psi_0}\).  Panel~(b) starts at \(\ket{z_0}=0\);
\(\ket{z_\ell}\) and \(\ket{p_\ell}\) are the private input and public
output of call \(\ell\).  After \(F_N^\dagger\), the \(\mathsf K=0\)
public component is \(P_N\ket{\psi_0}\).  Gray and blue lines denote
public and private components, respectively.}
\label{fig:catalystClock}
\end{figure}

For a unit initial state \(\ket{\psi_0}\), the inputs in panels~(a)
and~(b) differ
only by the scaled catalyst, whose norm is at most
\(\norm{\Gamma}/\sqrt N\).  Unitarity preserves this distance, and taking
the public component cannot increase it.  Hence
\(\norm{U_C-P_N}\leq\norm{\Gamma}/\sqrt N\).  This estimate uses only
the transducer identity and unitarity, not any additional structure of
\(S\).  Thus the generic reuse analysis achieves
\(\norm{U_C-P_N}\leq\eps\) with
\[
 N=\bigO\!\left(1+\frac{\norm{\Gamma}^2}{\eps^2}\right)
 =\bigO\!\left(1+\frac{\alpha T}{\eps^2}\right).
\]
Its quadratic dependence on \(1/\eps\) is insufficient for the query
complexity in Theorem~\ref{thm:informalMain}.

\paragraph{Factorial error decay by combining reuse lengths.}
For the transducer constructed here, the effect of omitting the catalyst
can be described more precisely.  Define the private-to-private
block \(A\) and private-to-public block \(C\) by
\[
 S(0\oplus\ket v)=C\ket v\oplus A\ket v.
\]
Since \(S\) is unitary, both \(A\) and \(C\) are contractions.
For \(G_N(z):=N^{-1}\sum_{\ell=0}^{N-1}z^\ell\), the error is
\[
 U_C-P_N=C\,G_N(A)\Gamma
 =\frac1N\sum_{\ell=0}^{N-1}CA^\ell\Gamma.
\]
Here \(A^\ell\Gamma\ket{\psi_0}\) is the contribution obtained after
the omitted catalyst undergoes the private-to-private block \(A\) for
\(\ell\) successive calls, and \(C\) maps this contribution to the
public output on the next call.  The polynomial \(G_N\) averages these
contributions over the reuse calls.

We now combine reuse circuits with different lengths.  Choose real
coefficients satisfying
\(\sum_{N=1}^L\lambda_N=1\), and write
\(\widetilde U:=\sum_{N=1}^L\lambda_NP_N\),
\(\Lambda:=\sum_{N=1}^L|\lambda_N|\).  Applying the standard LCU
construction for block-encoded matrices~\cite{GSLW19} to the
reuse circuits produces a block-encoding of \(\widetilde U\) with
normalization \(\Lambda\).  Keeping \(\Lambda\leq2\) allows one step of
oblivious amplitude amplification to remove this normalization with
constant overhead~\cite{BerryEtAl15}.  Define
\(E(z):=\sum_{N=1}^L\lambda_NG_N(z)\). The identity
\(U_C-P_N=C\,G_N(A)\Gamma\) and the condition
\(\sum_N\lambda_N=1\) imply
\[
 U_C-\widetilde U=C\,E(A)\Gamma.
\]
Using \(\norm{C}\leq1\) and
\(\norm{\Gamma}\leq\sqrt{\alpha T}\), we obtain
\[
 \big\|U_C-\widetilde U\big\|
 \leq\sqrt{\alpha T}\,\norm{E(A)}.
\]
Thus the coefficients must make \(\norm{E(A)}\) small while keeping
\(\Lambda\) bounded.

To make \(\norm{E(A)}\) small, we choose \(E\) to contain a power of
\(Q(z):=(1+z^2)/2\).  We prove that
\[
 \norm{Q(A)^q}
 \leq\left(\frac{12e\alpha T}{q}\right)^q.
\]
The proof uses the representation \(A=i(I-\Delta)\HAMT\).  On the time
register \(\mathsf T\), \(\Delta\) maps a component with label \(j\)
only to labels \(r\geq j\), and the norms of its blocks decay
exponentially as \(r-j\) increases.  Since \(\HAMT^2=I\), this
representation implies
\[
 2Q(A)=I+A^2
 =\Delta+\HAMT\Delta\HAMT
   -\Delta\HAMT\Delta\HAMT.
\]
In the expansion of \(I+A^2\), the \(I\) term cancels the part of
\(A^2\) containing no \(\Delta\).  Hence every term in
\(Q(A)^q\) contains at least \(q\) factors of \(\Delta\), whose time
labels form a nondecreasing sequence.  Summing over these ordered
labels, together with the exponential block decay, proves the displayed
estimate.

The factor \(Q^q\) provides the required decay.  We now choose a
polynomial \(E_q\) that contains this factor while keeping its LCU
normalization bounded.  The coefficients \(\lambda_N\) in
\(E_q(z)=\sum_N\lambda_NG_N(z)\) determine the normalization
\(\Lambda=\sum_N|\lambda_N|\).  Writing the same polynomial as
\(E_q(z)=\sum_ke_kz^k\), the change-of-basis relation
\(\lambda_N=N(e_{N-1}-e_N)\) shows that \(\Lambda\) is the weighted
variation between neighboring monomial coefficients.  Multiplication
by \(G_{2q}\) averages consecutive monomial coefficients and reduces
this variation.  We therefore take
\[
 E_q(z):=Q(z)^qG_{2q}(z)
 =\sum_{N=1}^{4q}\lambda_NG_N(z),\qquad
 \sum_N\lambda_N=1,\quad
 \Lambda=\sum_N|\lambda_N|<2.
\]
Since \(A\) is a contraction,
\(\norm{G_{2q}(A)}\leq1\).  Therefore, substituting
\(E_q(A)=Q(A)^qG_{2q}(A)\) gives
\[
 \left\|U_C-\sum_{N=1}^{4q}\lambda_NP_N\right\|
 \leq\sqrt{\alpha T}\,
 \left\|Q(A)^q\right\|\left\|G_{2q}(A)\right\|
 \leq\sqrt{\alpha T}
 \left(\frac{12e\alpha T}{q}\right)^q.
\]
The LCU construction uses only reuse lengths \(N\leq4q\) and, after
amplification, \(12q\) queries in total.

\section{Preliminaries and oracle access}
\label{sec:preliminaries}

We now give the register and oracle definitions used in the overview
and in the formal construction below.  Let \(\mathsf S\) be an
\(n\)-qubit system register with Hilbert space \(\Sys\), and let
\(H\colon[0,T]\to\operatorname{Herm}(\Sys)\) satisfy
\(\norm{H(t)}\leq\alpha\) and
\(\norm{H(t)-H(s)}\leq\beta|t-s|\) for all \(s,t\in[0,T]\), where
\(\alpha>0\) and \(\beta\geq0\).  Let
\(U_H(t)\) denote the
propagator satisfying
\[
 i\frac{\mathrm{d}}{\mathrm{d}t}U_H(t)=H(t)U_H(t),\qquad U_H(0)=I.
\]
For \(0\leq s\leq t\leq T\), write
\(U_H(t,s):=U_H(t)U_H(s)^\dagger\).

\begin{samepage}
We use the standard block-encoding terminology of
Gily\'en, Su, Low, and Wiebe~\cite{GSLW19}.
\begin{definition}[Block-encoding]
Let \(\mathsf R\) be an \(a\)-qubit auxiliary register, and let
\(\ket0_{\mathsf R}\) denote its all-zero state.  For \(\alpha>0\) and
\(\epsilon\geq0\), a unitary \(V\) on
\(\mathcal H_{\mathsf R}\otimes\Sys\) is an
\((\alpha,a,\epsilon)\)-block-encoding of a linear operator \(X\) on \(\Sys\)
if
\[
 \left\|X-\alpha
 (\bra0_{\mathsf R}\otimes I)V
 (\ket0_{\mathsf R}\otimes I)\right\|\leq\epsilon.
\]
When \(\alpha=1\) and \(\epsilon=0\), we simply say that \(V\)
block-encodes \(X\).
\end{definition}
\end{samepage}

For a power of two \(J\) chosen by the algorithm, define
\[
 t_j:=\frac{jT}{J},\qquad j=0,\ldots,J.
\]
Let \(\mathsf A\) be an \(a\)-qubit ancilla register with Hilbert
space \(\Anc=(\mathbb C^2)^{\otimes a}\) and initial state
\(\ket0_{\mathsf A}\).  Let \(\mathsf T\) be the time-index register
with Hilbert space \(\Time=\mathbb C^J\); its basis state
\(\ket j_{\mathsf T}\) labels the physical time \(t_j\).

We use the \(\HAMT\) access model introduced by Low and
Wiebe~\cite{LowWiebe18}.  Our Cayley construction additionally requires
each time-indexed block encoding to be a Hermitian unitary.\footnote{This
additional requirement can be enforced with one flag qubit.  If \(U\)
block-encodes the Hermitian operator \(H(t_j)/\alpha\), then
\(\widetilde U=\ketbra{0}{1}\otimes U+
\ketbra{1}{0}\otimes U^\dagger\) is Hermitian and unitary.  Its
compression onto \(\ket+\) in the flag register is
\((U+U^\dagger)/2\), which still block-encodes \(H(t_j)/\alpha\).}
\begin{definition}[\(\HAMT\) oracle]
\label{def:HAMTOracle}
 For the grid above,
the oracle is the Hermitian unitary \(\HAMT_J\) on
\(\Time\otimes\Anc\otimes\Sys\) defined by
\begin{equation}
 \HAMT_J
 =\sum_{j=0}^{J-1}\ketbra{j}{j}_{\mathsf T}
  \otimes O_j,
 \label{eq:HAMT}
\end{equation}
where each \(O_j\) is a Hermitian unitary on
\(\Anc\otimes\Sys\) satisfying
\begin{equation}
 (\bra0_{\mathsf A}\otimes I)O_j
 (\ket0_{\mathsf A}\otimes I)=\frac{H(t_j)}{\alpha}.
 \label{eq:OjBlock}
\end{equation}
Thus \(O_j\) is a Hermitian block-encoding of \(H(t_j)\), with
parameters \((\alpha,a,0)\).
\end{definition}

We use this \(J\) throughout the circuit and suppress the subscript on
\(\HAMT_J\).  A controlled application of \(\HAMT\) counts as one
query.

\section{The Cayley transducer}

We first approximate the original evolution by a product of \(J\) Cayley
transforms.  We then implement the entire product with one transducer
that uses a single oracle query and returns its catalyst unchanged.

\subsection{Cayley step}
\label{subsec:cayley}

For the power-of-two step count \(J\) fixed above, this subsection
defines the Cayley steps and bounds the error of their product relative
to \(U_H(T)\).
Set
\[
 \delta=\frac{T}{J},\qquad w=\alpha\delta,
\]
and assume \(0<w\leq1\).  Given \(\ket{\psi_j}_{\mathsf S}\), define
the midpoint \(\ket{y_j}\) and the next state $\ket{\psi_{j+1}}$ by
\begin{equation}
 \ket{y_j}:=\frac{\ket{\psi_j}+\ket{\psi_{j+1}}}2,
 \qquad
 \ket{\psi_{j+1}}-\ket{\psi_j}
 =-\frac{iw}{\alpha}H(t_j)\ket{y_j}.
 \label{eq:midpointRule}
\end{equation}
Rearranging \eqref{eq:midpointRule} gives
\begin{equation}
 \begin{aligned}
 \ket{\psi_j}
 &=\left(I+\frac{iw}{2\alpha}H(t_j)\right)\ket{y_j},
 &\qquad
 \ket{\psi_{j+1}}
 &=\left(I-\frac{iw}{2\alpha}H(t_j)\right)\ket{y_j},\\
 \ket{y_j}
 &=\left(I+\frac{iw}{2\alpha}H(t_j)\right)^{-1}\ket{\psi_j},
 \end{aligned}
 \label{eq:twosides}
\end{equation}
so $\ket{\psi_{j+1}}$ and $\ket{y_j}$ are well-defined vectors because
$I + \frac{iw}{2\alpha}H(t_j)$ is nonsingular for Hermitian $H(t_j)$.

Using both relations in \eqref{eq:midpointRule}, this update preserves
the norm exactly:
\[
 \begin{aligned}
 \norm{\ket{\psi_{j+1}}}^2-\norm{\ket{\psi_j}}^2
 &=2\operatorname{Re}\,
   \bra{y_j}\bigl(\ket{\psi_{j+1}}-\ket{\psi_j}\bigr)\\
 &=2\operatorname{Re}\!\left(
   -\frac{iw}{\alpha}\bra{y_j}H(t_j)\ket{y_j}
   \right)=0.
 \end{aligned}
\]

Equivalently, the single-step update is the Cayley transform
\begin{equation}
 U_j^C=
 \left(I-\frac{iw}{2\alpha}H(t_j)\right)
 \left(I+\frac{iw}{2\alpha}H(t_j)\right)^{-1}.
 \label{eq:localCayley}
\end{equation}
On an eigenspace of \(H(t_j)/\alpha\) with eigenvalue \(\lambda\),
\[
 \frac{1-iw\lambda/2}{1+iw\lambda/2}
 =\exp\!\left(-2i\arctan\!\frac{w\lambda}{2}\right).
\]
The spectral theorem and
\(\lvert x-2\arctan(x/2)\rvert\leq\lvert x\rvert^3/12\) for
\(\lvert x\rvert\leq1\) give the explicit local error bound
\begin{equation}
 \norm{U_j^C-e^{-iwH(t_j)/\alpha}}\leq\frac{w^3}{12}.
 \label{eq:CayleyOrder}
\end{equation}

For an initial state \(\ket{\psi_0}_{\mathsf S}\), set
\begin{equation}
 \ket{\psi_{j+1}}=U_j^C\ket{\psi_j},
 \label{eq:qtrajectory}
\end{equation}
and define \(U_C\ket{\psi_0}=\ket{\psi_J}\).  On the \(j\)th interval,
the exponential generated by the constant Hamiltonian \(H(t_j)\)
approximates the exact short-time propagator.  Since \(H\) is
\(\beta\)-Lipschitz, Duhamel's formula gives
\[
 \norm{e^{-i\delta H(t_j)}-U_H(t_{j+1},t_j)}
 \leq\int_{t_j}^{t_{j+1}}\norm{H(t)-H(t_j)}\,\mathrm{d}t
 \leq\frac{\beta\delta^2}{2}.
\]
Combining this estimate with \eqref{eq:CayleyOrder} and telescoping
over the \(J\) intervals gives
\begin{equation}
 \norm{U_C-U_H(T)}
 \leq\frac{\beta T^2}{2J}
 +\frac{(\alpha T)^3}{12J^2}.
 \label{eq:CayleyEvolutionError}
\end{equation}

\subsection{One-query transducer}
\label{subsec:onequery}

We now construct a unitary that advances all \(J\) Cayley steps with
one query to the oracle when supplied with the catalyst defined below.

Set \(\Priv:=\Time\otimes\Anc\otimes\Sys\).  On the private space,
one \(\HAMT\) query acts as
\begin{equation}
 \HAMT\bigl(\ket j_{\mathsf T}\ket v_{\mathsf{AS}}\bigr)
 =\ket j_{\mathsf T}O_j\ket v_{\mathsf{AS}}.
 \label{eq:fineOracle}
\end{equation}
By \eqref{eq:OjBlock}, \(O_j\) is an
\((\alpha,a,0)\)-block-encoding of \(H(t_j)\).

To represent the public--private direct sum on registers, we use a flag
qubit.
\begin{convention}[Flag encoding]
\label{conv:publicPrivate}
In this construction, \(\mathsf P\) represents \(\Sys\oplus\Priv\) by
\[
 \ket\psi_{\mathsf S}\oplus\ket v_{\mathsf{TAS}}
 \quad\longleftrightarrow\quad
 \ket0_{\mathsf{PTA}}\ket\psi_{\mathsf S}
 +\ket1_{\mathsf P}\ket v_{\mathsf{TAS}},
\]
where
\(\ket0_{\mathsf{PTA}}
:=\ket0_{\mathsf P}\ket0_{\mathsf T}\ket0_{\mathsf A}\)\footnote{The
public state is padded with zero states on \(\mathsf T\) and \(\mathsf A\)
so that the direct sum can be encoded using the flag qubit \(\mathsf P\).}.
\end{convention}

Define the map that appends a zero ancilla and its adjoint by
\[
 \iota_0:=\ket0_{\mathsf A}\otimes I\colon
 \Sys\longrightarrow\Anc\otimes\Sys,
 \qquad
 \iota_0^\dagger=\bra0_{\mathsf A}\otimes I.
\]
The associated projector is
\[
 \Pi_{\mathsf A,0}
 :=\iota_0\iota_0^\dagger
 =\ketbra{0}{0}_{\mathsf A}\otimes I_{\mathsf S}.
\]
This projector selects the \(\ket0_{\mathsf A}\) component inside
\(\Anc\otimes\Sys\); it is unrelated to the value of \(\mathsf P\).
Define the local catalyst
\begin{equation}
 \ket{x_j}_{\mathsf{AS}}
 :=\sqrt{\frac w2}(I+iO_j)
 \iota_0\ket{y_j}.
 \label{eq:localcatalyst}
\end{equation}
Substituting \eqref{eq:localcatalyst} and using \(O_j^2=I\) gives
\begin{equation}
 \begin{aligned}
 \ket{x_j}-iO_j\ket{x_j}
 &=\sqrt{\frac w2}\,
 \bigl[(I+iO_j)-iO_j(I+iO_j)\bigr]
 \iota_0\ket{y_j}\\
 &=\sqrt{\frac w2}\,
 \bigl(I+iO_j-iO_j+O_j^2\bigr)
 \iota_0\ket{y_j}\\
 &=\sqrt{2w}\,\iota_0\ket{y_j}.
 \end{aligned}
 \label{eq:catalystDifference}
\end{equation}
Since \(O_j\) is a Hermitian unitary by
Definition~\ref{def:HAMTOracle}, \((I+iO_j)/\sqrt2\) is unitary, and
the catalyst norm satisfies
\begin{equation}
 \norm{\ket{x_j}}^2
 =w\norm{\ket{y_j}}^2
 \leq w\norm{\ket{\psi_j}}^2,
 \label{eq:stepcost}
\end{equation}
where the inequality follows from
\(\norm{(I+\frac{iw}{2\alpha}H(t_j))^{-1}}\leq1\).

Set \(c=(1-w/2)/(1+w/2)\) and
\(s=\sqrt{2w}/(1+w/2)\).
Because the step size is independent of \(j\), these coefficients do
not depend on \(j\).
Define the following operator on
\(\Sys\oplus(\Anc\otimes\Sys)\):
\begin{equation}
 R=
 \begin{pmatrix}
 cI
 &-is\iota_0^\dagger\\
 s\iota_0
 &i\bigl(c\Pi_{\mathsf A,0}+I-\Pi_{\mathsf A,0}\bigr)
 \end{pmatrix}.
 \label{eq:cayleyUpdate}
\end{equation}

\begin{lemma}[Local Cayley identity]
\label{lem:localCayley}
The operator \(R\) is unitary and, for every \(j\),
\begin{equation}
 R\bigl(\ket{\psi_j}\oplus O_j\ket{x_j}\bigr)
 =\ket{\psi_{j+1}}\oplus\ket{x_j}.
 \label{eq:stepgraph}
\end{equation}
\end{lemma}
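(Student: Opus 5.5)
The plan is to verify both claims by direct block computation. Unitarity uses only the scalar identity \(c^2+s^2=1\). The transducer identity \eqref{eq:stepgraph} then follows by rewriting \(O_j\ket{x_j}\) through \eqref{eq:catalystDifference} and collecting terms.

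\emph{Unitarity.} First note
\[
c^2+s^2=\frac{(1-w/2)^2+2w}{(1+w/2)^2}=1 .
\]
Write the private space as \(\operatorname{ran}\Pi_{\mathsf A,0}\oplus\operatorname{ran}(I-\Pi_{\mathsf A,0})\), and identify \(\operatorname{ran}\Pi_{\mathsf A,0}\) with \(\Sys\) via the isometry \(\iota_0\).

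In this decomposition, \(R\) acts on \(\Sys\oplus\iota_0\Sys\) as \(\begin{pmatrix}c&-is\\ s&ic\end{pmatrix}\otimes I_{\mathsf S}\). It acts as \(iI\) on \(\operatorname{ran}(I-\Pi_{\mathsf A,0})\).

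The off-diagonal blocks vanish because \(\iota_0^\dagger(I-\Pi_{\mathsf A,0})=0\) and \(\operatorname{ran}\iota_0\perp\operatorname{ran}(I-\Pi_{\mathsf A,0})\). The \(2\times2\) scalar matrix is unitary: its columns \((c,s)\) and \((-is,ic)\) are unit vectors by \(c^2+s^2=1\), and their inner product is \(-isc+isc=0\). Hence \(R\) is unitary.

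\emph{Identity.} From \eqref{eq:catalystDifference} we have \(iO_j\ket{x_j}=\ket{x_j}-\sqrt{2w}\,\iota_0\ket{y_j}\). Therefore
\[
O_j\ket{x_j}=-i\ket{x_j}+i\sqrt{2w}\,\iota_0\ket{y_j}.
\]
We also need \(\iota_0^\dagger\ket{x_j}\). By \eqref{eq:localcatalyst} and \eqref{eq:OjBlock},
\[
\iota_0^\dagger\ket{x_j}=\sqrt{w/2}\,\bigl(I+iH(t_j)/\alpha\bigr)\ket{y_j}.
\]

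For the public component, substitute \(\ket{\psi_j}=(I+\tfrac{iw}{2\alpha}H(t_j))\ket{y_j}\) from \eqref{eq:twosides}. Then
\[
c\ket{\psi_j}-is\,\iota_0^\dagger O_j\ket{x_j}
= c\ket{\psi_j}-s\,\iota_0^\dagger\ket{x_j}+s\sqrt{2w}\ket{y_j}.
\]
This is a combination of \(\ket{y_j}\) and \(H(t_j)\ket{y_j}/\alpha\). The coefficient of \(\ket{y_j}\) is
\[
c+s\sqrt{w/2}=\frac{1+w/2}{1+w/2}=1 .
\]
The coefficient of \(iH(t_j)\ket{y_j}/\alpha\) is
\[
\frac{cw}{2}-\frac{w}{1+w/2}=-\frac w2 .
\]
By \eqref{eq:twosides} the result is exactly \(\ket{\psi_{j+1}}\).

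For the private component, we must show
\[
s\,\iota_0\ket{\psi_j}+i\bigl(c\Pi_{\mathsf A,0}+I-\Pi_{\mathsf A,0}\bigr)O_j\ket{x_j}=\ket{x_j}.
\]
Substituting \(O_j\ket{x_j}\), the \(I-\Pi_{\mathsf A,0}\) part gives \((I-\Pi_{\mathsf A,0})\ket{x_j}\), since \((I-\Pi_{\mathsf A,0})\iota_0=0\). It remains to check that the \(\iota_0\Sys\) part equals \(\iota_0\iota_0^\dagger\ket{x_j}\). After applying \(\iota_0^\dagger\), this reduces to the scalar identity
\[
s\ket{\psi_j}+c\,\iota_0^\dagger\ket{x_j}-c\sqrt{2w}\ket{y_j}=\iota_0^\dagger\ket{x_j},
\]
that is,
\[
s\ket{\psi_j}-c\sqrt{2w}\ket{y_j}=(1-c)\,\iota_0^\dagger\ket{x_j}.
\]
Expand both sides in \(\ket{y_j}\) and \(iH(t_j)\ket{y_j}/\alpha\), using \(1-c=w/(1+w/2)\) and \(s=\sqrt{2w}/(1+w/2)\).
\begin{itemize}
\item The \(\ket{y_j}\) coefficients are \(\sqrt{2w}\,(1-(1-w/2))/(1+w/2)\) on the left and \(w\sqrt{w/2}/(1+w/2)\) on the right. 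These agree.
\item The \(H\) coefficients are \(s\,w/2\) on the left and \((1-c)\sqrt{w/2}\) on the right, both equal to \(w\sqrt{w/2}/(1+w/2)\). These also agree.
\end{itemize}

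The main obstacle is bookkeeping in the private component: the action of \(R\) must be split correctly between \(\operatorname{ran}\Pi_{\mathsf A,0}\) and its complement. The key point is that \(O_j\ket{x_j}\) differs from \(-i\ket{x_j}\) only inside \(\operatorname{ran}\iota_0\). Once that is established, everything reduces to the two scalar coefficient checks, which follow from the definitions of \(c\) and \(s\).
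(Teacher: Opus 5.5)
Your proof is correct and follows essentially the same route as the paper. Unitarity is reduced to the $2\times2$ block $\begin{pmatrix}c&-is\\ s&ic\end{pmatrix}$ together with $iI$ on the complementary ancilla subspace, and the identity follows by matching coefficients of $\ket{y_j}$ and $H(t_j)\ket{y_j}/\alpha$. The only cosmetic difference is that you substitute $O_j\ket{x_j}=-i\ket{x_j}+i\sqrt{2w}\,\iota_0\ket{y_j}$ from \eqref{eq:catalystDifference} in both components. The paper instead computes $\iota_0^\dagger O_j\ket{x_j}=\sqrt{w/2}\,\bigl(H(t_j)/\alpha+iI\bigr)\ket{y_j}$ directly and uses \eqref{eq:catalystDifference} only on $\operatorname{ran}(I-\Pi_{\mathsf A,0})$.
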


\begin{proof}
We first verify that this operator is unitary.  Under the identification
of \(\Sys\) with \(\operatorname{ran}\Pi_{\mathsf A,0}\) through
\(\iota_0\), the restriction of \(R\) to
\(\Sys\oplus\operatorname{ran}\Pi_{\mathsf A,0}\) is represented by
\[
 \begin{pmatrix}c&-is\\ s&ic\end{pmatrix},
\]
whereas it acts as \(iI\) on the remaining private subspace
\(0\oplus(\operatorname{ran}\Pi_{\mathsf A,0})^\perp\).  Since
\(c^2+s^2=1\), \(R\) is unitary.

To evaluate the public and private outputs in \eqref{eq:stepgraph}, we
use
\[
 s=\sqrt{\frac w2}(1+c),\qquad
 1-c=\frac w2(1+c)=s\sqrt{\frac w2}.
\]
The block-encoding identity \eqref{eq:OjBlock}, together with
\eqref{eq:localcatalyst} and \(O_j^2=I\), yields
\begin{equation}
 \begin{aligned}
 \iota_0^\dagger O_j\ket{x_j}
 &=\sqrt{\frac w2}\,
 \iota_0^\dagger O_j(I+iO_j)
 \iota_0\ket{y_j}\\
 &=\sqrt{\frac w2}\,
 \bigl(\iota_0^\dagger O_j\iota_0+iI\bigr)
 \ket{y_j}\\
 &=\sqrt{\frac w2}
 \left(\frac{H(t_j)}{\alpha}+iI\right)\ket{y_j}.
 \end{aligned}
 \label{eq:queriedCatalystProjection}
\end{equation}
Using \eqref{eq:cayleyUpdate}, the first identity in
\eqref{eq:twosides}, and \eqref{eq:queriedCatalystProjection}, the
public output is
\[
 \begin{aligned}
 c\ket{\psi_j}-is\iota_0^\dagger O_j\ket{x_j}
 &=c\left(I+\frac{iw}{2\alpha}H(t_j)\right)\ket{y_j}
 -is\sqrt{\frac w2}
 \left(\frac{H(t_j)}{\alpha}+iI\right)\ket{y_j}\\
 &=\left[
 \left(c+s\sqrt{\frac w2}\right)I
 +i\left(\frac{cw}{2}-s\sqrt{\frac w2}\right)
 \frac{H(t_j)}{\alpha}\right]\ket{y_j}\\
 &=
 \left[I+i\left(\frac{cw}{2}-(1-c)\right)
 \frac{H(t_j)}{\alpha}\right]
 \ket{y_j}
 =\left(I-\frac{iw}{2\alpha}H(t_j)\right)\ket{y_j}
 =\ket{\psi_{j+1}}.
 \end{aligned}
\]
The scalar simplification uses
\(c+s\sqrt{w/2}=1\),
\(s\sqrt{w/2}=1-c\), and
\(cw/2-(1-c)=-w/2\), all of which follow from the definitions of
\(c\) and \(s\).  The final equality is the second identity in
\eqref{eq:twosides}.
For the private output, first project onto the subspace in which
\(\mathsf A\) is in \(\ket0\), and use the first identity in
\eqref{eq:twosides}
together with \eqref{eq:queriedCatalystProjection}:
\[
 \begin{aligned}
 \Pi_{\mathsf A,0}\left[
 s\iota_0\ket{\psi_j}
 +i\bigl(c\Pi_{\mathsf A,0}+I-\Pi_{\mathsf A,0}\bigr)
 O_j\ket{x_j}\right]
 &=s\iota_0\ket{\psi_j}
   +ic\Pi_{\mathsf A,0}O_j\ket{x_j}\\
 &=\iota_0\left[
 s\left(I+\frac{iw}{2\alpha}H(t_j)\right)
 +ic\sqrt{\frac w2}
 \left(\frac{H(t_j)}{\alpha}+iI\right)
 \right]\ket{y_j}\\
 &=\iota_0
 \left[
 \left(s-c\sqrt{\frac w2}\right)I
 +i\left(\frac{sw}{2}+c\sqrt{\frac w2}\right)
 \frac{H(t_j)}{\alpha}
 \right]\ket{y_j}\\
 &=\sqrt{\frac w2}
   \iota_0\left(I+i\frac{H(t_j)}{\alpha}\right)\ket{y_j}\\
 &=\sqrt{\frac w2}\,
   \Pi_{\mathsf A,0}(I+iO_j)\iota_0\ket{y_j}
 =\Pi_{\mathsf A,0}\ket{x_j}.
 \end{aligned}
\]
The coefficient reduction uses
\(s-c\sqrt{w/2}=\sqrt{w/2}\) and
\(sw/2+c\sqrt{w/2}=\sqrt{w/2}\).  The penultimate equality follows
from \eqref{eq:OjBlock}, since
\[
 \Pi_{\mathsf A,0}O_j\iota_0
 =\iota_0\bigl(\iota_0^\dagger O_j\iota_0\bigr)
 =\iota_0\frac{H(t_j)}{\alpha}.
\]
The final equality uses the definition of \(\ket{x_j}\) in
\eqref{eq:localcatalyst}.
On the orthogonal ancilla subspace,
applying \(I-\Pi_{\mathsf A,0}\) to
\eqref{eq:catalystDifference} and using
\((I-\Pi_{\mathsf A,0})\iota_0=0\) gives
\(i(I-\Pi_{\mathsf A,0})O_j\ket{x_j}
=(I-\Pi_{\mathsf A,0})\ket{x_j}\).  Hence the private output satisfies
\[
 \begin{aligned}
 (I-\Pi_{\mathsf A,0})
 \left[
 s\iota_0\ket{\psi_j}
 +i\bigl(c\Pi_{\mathsf A,0}+I-\Pi_{\mathsf A,0}\bigr)
 O_j\ket{x_j}\right]
 &=i(I-\Pi_{\mathsf A,0})O_j\ket{x_j}\\
 &=(I-\Pi_{\mathsf A,0})\ket{x_j}.
 \end{aligned}
\]
Thus the private output is \(\ket{x_j}\), while
the public output is \(\ket{\psi_{j+1}}\), which proves
\eqref{eq:stepgraph}.
\end{proof}

For each \(j\), let \(R_j\) apply the unitary \(R\) from
\eqref{eq:cayleyUpdate} jointly to the public component and the private
component with time-index label \(j\), while
leaving the private components with labels \(r\ne j\),
unchanged.  Under
Convention~\ref{conv:publicPrivate}, its register action is
\begin{equation}
 \begin{aligned}
 \ket0_{\mathsf{PTA}}\ket\psi_{\mathsf S}
 &\longmapsto
 c\ket0_{\mathsf{PTA}}\ket\psi
 +s\ket1_{\mathsf P}\ket j_{\mathsf T}
   \ket0_{\mathsf A}\ket\psi,\\
 \ket1_{\mathsf P}\ket j_{\mathsf T}\ket v_{\mathsf{AS}}
 &\longmapsto
 -is\ket0_{\mathsf{PTA}}\bigl(\iota_0^\dagger\ket v\bigr)
 +i\ket1_{\mathsf P}\ket j_{\mathsf T}
 \bigl[I-(1-c)\Pi_{\mathsf A,0}\bigr]\ket v,\\
 \ket1_{\mathsf P}\ket k_{\mathsf T}\ket v_{\mathsf{AS}}
 &\longmapsto
 \ket1_{\mathsf P}\ket k_{\mathsf T}\ket v,
 \qquad k\ne j.
 \end{aligned}
 \label{eq:globalUpdate}
\end{equation}
Intuitively, apart from \(O(\log J)\) gates that relabel the time
register, \(R_j\) consists of a constant number of one-qubit gates on
\(\mathsf P\) controlled by the \(a+\log J\) qubits in \(\mathsf{AT}\).
It therefore requires \(O(a+\log J)\) one- and
two-qubit gates.  A reuse-register control adds only polylogarithmic
overhead in \(q\), and the action is trivial on \(\mathsf S\).
These rules define a unitary on the encoded public and private spaces;
we extend it by the identity on the unused register states.  In abstract
block notation, define
\begin{equation}
 S^\circ=R_{J-1}\cdots R_0,\qquad
 S=S^\circ(I\oplus\HAMT).
 \label{eq:transducerFactorization}
\end{equation}
Under Convention~\ref{conv:publicPrivate}, \(I\oplus\HAMT\) acts as
the identity on the public summand, applies \(\HAMT\) to registers
\(\mathsf{TAS}\) when \(\mathsf P=1\), and acts as the identity when
\(\mathsf P=0\).

As in \eqref{eq:catalystRegister}, define
\[
 \Gamma\ket{\psi_0}
 :=\sum_{j=0}^{J-1}\ket j_{\mathsf T}\ket{x_j},
\]
which gives a linear map \(\Gamma\colon\Sys\to\Priv\).

\begin{proposition}[One-query Cayley transducer]
\label{prop:globalTransducer}
The circuit \(S\) uses one \(\HAMT\) query, and
\(\norm{\Gamma}^2\leq\alpha T\).  For every
\(\ket{\psi_0}\in\Sys\),
\begin{equation}
 S\bigl(\ket{\psi_0}\oplus\Gamma\ket{\psi_0}\bigr)
 =U_C\ket{\psi_0}\oplus\Gamma\ket{\psi_0}.
 \label{eq:globalgraph}
\end{equation}
\end{proposition}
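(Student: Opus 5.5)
The proposition has three parts: the query count, the catalyst norm bound, and the transducer identity \eqref{eq:globalgraph}. The query count is immediate from the factorization \eqref{eq:transducerFactorization}. The factor \(I\oplus\HAMT\) is a single (flag-controlled) application of \(\HAMT\), and each \(R_j\) acts only on the flag, time, ancilla and system registers through the fixed rules \eqref{eq:globalUpdate}, so it uses no oracle call.

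For the norm bound, the components \(\ket j_{\mathsf T}\ket{x_j}\) are mutually orthogonal, so \(\norm{\Gamma\ket{\psi_0}}^2=\sum_j\norm{\ket{x_j}}^2\). By \eqref{eq:stepcost} each term is at most \(w\norm{\ket{\psi_j}}^2\). The Cayley steps preserve norms exactly, as shown in Section~\ref{subsec:cayley}, so \(\norm{\ket{\psi_j}}=\norm{\ket{\psi_0}}\). The sum is therefore at most \(Jw\norm{\ket{\psi_0}}^2=\alpha T\norm{\ket{\psi_0}}^2\). Linearity of \(\Gamma\) holds because each \(\ket{y_j}\), and hence each \(\ket{x_j}\), is obtained from \(\ket{\psi_0}\) by a fixed linear map (products of Cayley factors and resolvents).

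For \eqref{eq:globalgraph}, I would first apply \(I\oplus\HAMT\). By \eqref{eq:fineOracle} this leaves \(\ket{\psi_0}\) in the public summand and turns the private part into \(\sum_j\ket j_{\mathsf T}O_j\ket{x_j}\). I would then prove by induction on \(k=0,\ldots,J\) that after \(R_{k-1}\cdots R_0\) the state is
\[
\ket{\psi_k}\oplus\Bigl(\sum_{j<k}\ket j_{\mathsf T}\ket{x_j}+\sum_{j\geq k}\ket j_{\mathsf T}O_j\ket{x_j}\Bigr).
\]
For the inductive step, \(R_k\) acts as the identity on private labels \(j\neq k\) by the third rule of \eqref{eq:globalUpdate}. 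On the public summand together with label \(k\), it acts exactly as the operator \(R\) of \eqref{eq:cayleyUpdate}, and Lemma~\ref{lem:localCayley} gives \(\ket{\psi_k}\oplus O_k\ket{x_k}\mapsto\ket{\psi_{k+1}}\oplus\ket{x_k}\). At \(k=J\) the state is \(\ket{\psi_J}\oplus\Gamma\ket{\psi_0}=U_C\ket{\psi_0}\oplus\Gamma\ket{\psi_0}\).

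The only point needing care is this inductive step. One has to check that the register-level rules \eqref{eq:globalUpdate} really coincide with \(R\) under Convention~\ref{conv:publicPrivate}, with the public state padded by \(\ket0_{\mathsf{TA}}\) and private label \(k\) identified with \(\Anc\otimes\Sys\). One also has to confirm that the public input to \(R_k\) is the updated \(\ket{\psi_k}\), not \(\ket{\psi_0}\), and that the catalyst defined from the trajectory matches it. This holds because the \(\ket{x_k}\) are defined from that same trajectory in \eqref{eq:qtrajectory}. Unitarity of \(S\) then follows from the unitarity of \(R\) in Lemma~\ref{lem:localCayley} and of \(\HAMT\).
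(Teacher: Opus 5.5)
Your proposal is correct and follows the paper's proof essentially step for step. The query count comes from the factorization \eqref{eq:transducerFactorization}. The transducer identity is proved by the same induction over \(R_0,\ldots,R_{J-1}\), using Lemma~\ref{lem:localCayley} and the third rule of \eqref{eq:globalUpdate}. The catalyst bound uses orthogonality of the time labels, \eqref{eq:stepcost}, and exact norm preservation of the Cayley steps.
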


\begin{proof}
The single query acts on the catalyst components as specified in
\eqref{eq:fineOracle}, and hence
\[
 (I\oplus\HAMT)
 \left(\ket{\psi_0}\oplus\Gamma\ket{\psi_0}\right)
 =\ket{\psi_0}\oplus
 \sum_{j=0}^{J-1}\ket j_{\mathsf T}
 O_j\ket{x_j}.
\]
After the updates \(R_0,\ldots,R_{\ell-1}\), where
\(0\leq\ell\leq J\), the abstract state is
\[
 \ket{\psi_\ell}\oplus\left(
 \sum_{j=0}^{\ell-1}\ket j_{\mathsf T}\ket{x_j}
 +\sum_{j=\ell}^{J-1}\ket j_{\mathsf T}
 O_j\ket{x_j}\right).
\]
For \(\ell=0\), this expression is exactly the state produced by the
query.  Assume that the displayed expression holds after \(\ell\)
updates.  Equation~\eqref{eq:stepgraph} shows that
\(R_\ell\) replaces
\(\ket{\psi_\ell}\oplus O_\ell\ket{x_\ell}\) by
\(\ket{\psi_{\ell+1}}\oplus\ket{x_\ell}\), while the third rule in
\eqref{eq:globalUpdate} leaves every other time-index label unchanged.  This
proves the expression for \(\ell+1\).  At \(\ell=J\), it becomes
\(U_C\ket{\psi_0}\oplus\Gamma\ket{\psi_0}\), which proves
\eqref{eq:globalgraph}.  Since each \(R_j\) is independent of the
oracle, the factorization \eqref{eq:transducerFactorization} uses
exactly one \(\HAMT\) query.

For the catalyst bound, every Cayley factor is unitary, so
\(\norm{\ket{\psi_j}}=\norm{\ket{\psi_0}}\).  The definition of
\(\Gamma\) above, the orthogonality of the time-index basis, and
\eqref{eq:stepcost} give
\begin{equation}
 \begin{aligned}
 \norm{\Gamma\ket{\psi_0}}^2
 &=\sum_{j=0}^{J-1}\norm{\ket{x_j}}^2
 \leq w\sum_{j=0}^{J-1}\norm{\ket{\psi_j}}^2\\
 &=Jw\norm{\ket{\psi_0}}^2
 =\alpha T\norm{\ket{\psi_0}}^2.
 \end{aligned}
 \label{eq:W}
\end{equation}
\end{proof}

\section{Removing the catalyst with a finite reuse register}
\label{sec:reuse}

The transducer identity \eqref{eq:globalgraph} assumes that the catalyst
\(\Gamma\ket\psi\) is supplied.  Since this catalyst is generally
difficult to prepare, the identity does not directly yield a circuit
for \(U_C\).  We therefore use the reuse construction of Belovs,
Jeffery, and Yolcu~\cite[Theorem~3.2]{BJY24}, which makes \(N\) transducer calls
starting from zero private input.  This section defines the operator
\(P_N\) induced on the public space and derives the change
\(U_C-P_N\) in the public output caused by omitting the catalyst.  The
following sections exploit the structure of the transducer to obtain
faster error decay.

Write the public--private block decomposition of \(S\) as
\begin{equation}
 S=
 \begin{pmatrix}D&C\\ B&A\end{pmatrix}.
 \label{eq:transducerBlocks}
\end{equation}
Here
\[
 D\colon\Sys\to\Sys,\qquad C\colon\Priv\to\Sys,\qquad
 B\colon\Sys\to\Priv,\qquad A\colon\Priv\to\Priv.
\]
Because \(S\) is unitary, its blocks \(A\) and \(C\) are
contractions.
The transducer identity \eqref{eq:globalgraph} is equivalent to
\begin{equation}
 \Gamma=B+A\Gamma,\qquad U_C=D+C\Gamma.
 \label{eq:formalGraph}
\end{equation}

For an integer \(N\geq1\), let \(\mathsf K\) be a
\(\lceil\log_2N\rceil\)-qubit reuse register.  We use its first \(N\)
basis states.  The operations on \(\mathsf K\) below preserve their
span and are extended by the identity on the remaining labels.  This
register is distinct from the time-index
register \(\mathsf T\).  Its
value \(\ell\) identifies reuse call \(\ell\); it
does not represent physical time.  Choose a unitary \(F_N\) on
\(\mathsf K\) satisfying
\[
 F_N\ket0_{\mathsf K}
 =\frac1{\sqrt N}\sum_{\ell=0}^{N-1}\ket{\ell}_{\mathsf K}.
\]
Define the cyclic shift on \(\mathsf K\) by
\[
 X_N\ket{\ell}_{\mathsf K}
 =\ket{\ell+1\bmod N}_{\mathsf K}.
\]
Define
\[
 \widetilde F_N
 =F_N\otimes\ketbra{0}{0}_{\mathsf P}
  +I_{\mathsf K}\otimes\ketbra{1}{1}_{\mathsf P},
 \qquad
 \widetilde X_N
 =I_{\mathsf K}\otimes\ketbra{0}{0}_{\mathsf P}
  +X_N\otimes\ketbra{1}{1}_{\mathsf P}.
\]
The oracle call controlled by \(\mathsf P\) is
\[
 \operatorname{ctrl}(\HAMT)
 :=\ketbra{0}{0}_{\mathsf P}\otimes I_{\mathsf{TAS}}
  +\ketbra{1}{1}_{\mathsf P}\otimes\HAMT.
\]
With \(S^\circ\) defined in \eqref{eq:transducerFactorization}, for each
\(\ell\) define the unitary
\[
 S_\ell^\circ
 =\ketbra{\ell}{\ell}_{\mathsf K}\otimes S^\circ
 +(I_{\mathsf K}-\ketbra{\ell}{\ell}_{\mathsf K})
  \otimes I_{\mathsf{PTAS}}.
\]
Starting with \(\mathsf K=0\) and \(\mathsf P=0\), the circuit performs:
\begin{enumerate}
\item Apply \(\widetilde F_N\).
\item For each \(\ell=0,1,\ldots,N-1\), apply
\(\operatorname{ctrl}(\HAMT)\), \(S_\ell^\circ\), and
\(\widetilde X_N\), in that order.
\item Apply \(\widetilde F_N^\dagger\).
\end{enumerate}
Write the three operations in reuse call \(\ell\) as
\begin{equation}
 T_\ell:=\widetilde X_NS_\ell^\circ
 \bigl(I_{\mathsf K}\otimes\operatorname{ctrl}(\HAMT)\bigr),\qquad
 V_N:=\widetilde F_N^\dagger T_{N-1}\cdots T_0\widetilde F_N.
 \label{eq:VN}
\end{equation}
Each \(T_\ell\) contains the single query
\(I_{\mathsf K}\otimes\operatorname{ctrl}(\HAMT)\), which applies
\(\HAMT\) to the \(\mathsf P=1\) component independently of the value
of \(\mathsf K\), whereas only \(S_\ell^\circ\) is conditioned on
\(\mathsf K=\ell\).

Define \(P_N\colon\Sys\to\Sys\) by
\begin{equation}
 (\bra0_{\mathsf{KPTA}}\otimes I_{\mathsf S})
 V_N\bigl(\ket0_{\mathsf{KPTA}}\ket\psi_{\mathsf S}\bigr)
 =P_N\ket\psi_{\mathsf S}.
 \label{eq:VNpublicOutput}
\end{equation}
Equation~\eqref{eq:VNpublicOutput} is precisely the block-encoding
condition: \(V_N\) block-encodes \(P_N\), with \(\mathsf{KPTA}\) as
its auxiliary registers.
Let \(G_N(z):=N^{-1}\sum_{b=0}^{N-1}z^b\).
The following lemma expresses \(U_C-P_N\) in terms of the
private-to-private and private-to-public blocks of \(S\).

\begin{lemma}[Reuse error identity]
\label{lem:zeroCatalystBlock}
The operator \(P_N\) satisfies
\begin{equation}
 U_C-P_N=C\,G_N(A)\Gamma.
 \label{eq:genericResidual}
\end{equation}
\end{lemma}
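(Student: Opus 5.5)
We plan to track the circuit $V_N$ explicitly on the encoded space $\mathcal H_{\mathsf K}\otimes(\Sys\oplus\Priv)$, with the private summand carried as a single shared register that is not indexed by $\mathsf K$. First we check the action of one call. Under Convention~\ref{conv:publicPrivate}, $\widetilde X_N$ shifts the $\mathsf K$ label only on the $\mathsf P=1$ part. It is convenient to pass to a frame in which this shift is absorbed. Conjugating by $\widetilde X_N^{\,m}$ relabels things so that, on the private summand, the $\mathsf K$ value is always ``$k-\ell$'' relative to the current call. Concretely, we will show by induction on $\ell$ that after $T_{\ell-1}\cdots T_0\widetilde F_N$ applied to $\ket0_{\mathsf K}\ket\psi\oplus0$, the state equals
\[
 \frac1{\sqrt N}\sum_{k\geq\ell}\ket k_{\mathsf K}\ket\psi
 \;\oplus\;\frac1{\sqrt N}\sum_{k<\ell}\ket{k}_{\mathsf K}\ket{p_k}\;\oplus\;\ket{\ell\bmod N}_{\mathsf K}\otimes\ket{z_\ell},
\]
where $\ket{z_0}=0$ and $\ket{z_{\ell+1}}=A\ket{z_\ell}+N^{-1/2}B\ket\psi$. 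The public output of call $\ell$ is $\ket{p_\ell}=D\ket\psi+\sqrt N\,C\ket{z_\ell}$. Here the public components with $k<\ell$ are already finished.

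The induction step uses the fact that $\operatorname{ctrl}(\HAMT)$ followed by $S_\ell^\circ$ acts as $S=S^\circ(I\oplus\HAMT)$ on the subspace $\mathsf K=\ell$. On the public components with $\mathsf K\neq\ell$, the query acts trivially because $\mathsf P=0$. We must also check that the private register sits at label $\mathsf K=\ell$ when call $\ell$ begins. This is what the shift $\widetilde X_N$ guarantees. We must check as well that the query leaves the private part with label $\ell$ consistent: the query acts on all $\mathsf P=1$ components, but the only private component is the one at label $\ell$. So call $\ell$ applies $S$ to $\frac1{\sqrt N}\ket\psi\oplus\ket{z_\ell}$ at label $\ell$, and the block form \eqref{eq:transducerBlocks} gives the recursion.

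We then unroll the recursion to get $\ket{z_\ell}=N^{-1/2}\sum_{b=0}^{\ell-1}A^bB\ket\psi$. Next we substitute $B=\Gamma-A\Gamma$ from \eqref{eq:formalGraph}. The sum telescopes to
\[
 \ket{z_\ell}=N^{-1/2}(I-A^\ell)\Gamma\ket\psi.
\]
Hence
\[
 \ket{p_\ell}=D\ket\psi+C\Gamma\ket\psi-CA^\ell\Gamma\ket\psi=U_C\ket\psi-CA^\ell\Gamma\ket\psi,
\]
using $U_C=D+C\Gamma$. Finally, $\widetilde F_N^\dagger$ followed by projection onto $\ket0_{\mathsf{KPTA}}$ extracts $\frac1{\sqrt N}\sum_\ell\bra0F_N^\dagger\ket\ell\,\cdot$ of the public parts. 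This equals $\frac1N\sum_{\ell=0}^{N-1}\ket{p_\ell}$, since $\bra0F_N^\dagger\ket\ell=\overline{\bra\ell F_N\ket0}=N^{-1/2}$. The leftover private part is orthogonal ($\mathsf P=1$) and does not contribute. This gives $P_N=U_C-C\,G_N(A)\Gamma$, which is \eqref{eq:genericResidual}.

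The main obstacle is the bookkeeping of the $\mathsf K$ label on the private register. We must verify that the cyclic shift after each call moves the private register exactly to the label on which the next $S^\circ_{\ell+1}$ acts. We must also verify that the uncontrolled (in $\mathsf K$) query does not disturb finished public components, and that wraparound modulo $N$ is harmless since only $N$ calls occur. We will handle this by stating the invariant above precisely and checking each of the three factors of $T_\ell$ separately.
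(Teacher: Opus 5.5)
Your proposal is correct and follows the paper's proof essentially step for step. It uses the same induction invariant (a single private vector $\ket{z_\ell}$ at label $\mathsf K=\ell$, finished public components below $\ell$, untouched ones at and above $\ell$), the same recursion $\ket{z_{\ell+1}}=A\ket{z_\ell}+N^{-1/2}B\ket\psi$, the same closed form via $B=(I-A)\Gamma$, and the same final averaging after $\widetilde F_N^\dagger$. The differences are cosmetic: you substitute the closed form directly to get $\ket{p_\ell}=U_C\ket\psi-CA^\ell\Gamma\ket\psi$ before averaging, whereas the paper first derives $P_N=D+\sum_k\bigl(1-\frac{k+1}{N}\bigr)CA^kB$ and then telescopes; and your aside about conjugating by powers of $\widetilde X_N$ is unnecessary, since the invariant you actually prove uses the literal labels.
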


\begin{proof}
Fix an input register state \(\ket\psi_{\mathsf S}\).  The first step applies
\(\widetilde F_N\) and gives
\begin{equation}
 \ket0_{\mathsf{KPTA}}\ket\psi
 \longrightarrow
 \frac1{\sqrt N}\sum_{r=0}^{N-1}
 \ket r_{\mathsf K}\ket0_{\mathsf{PTA}}\ket\psi.
 \label{eq:clockInitialState}
\end{equation}
Write \(\ket{z_\ell}_{\mathsf{TAS}}\in\Priv\) for the
unnormalized private vector entering call \(\ell\), with
\(\ket{z_0}=0\).  We prove by induction that, immediately before call
\(\ell\), the private component is
\(\ket\ell_{\mathsf K}\ket1_{\mathsf P}\ket{z_\ell}\); the public
component with label \(r<\ell\) contains
\(D\ket\psi/\sqrt N+C\ket{z_r}\), while every component with
\(r\geq\ell\) still contains \(\ket\psi/\sqrt N\).  The claim holds
for \(\ell=0\) by \eqref{eq:clockInitialState}.  Under this induction
hypothesis, the component with \(\mathsf K=\ell\) is
\[
 \ket\ell_{\mathsf K}\left[
 \frac1{\sqrt N}\ket0_{\mathsf{PTA}}\ket\psi
 +\ket1_{\mathsf P}\ket{z_\ell}
 \right].
\]
For this value of \(\mathsf K\), Convention~\ref{conv:publicPrivate}
identifies the register state with
\(\ket\psi/\sqrt N\oplus\ket{z_\ell}\), whose two summands have flags
\(\mathsf P=0\) and \(\mathsf P=1\), respectively.  The controlled
query and \(S_\ell^\circ\) both preserve \(\mathsf K\).  On the
component with \(\mathsf K=\ell\), the query applies \(\HAMT\) to the
private summand, after which \(S_\ell^\circ\) applies \(S^\circ\).
Suppressing the unchanged register state \(\ket\ell_{\mathsf K}\),
their joint action is
\begin{equation}
 \begin{aligned}
 S^\circ\left(
 \frac{\ket\psi}{\sqrt N}\oplus\HAMT\ket{z_\ell}
 \right)
 &=S\left(
 \frac{\ket\psi}{\sqrt N}\oplus\ket{z_\ell}
 \right)\\
 &=\left(\frac{D\ket\psi}{\sqrt N}+C\ket{z_\ell}\right)
 \oplus
 \left(\frac{B\ket\psi}{\sqrt N}+A\ket{z_\ell}\right).
 \end{aligned}
 \label{eq:reuseCallAction}
\end{equation}
The first equality is \eqref{eq:transducerFactorization}, and the
second uses the block decomposition \eqref{eq:transducerBlocks}.  Set
\begin{equation}
 \ket{z_{\ell+1}}
 :=\frac{B\ket\psi}{\sqrt N}
 +A\ket{z_\ell}.
 \label{eq:clockPrivateRecurrence}
\end{equation}
The public part of \eqref{eq:reuseCallAction} retains the label
\(\mathsf K=\ell\).  The final operation in call \(\ell\),
\(\widetilde X_N\), moves the private output to the next label:
\[
 \ket\ell_{\mathsf K}\ket1_{\mathsf P}
 \ket{z_{\ell+1}}
 \longrightarrow
 \ket{\ell+1\bmod N}_{\mathsf K}\ket1_{\mathsf P}
 \ket{z_{\ell+1}}.
\]
All other public components are unchanged during this call: the
controlled query and \(\widetilde X_N\) act trivially when
\(\mathsf P=0\), and \(S_\ell^\circ\) acts as the identity when
\(\mathsf K\ne\ell\).  This proves the induction step.  After all
\(N\) calls, the full state is
\begin{equation}
 \sum_{\ell=0}^{N-1}
 \ket\ell_{\mathsf K}\ket0_{\mathsf{PTA}}
 \left(\frac{D\ket\psi}{\sqrt N}
       +C\ket{z_\ell}\right)
 +\ket0_{\mathsf K}\ket1_{\mathsf P}\ket{z_N}.
 \label{eq:clockStateAfterLoop}
\end{equation}
Iterating \eqref{eq:clockPrivateRecurrence} and substituting
\(B=(I-A)\Gamma\) from the first identity in
\eqref{eq:formalGraph} gives
\begin{equation}
 \ket{z_\ell}
 =\frac1{\sqrt N}\sum_{k=0}^{\ell-1}A^kB\ket\psi
 =\frac{(I-A^\ell)\Gamma\ket\psi}{\sqrt N}.
 \label{eq:clockPrivateClosedForm}
\end{equation}

Applying \(\widetilde F_N^\dagger\) to
\eqref{eq:clockStateAfterLoop}, projecting onto \(\mathsf K=0\) and
\(\mathsf P=0\), and substituting \eqref{eq:clockPrivateClosedForm}
gives
\[
 \begin{aligned}
 P_N\ket\psi
 &=\frac1{\sqrt N}\sum_{\ell=0}^{N-1}
 \left(\frac{D\ket\psi}{\sqrt N}
       +C\ket{z_\ell}\right)\\
 &=D\ket\psi
 +\frac1N\sum_{\ell=0}^{N-1}
 \sum_{k=0}^{\ell-1}CA^kB\ket\psi.
 \end{aligned}
\]
For a fixed \(k\), the inner term occurs for
\(\ell=k+1,\ldots,N-1\), a total of \(N-1-k\) times.  Hence
\begin{equation}
 P_N=D+\sum_{k=0}^{N-2}
 \left(1-\frac{k+1}{N}\right)CA^kB.
 \label{eq:PN}
\end{equation}
For \(N=1\), the sum is empty and \(P_1=D\).
Using the first identity in \eqref{eq:formalGraph}, a weighted
telescoping sum gives
\[
 \begin{aligned}
 \sum_{k=0}^{N-2}\left(1-\frac{k+1}{N}\right)A^kB
 &=\sum_{k=0}^{N-2}\left(1-\frac{k+1}{N}\right)
   (A^k-A^{k+1})\Gamma\\
 &=\left(I-\frac1N\sum_{k=0}^{N-1}A^k\right)\Gamma\\
 &=\bigl(I-G_N(A)\bigr)\Gamma.
 \end{aligned}
\]
Substituting this identity into \eqref{eq:PN} and using
the second identity in \eqref{eq:formalGraph} proves
\eqref{eq:genericResidual}.
\end{proof}

\section{Time ordering and factorial decay}

Lemma~\ref{lem:zeroCatalystBlock} gives the error caused by starting
with zero private input:
\[
 C\,G_N(A)\Gamma,\qquad
 G_N(A)=\frac1N\sum_{k=0}^{N-1}A^k.
\]
A direct estimate of \(\norm{G_N(A)}\) gives a bound that decreases too
slowly with \(N\) to obtain the desired dependence on precision.  We
instead combine several reuse lengths.  To choose that combination,
we first determine the structure of \(A\).

\subsection{Lower-triangular structure of the private block}

For an operator \(X\) on \(\Priv\), write
\begin{equation}
 [X]_{r,j}:=
 (\bra r_{\mathsf T}\otimes I)X
 (\ket j_{\mathsf T}\otimes I)
 \colon\Anc\otimes\Sys\longrightarrow\Anc\otimes\Sys .
\label{eq:timeBlockDefinition}
\end{equation}
Define \(\Delta\) by
\begin{align}
 [\Delta]_{j,j}&=(1-c)\Pi_{\mathsf A,0}, \label{eq:Ddiag}\\
 [\Delta]_{r,j}&=
 s^2c^{r-j-1}\Pi_{\mathsf A,0}\quad(r>j),&
 [\Delta]_{r,j}&=0\quad(r<j).
 \label{eq:Doff}
\end{align}
Here \(c\) and \(s\) are defined in \eqref{eq:cayleyUpdate}, and
\(\Pi_{\mathsf A,0}\) projects \(\mathsf A\) onto
\(\ket0_{\mathsf A}\).
Thus \(\Delta\) never sends an input with label \(j\) to an earlier
label \(r<j\); this is the time ordering used below.

\begin{lemma}[Lower-triangular form of the private block]
\label{lem:chronologicalBlock}
The private block \(A\) of \(S\) satisfies
\begin{equation}
 A=i(I-\Delta)\HAMT.
 \label{eq:Aform}
\end{equation}
\end{lemma}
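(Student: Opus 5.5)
Since \(S=S^\circ(I\oplus\HAMT)\) and \(I\oplus\HAMT\) is block diagonal with private block \(\HAMT\), the private-to-private block of \(S\) is \(A=A^\circ\HAMT\), where \(A^\circ\) denotes the private-to-private block of \(S^\circ=R_{J-1}\cdots R_0\). It therefore suffices to show \(A^\circ=i(I-\Delta)\). The plan is to compute \(A^\circ\) column by column: feed in \(0\oplus\ket j_{\mathsf T}\ket v_{\mathsf{AS}}\), propagate it through \(R_0,\ldots,R_{J-1}\) using the register rules \eqref{eq:globalUpdate}, and read off the private output on each label \(r\).

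First, the updates \(R_0,\ldots,R_{j-1}\) act as the identity on this input by the third rule of \eqref{eq:globalUpdate}, since the public component is zero and the label is \(j\ne k\). Next, \(R_j\) applies the second rule. It produces a public component \(-is\,\iota_0^\dagger\ket v\) and leaves a private component \(i\ket j[I-(1-c)\Pi_{\mathsf A,0}]\ket v\) on label \(j\). The diagonal block is then \(i(I-(1-c)\Pi_{\mathsf A,0})=i(I-[\Delta]_{j,j})\), matching \eqref{eq:Ddiag}.

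For each later \(R_r\) with \(r>j\), the private component on label \(j\) is untouched. The current public vector \(\ket\phi\) is sent to \(c\ket\phi\) publicly and deposits \(s\ket r\ket0_{\mathsf A}\ket\phi\) privately, by the first rule. Private vectors created on labels \(r\) are not touched again, because each later \(R_{r'}\) with \(r'>r\) acts trivially on label \(r\). Hence, by induction, the public vector entering \(R_r\) is \(-is\,c^{r-j-1}\iota_0^\dagger\ket v\), and the private deposit on label \(r\) is
\[
s\iota_0\bigl(-is\,c^{r-j-1}\iota_0^\dagger\ket v\bigr)=-i s^2c^{r-j-1}\Pi_{\mathsf A,0}\ket v=i\bigl(-[\Delta]_{r,j}\bigr)\ket v .
\]
This matches \eqref{eq:Doff}, since \(i(I-\Delta)\) has off-diagonal blocks \(-i[\Delta]_{r,j}\). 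Labels \(r<j\) receive nothing, so \([A^\circ]_{r,j}=0=-i[\Delta]_{r,j}\) there.

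The final public residue \(-is\,c^{J-1-j}\iota_0^\dagger\ket v\) belongs to the block \(C\), not to \(A\), so it is discarded. Assembling the columns gives \(A^\circ=i(I-\Delta)\), and hence \(A=i(I-\Delta)\HAMT\).

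The main obstacle is bookkeeping rather than any real difficulty. One must keep track that the public channel carries the \(\mathsf A=\ket0\) projection forward with a geometric factor \(c\) per step. One must also check that deposited private vectors never re-enter a later step, which follows from the third rule of \eqref{eq:globalUpdate}.
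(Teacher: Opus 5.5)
Your proposal is correct and follows essentially the same route as the paper: you reduce to \(A^\circ\) via \(A=A^\circ\HAMT\), read the diagonal block off the second rule of \eqref{eq:globalUpdate} at \(R_j\), and obtain the off-diagonal blocks \(-is^2c^{r-j-1}\Pi_{\mathsf A,0}\) from the path that leaves for the public register at \(R_j\), picks up a factor \(c\) at each intermediate update, and re-enters the private space at \(R_r\). Your column-by-column forward propagation is a slightly more explicit version of the paper's unique-path argument, including the check that deposited private components are never touched again.
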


\begin{proof}
Let \(A^\circ\) be the private block of \(S^\circ\) defined in
\eqref{eq:transducerFactorization}.  We compute its blocks in the
time-index basis using \eqref{eq:timeBlockDefinition}.  The product
\(S^\circ=R_{J-1}\cdots R_0\) applies
\(R_0,R_1,\ldots,R_{J-1}\) from first to last.  The third rule in
\eqref{eq:globalUpdate} shows that
\(R_0,\ldots,R_{j-1}\) act as the identity on the private component
whose time-register label is \(j\).  The private-to-private block of
\(R_j\) on this component is \(i[I-(1-c)\Pi_{\mathsf A,0}]\), and later
updates do not act on the label \(j\).
Consequently, the diagonal block is
\[
 [A^\circ]_{j,j}=i\bigl[I-(1-c)\Pi_{\mathsf A,0}\bigr].
\]

The update order also determines every off-diagonal block.  A private
input with label \(j\) is unchanged until \(R_j\).  To finish with a
different private label \(r\), it must enter the public space at
\(R_j\), remain there until \(R_r\), and return to the private space at
\(R_r\).  Since the updates are applied in increasing order, this is
impossible when \(r<j\).  For \(r>j\) and
\(\ket v\in\Anc\otimes\Sys\), the unique path is
\[
\begin{aligned}
 \ket1_{\mathsf P}\ket j_{\mathsf T}\ket v_{\mathsf{AS}}
 &\longmapsto
 -is\ket0_{\mathsf{PTA}}
 \bigl(\iota_0^\dagger\ket v\bigr)\\
 &\longmapsto
 -is c^{r-j-1}\ket0_{\mathsf{PTA}}
 \bigl(\iota_0^\dagger\ket v\bigr)\\
 &\longmapsto
 -is^2c^{r-j-1}
 \ket1_{\mathsf P}\ket r_{\mathsf T}
 \Pi_{\mathsf A,0}\ket v.
\end{aligned}
\]
The three arrows apply the private-to-public block at \(R_j\), the
public-to-public block \(cI_{\Sys}\) at the \(r-j-1\) intermediate
updates, and the public-to-private block at \(R_r\), respectively.
Combining these off-diagonal blocks with the diagonal block gives
\begin{equation}
 [A^\circ]_{r,j}
 =
 \begin{cases}
 0,&r<j,\\
 i\bigl[I-(1-c)\Pi_{\mathsf A,0}\bigr],&r=j,\\
 -is^2c^{r-j-1}\Pi_{\mathsf A,0},&r>j.
 \end{cases}.
\end{equation}
Comparing this formula with \eqref{eq:Ddiag}--\eqref{eq:Doff} gives
\(A^\circ=i(I-\Delta)\), and
\eqref{eq:transducerFactorization} then implies
\(A=A^\circ\HAMT=i(I-\Delta)\HAMT\).
\end{proof}

\begin{samepage}
For finitely many real coefficients satisfying
\(\sum_N\lambda_N=1\), define
\[
 \widetilde U:=\sum_N\lambda_NP_N,\qquad
 E(z):=\sum_N\lambda_NG_N(z).
\]
Since each \(V_N\) block-encodes \(P_N\), the LCU construction for
block-encoded matrices~\cite[Lemma~52]{GSLW19} produces a
block-encoding of \(\widetilde U\) with normalization
\(\sum_N|\lambda_N|\), once the coefficient state is prepared and a
unitary applies \(\operatorname{sgn}(\lambda_N)V_N\) conditioned on the
selected value of \(N\).  These operations and the subsequent amplification
are constructed in
Section~\ref{subsec:lcuCircuit}.  Summing
\eqref{eq:genericResidual} and using \(\sum_N\lambda_N=1\) gives the
first line below.  In the block decomposition
\eqref{eq:transducerBlocks}, \(C\) is a block of the unitary \(S\), so
\(\norm C\leq1\), while \eqref{eq:W} gives
\(\norm\Gamma\leq\sqrt{\alpha T}\); these inequalities give the
second line:
\begin{equation}
 \begin{aligned}
 U_C-\widetilde U&=C E(A)\Gamma,\\
 \big\|U_C-\widetilde U\big\|
 &\leq\norm C\,\norm{E(A)}\,\norm\Gamma
 \leq\sqrt{\alpha T}\,\norm{E(A)}.
 \end{aligned}
 \label{eq:weightedResidual}
\end{equation}
It remains to find a polynomial \(E\), expressible as such a weighted
sum, for which \(\norm{E(A)}\) is small.
\end{samepage}

Equation~\eqref{eq:Aform} suggests a polynomial for which time
ordering gives a useful bound.  Using \(\HAMT^2=I\),
\[
 A^2=-I+\Delta+\HAMT\Delta\HAMT
       -\Delta\HAMT\Delta\HAMT.
\]
The first term is the only one containing no \(\Delta\), and adding
\(I\) cancels it.  Normalizing at \(z=1\), as required by
\(\sum_N\lambda_N=1\), define
\begin{equation}
 Q(z):=\frac{1+z^2}{2}.
 \label{eq:Qpoly}
\end{equation}
Substituting \(A\) into \eqref{eq:Qpoly} gives
\begin{equation}
 Q(A)
 =\frac12\bigl(
 \Delta+\HAMT\Delta\HAMT-\Delta\HAMT\Delta\HAMT
 \bigr).
 \label{eq:Q}
\end{equation}
Every term on the right contains at least one \(\Delta\).  Hence every
term in the expansion of \(Q(A)^q\) contains at least \(q\) such
factors.  Since \(\Delta\) never decreases the time label, the possible
sequences of intermediate time labels are ordered.

\subsection{Factorial bound}

We now sum over these ordered sequences to bound \(\norm{Q(A)^q}\)
uniformly in the number of Cayley steps.

\begin{proposition}[Factorial decay]
\label{prop:factorialDecay}
If \(q\geq\max\{1,\alpha T\}\) is an integer and
\(w\leq\alpha T/q\), then
\begin{equation}
 \norm{Q(A)^q}\leq\left(\frac{12e\alpha T}{q}\right)^q.
 \label{eq:factorial}
\end{equation}
\end{proposition}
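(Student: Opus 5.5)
The plan is to turn the time-ordering observation after \eqref{eq:Q} into a scalar majorant computation: dominate every time block of \(Q(A)^q\) by a convolution kernel, then bound that kernel with a generating function evaluated slightly inside the unit disc.

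\textbf{Step 1: block majorants.} Call an operator \(X\) on \(\Priv\) \emph{dominated} by a scalar sequence \((a_k)_{k\ge0}\) if \([X]_{r,j}=0\) for \(r<j\) and \(\norm{[X]_{r,j}}\le a_{r-j}\) for \(r\ge j\), with the blocks as in \eqref{eq:timeBlockDefinition}. Two facts make this closed under the operations we need.
\begin{itemize}
\item Since \(\HAMT\) is block diagonal with unitary blocks \(O_j\), we have \([\HAMT X\HAMT]_{r,j}=O_r[X]_{r,j}O_j\) and \([X\HAMT]_{r,j}=[X]_{r,j}O_j\). Multiplying by \(\HAMT\) on either side therefore preserves domination by the same sequence.
\item If \(X\) is dominated by \((a_k)\) and \(Y\) by \((b_k)\), then \(XY\) is dominated by the convolution \(a*b\). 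This holds because the triangular structure restricts the intermediate labels to \(j\le m\le r\).
\end{itemize}
By \eqref{eq:Ddiag}--\eqref{eq:Doff}, \(\Delta\) is dominated by the sequence with generating function
\[
D(x)=(1-c)+\frac{s^2x}{1-cx},
\]
where \(1-c\le w\) and \(s^2\le 2w\). Hence, by \eqref{eq:Q}, \(Q(A)\) is dominated by the coefficients of \(\tfrac12(2D+D^2)=D+D^2/2\). Consequently \(Q(A)^q\) is dominated by the coefficients \(a_k\) of
\[
\bigl(D(x)(1+D(x)/2)\bigr)^q .
\]
All these coefficients are nonnegative. This is exactly where the cancellation of the \(\Delta\)-free term enters: every factor contributes at least one power of \(D\), which is \(O(w)\) pointwise.

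\textbf{Step 2: from block majorant to operator norm.} I would apply the block Schur test. The norm of an operator whose blocks are bounded by a Toeplitz kernel \(a_{r-j}\) on labels \(0,\dots,J-1\) is at most the largest row or column sum, namely \(\sum_{k=0}^{J-1}a_k\). Because the \(a_k\) are nonnegative, for any \(0<x\le1\),
\[
\sum_{k<J}a_k\le x^{-(J-1)}\sum_k a_kx^k\le x^{-J}\bigl(D(x)(1+D(x)/2)\bigr)^q .
\]
It is the truncation to \(J\) labels that lets us evaluate at \(x<1\). At \(x=1\) there is no decay, since \(D(1)=2\).

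\textbf{Step 3: choice of \(x\) (the main obstacle).} The hypothesis \(w\le\alpha T/q\) is equivalent to \(J\ge q\). I would take \(x=1-q/(2J)\), so that \(1-x\le\tfrac12\).
\begin{itemize}
\item The prefactor satisfies \(x^{-J}\le e^{2J(1-x)}=e^{q}\).
\item Since \(1-cx\ge1-x=q/(2J)\), we get \(D(x)\le w+4wJ/q=w+4\alpha T/q\le5\alpha T/q\).
\item Using \(q\ge\alpha T\), this gives \(D(x)\le5\), so \(1+D/2\) is bounded by an absolute constant.
\end{itemize}
Together these yield \(\norm{Q(A)^q}\le(K e\alpha T/q)^q\) for an absolute constant \(K\). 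The crude estimates above give \(K\approx17.5\). The delicate part is tightening the constants to the claimed \(12\). For that I would first try two refinements: tune \(x\) (equivalently, optimize \(e^{J(1-x)}\) against \(1/(1-x)\)), and use the sharper bounds \(1-c\le w/(1+w/2)\) and \(s^2\le2w/(1+w/2)^2\) together with \(1-cx\ge(1-c)+c(1-x)\). I would also double-check that no combinatorial factor is lost in the expansion. The majorant \(D+D^2/2\) already absorbs the \(1/2\) and the three-term count, so working with generating functions avoids having to count the \(3^q\) words separately.
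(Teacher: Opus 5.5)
Your majorant framework is sound, but Step~3 stops short of the stated constant. The framework itself checks out: domination by a nonnegative sequence survives sums, survives products (as convolution, thanks to the triangular structure), and survives multiplication by \(\HAMT\) on either side. Also, \(\Delta\) is dominated by the coefficients of \(D(x)=(1-c)+s^2x/(1-cx)\). Together with the block Schur test, this gives \(\norm{Q(A)^q}\le x^{-(J-1)}\bigl(D(x)(1+D(x)/2)\bigr)^q\) for every \(0<x\le1\).

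\textbf{The gap.} With \(x=1-q/(2J)\) and your estimates, you only get \(\norm{Q(A)^q}\le(17.5\,e\alpha T/q)^q\). The proposition with constant \(12\) is therefore not proved; the ``refinements'' remain a plan.

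\textbf{The fix.} Choose \(x=J/(J+q)\). Then \(x^{-(J-1)}\le(1+q/J)^J\le e^q\). Since \(1-cx\ge1-x\),
\[
 \frac{s^2x}{1-cx}\le\frac{s^2x}{1-x}=\frac{s^2J}{q}\le\frac{2wJ}{q}=\frac{2\alpha T}{q},
\]
so \(D(x)\le w+2\alpha T/q\le 3\alpha T/q\le 3\). Hence \(D(x)(1+D(x)/2)\le\tfrac{15}{2}\,\alpha T/q\), and \(\norm{Q(A)^q}\le(7.5\,e\alpha T/q)^q\). This implies the stated bound.

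\textbf{Comparison with the paper.} With this fix, your argument is a genuinely different route. The paper proceeds as follows.
\begin{enumerate}
\item It bounds every block of \(Q(A)\) uniformly by \(6w\). The decay of \([\Delta]_{r,j}\) is used only to show that the convolution term \(\Delta\HAMT\Delta\HAMT\) has blocks of norm at most \(8w\).
\item It counts nondecreasing label sequences by \(\binom{r-j+q-1}{q-1}\).
\item It applies Schur's test with the hockey-stick identity to obtain \((6w)^q\binom{J+q-1}{q}\), then uses \(\binom{n}{q}\le(en/q)^q\).
\end{enumerate}
Your proof replaces this explicit path counting with majorant algebra and a Chernoff-type evaluation of the generating function at \(x<1\). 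It is essentially the discrete counterpart of the Laplace-transform argument the paper uses for the continuous register (Proposition~\ref{prop:continuousFactorial}), with \(x\) playing the role of \(e^{-\theta\delta}\). The paper's proof is more elementary and explicit. Yours treats the three terms of \eqref{eq:Q} uniformly, without ad hoc handling of the quadratic term, and gives a slightly better constant.
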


\begin{proof}
The assumptions imply \(0<w\leq1\), and hence
\[
 1-c=\frac{w}{1+w/2}\leq w,\qquad
 s\leq\sqrt{2w},\qquad
 c=\frac{1-w/2}{1+w/2}\leq e^{-w}.
\]
From \eqref{eq:Ddiag}--\eqref{eq:Doff},
\[
 \norm{[\Delta]_{j,j}}\leq w,\qquad
 \norm{[\Delta]_{r,j}}
 \leq2w e^{-(r-j-1)w}
 \quad(r>j).
\]
Equations \eqref{eq:timeBlockDefinition} and \eqref{eq:fineOracle} give
\begin{equation}
 [\HAMT]_{r,j}=\delta_{r,j}O_j.
 \label{eq:HAMTTimeBlocks}
\end{equation}
Expanding the block product with \eqref{eq:HAMTTimeBlocks}, and using
the lower-triangular structure in
\eqref{eq:Ddiag}--\eqref{eq:Doff}, gives
\[
 [\Delta\HAMT\Delta\HAMT]_{r,j}
 =\sum_{k=j}^{r}
 [\Delta]_{r,k}O_k[\Delta]_{k,j}O_j.
\]
Each \(O_k\) is unitary by Definition~\ref{def:HAMTOracle}, so
\[
 \norm{[\Delta\HAMT\Delta\HAMT]_{r,j}}
 \leq\sum_{k=j}^{r}
 \norm{[\Delta]_{r,k}}\norm{[\Delta]_{k,j}}.
\]
For \(r>j\), the endpoint terms \(k=j,r\) contribute at most \(4w\).
Writing \(d=r-j-1\), the remaining terms satisfy
\[
 \sum_{j<k<r}
 \norm{[\Delta]_{r,k}}\norm{[\Delta]_{k,j}}
 \leq4w^2d e^{-(d-1)w}\leq4w.
\]
Here \(dw e^{-(d-1)w}\leq1\) for \(0<w\leq1\).  For \(r=j\), the same
bound follows directly from \eqref{eq:Ddiag}, so
\[
 \sum_{k=j}^{r}
 \norm{[\Delta]_{r,k}}\norm{[\Delta]_{k,j}}
 \leq8w.
\]
In \eqref{eq:Q}, each of the first two terms is bounded by \(2w\), and
the third is bounded by the convolution above.  Thus
\begin{equation}
 \norm{[Q(A)]_{r,j}}
 \leq\frac12(2w+2w+8w)=6w\quad(r\geq j),\qquad
 [Q(A)]_{r,j}=0\quad(r<j).
 \label{eq:Qdom}
\end{equation}
The lower-triangular bound in \eqref{eq:Qdom} gives
\[
 [Q(A)^q]_{r,j}
 =\sum_{j\leq i_1\leq\cdots\leq i_{q-1}\leq r}
 [Q(A)]_{r,i_{q-1}}\cdots[Q(A)]_{i_1,j}.
\]
Applying \eqref{eq:Qdom} to each factor yields
\[
 \norm{[Q(A)^q]_{r,j}}
 \leq(6w)^q\binom{r-j+q-1}{q-1}.
\]
Schur's test applied to the block matrix gives
\[
 \norm{X}\leq
 \left[
 \left(\max_r\sum_j\norm{[X]_{r,j}}\right)
 \left(\max_j\sum_r\norm{[X]_{r,j}}\right)
 \right]^{1/2}.
\]
The identity
\[
 \sum_{r=0}^{R}\binom{r+q-1}{q-1}=\binom{R+q}{q},
\]
shows that every row and column sum is at most
\((6w)^q\binom{J+q-1}{q}\).  Therefore
\begin{equation}
 \norm{Q(A)^q}\leq
 (6w)^q\binom{J+q-1}{q}.
 \label{eq:Schur}
\end{equation}
Since \(Jw=\alpha T\) and
\(\binom{J+q-1}{q}\leq(e(J+q-1)/q)^q\), the right-hand side of
\eqref{eq:Schur} is at most
\[
 \left(\frac{6e(\alpha T+qw)}{q}\right)^q
 \leq\left(\frac{12e\alpha T}{q}\right)^q,
\]
where the last inequality uses \(w\leq\alpha T/q\).
\end{proof}

The right-hand side of \eqref{eq:factorial} is independent of \(J\).
Since \(w=\alpha T/J\), the condition \(w\leq\alpha T/q\) is equivalent
to \(J\geq q\).  The estimate therefore holds uniformly for every
admissible Cayley step count \(J\geq q\).

\section{Combining the reuse circuits}
\label{sec:combining}

The preceding section bounds the factor \(Q(A)^q\) used below.  We now
choose the coefficients of the reuse circuits, construct the resulting
block-encoding, and select the parameters used in Theorem~\ref{thm:main}.

\subsection{A weighted combination of reuse lengths}

Equation~\eqref{eq:weightedResidual} shows that we need a polynomial
\(E_q\) for which \(\norm{E_q(A)}\) is small and
\(\sum_N|\lambda_N|\), the LCU normalization of
\(E_q=\sum_N\lambda_NG_N\), remains bounded.  Guided by
Proposition~\ref{prop:factorialDecay}, we choose \(E_q\) to contain the
factor \(Q^q\).  Fix an integer
\(q\geq\max\{1,\alpha T\}\) and suppose \(w\leq\alpha T/q\), where
\(w=\alpha T/J\) as in Section~\ref{subsec:cayley}.

We first record the change of basis.  Let
\(G_N(z)=N^{-1}\sum_{k=0}^{N-1}z^k\) and
\(E(z)=\sum_{k=0}^{L-1}e_kz^k\), and set \(e_L=0\).  Then
\begin{equation}
 E(z)=\sum_{N=1}^L\lambda_NG_N(z),
 \qquad
 \lambda_N=N(e_{N-1}-e_N).
 \label{eq:clockExpansion}
\end{equation}
Indeed, the coefficient of \(z^k\) on the right is
\[
 \sum_{N=k+1}^{L}\frac{\lambda_N}{N}
 =\sum_{N=k+1}^{L}(e_{N-1}-e_N)
 =e_k-e_L=e_k.
\]
Evaluating \eqref{eq:clockExpansion} at \(z=1\) and using
\(G_N(1)=1\) also gives \(\sum_N\lambda_N=E(1)\).  The corresponding
LCU normalization is
\[
 \sum_N|\lambda_N|=\sum_N N|e_{N-1}-e_N|,
\]
which is the weighted variation between neighboring monomial
coefficients of \(E\).

The monomial coefficients of \(Q(z)^q\) alternate between zero and a
binomial coefficient.  Multiplication by \(G_{2q}\) averages
\(2q\) consecutive coefficients, reducing the differences that enter
the normalization above.  We therefore define

\begin{equation}
 E_q(z):=Q(z)^qG_{2q}(z)
 =\left(\frac{1+z^2}{2}\right)^q
 \left(\frac1{2q}\sum_{\ell=0}^{2q-1}z^\ell\right)
 =\sum_{k=0}^{4q-1}e_kz^k.
 \label{eq:Eq}
\end{equation}
The numbers
\[
 b_j:=2^{-q}\binom{q}{j},\qquad 0\leq j\leq q,
\]
are the monomial coefficients in
\(Q(z)^q=\sum_{j=0}^qb_jz^{2j}\); set \(b_j=0\) outside this range.
Multiplying the two factors in \eqref{eq:Eq} gives
\[
 E_q(z)=\frac1{2q}
 \sum_{j=0}^{q}\sum_{\ell=0}^{2q-1}
 b_jz^{2j+\ell}.
\]
Consequently, for \(0\leq j\leq2q-1\),
\begin{equation}
 e_{2j}=e_{2j+1}
 =\frac1{2q}\sum_{\ell=0}^{q-1}b_{j-\ell}.
 \label{eq:pairedCoefficients}
\end{equation}
Set \(e_{4q}=0\) and apply
\eqref{eq:clockExpansion} with \(L=4q\) to define
\(\lambda_1,\ldots,\lambda_{4q}\).  We also set
\[
 \Lambda:=\sum_{N=1}^{4q}|\lambda_N|,\qquad
 \widetilde U:=\sum_{N=1}^{4q}\lambda_NP_N.
\]
We now show that this choice keeps \(\Lambda<2\) while preserving the
factorial error bound.

\begin{proposition}[Combination of reuse lengths]
\label{prop:clockFilter}
The coefficients satisfy
\(\sum_{N=1}^{4q}\lambda_N=1\) and
\(\Lambda=2-2^{1-q}<2\).
The approximation error obeys
\begin{equation}
 \big\|U_C-\widetilde U\big\|\leq\sqrt{\alpha T}
 \left(\frac{12e\alpha T}{q}\right)^q.
 \label{eq:filterError}
\end{equation}
\end{proposition}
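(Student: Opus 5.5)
The plan is to compute the coefficients \(\lambda_N\) explicitly from the change-of-basis formula \eqref{eq:clockExpansion}. This gives both \(\sum_N\lambda_N\) and \(\Lambda\). The error bound then follows by combining \eqref{eq:weightedResidual} with Proposition~\ref{prop:factorialDecay}.

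\textbf{Sum of the coefficients.} By \eqref{eq:clockExpansion}, \(\sum_N\lambda_N=E_q(1)=Q(1)^qG_{2q}(1)=1\). No computation is needed here.

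\textbf{Odd-indexed coefficients vanish.} Since \(e_{2j}=e_{2j+1}\) by \eqref{eq:pairedCoefficients}, every \(\lambda_{2j+1}=(2j+1)(e_{2j}-e_{2j+1})\) is zero.

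\textbf{Even-indexed coefficients.} Only \(\lambda_{2j+2}=(2j+2)(e_{2j+1}-e_{2j+2})=(2j+2)(e_{2j}-e_{2j+2})\) survives, for \(0\le j\le 2q-1\). Subtracting the two sliding sums in \eqref{eq:pairedCoefficients} telescopes to
\[
 e_{2j}-e_{2j+2}=\frac{1}{2q}\bigl(b_{j-q+1}-b_{j+1}\bigr),
 \qquad\text{so}\qquad
 \lambda_{2j+2}=\frac{j+1}{q}\bigl(b_{j-q+1}-b_{j+1}\bigr).
\]
I then split into ranges of \(j\), using \(b_m=0\) outside \(0\le m\le q\).
\begin{itemize}
\item For \(j\le q-2\), only \(b_{j+1}\) survives, so \(\lambda_{2j+2}=-\tfrac{m}{q}b_m\) with \(m=j+1\in\{1,\dots,q-1\}\).
\item For \(j=q-1\), \(\lambda_{2q}=b_0-b_q=0\).
\item For \(j\ge q\), only \(b_{j-q+1}\) survives, so \(\lambda_{2j+2}=\bigl(1+\tfrac{m}{q}\bigr)b_m\) with \(m=j-q+1\in\{1,\dots,q\}\).
\end{itemize}
Hence the coefficients have clean signs: they are negative on the first block and positive on the second.

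\textbf{Evaluating \(\Lambda\).} I use the binomial identities \(\sum_{m=0}^q b_m=1\) and \(\sum_m m b_m=q/2\). The negative part is
\[
 \sum_{m=1}^{q-1}\frac{m}{q}b_m=\frac12-2^{-q}.
\]
The positive part is
\[
 \sum_{m=1}^q b_m+\frac1q\sum_{m=1}^q m b_m=(1-2^{-q})+\frac12.
\]
Their difference is \(1\), which re-confirms \(\sum_N\lambda_N=1\). Their sum is \(\Lambda=2-2^{1-q}<2\).

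\textbf{Error bound.} Since \(\sum_N\lambda_N=1\), \eqref{eq:weightedResidual} applies with \(E=E_q\) and gives \(\|U_C-\widetilde U\|\le\sqrt{\alpha T}\,\|E_q(A)\|\). The factors commute, so \(E_q(A)=Q(A)^qG_{2q}(A)\). Because \(A\) is a contraction, \(\|G_{2q}(A)\|\le\frac{1}{2q}\sum_\ell\|A\|^\ell\le1\). The standing hypotheses \(q\ge\max\{1,\alpha T\}\) and \(w\le\alpha T/q\) are exactly those of Proposition~\ref{prop:factorialDecay}. That proposition bounds \(\|Q(A)^q\|\) by \((12e\alpha T/q)^q\), which yields \eqref{eq:filterError}.

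The only delicate step is the index bookkeeping in the telescoped difference of the sliding sums, in particular the boundary terms at \(j=q-1\) and the sign pattern. I would verify these by checking small cases such as \(q=1,2\).
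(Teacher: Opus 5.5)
Your proposal is correct and follows the paper's proof essentially step for step. You use the same telescoped formula $\lambda_{2j+2}=\frac{j+1}{q}(b_{j-q+1}-b_{j+1})$, the same sign split, and the same error argument via \eqref{eq:weightedResidual}, $\norm{G_{2q}(A)}\leq1$, and Proposition~\ref{prop:factorialDecay}; the only cosmetic difference is that you sum the positive and negative coefficients separately, whereas the paper writes $\Lambda=\sum_N\lambda_N-2\sum_{\lambda_N<0}\lambda_N$ and so needs only the negative block.
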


\begin{proof}
Evaluating \eqref{eq:clockExpansion} at \(z=1\) gives
\(\sum_N\lambda_N=E_q(1)=Q(1)^qG_{2q}(1)=1\).
Substituting \(E_q\) into \eqref{eq:weightedResidual} gives
\begin{equation}
 U_C-\widetilde U=CE_q(A)\Gamma
 =CQ(A)^qG_{2q}(A)\Gamma.
 \label{eq:Pfilt}
\end{equation}
Since \(A\) and \(C\) are blocks of the unitary \(S\) in
\eqref{eq:transducerBlocks}, both are contractions.  Hence
\[
 \norm{G_{2q}(A)}
 \leq\frac1{2q}\sum_{\ell=0}^{2q-1}\norm A^\ell
 \leq1.
\]
Together with \(\norm C\leq1\), \eqref{eq:W}, and
\eqref{eq:factorial}, equation~\eqref{eq:Pfilt} gives
\[
 \big\|U_C-\widetilde U\big\|
 \leq\norm C\,\norm{Q(A)^q}\,\norm{G_{2q}(A)}\,\norm\Gamma
 \leq\sqrt{\alpha T}
 \left(\frac{12e\alpha T}{q}\right)^q,
\]
which is \eqref{eq:filterError}.

To compute \(\Lambda\), substitute
\eqref{eq:pairedCoefficients} into the coefficient formula in
\eqref{eq:clockExpansion}.  For \(0\leq j\leq2q-1\),
\begin{align*}
 \lambda_{2j+1}
 &=(2j+1)(e_{2j}-e_{2j+1})=0,\\
 \lambda_{2j+2}
 &=(2j+2)(e_{2j+1}-e_{2j+2})\\
 &=\frac{j+1}{q}
   \left(
    \sum_{\ell=0}^{q-1}b_{j-\ell}
    -\sum_{\ell=0}^{q-1}b_{j+1-\ell}
   \right)\\
 &=\frac{j+1}{q}\bigl(b_{j-q+1}-b_{j+1}\bigr).
\end{align*}
Because \(b_k=0\) outside \(0\leq k\leq q\) and
\(b_0=b_q=2^{-q}\), the formula for \(\lambda_{2j+2}\) becomes
\[
 \lambda_{2j+2}
 =
 \begin{cases}
  -\dfrac{j+1}{q}b_{j+1},&0\leq j\leq q-2,\\[4pt]
  0,&j=q-1,\\[4pt]
  \dfrac{j+1}{q}b_{j-q+1},&q\leq j\leq2q-1.
 \end{cases}
\]
The piecewise formula shows that the only negative coefficients are
\(\lambda_{2j+2}\) for \(0\leq j\leq q-2\).  Since all odd
coefficients vanish and \(\sum_N\lambda_N=1\),
\begin{align*}
 \Lambda=\sum_N\lambda_N
   -2\sum_{j=0}^{q-2}\lambda_{2j+2}=1+\frac2q\sum_{k=1}^{q-1}kb_k.
\end{align*}
Using \(k\binom qk=q\binom{q-1}{k-1}\), we obtain
\[
 \sum_{k=0}^qkb_k
 =2^{-q}\sum_{k=1}^q k\binom qk
 =\frac{q}{2^q}\sum_{k=1}^q\binom{q-1}{k-1}
 =\frac q2.
\]
Using this identity and \(b_q=2^{-q}\), we obtain
\[
 \Lambda
 =1+\frac2q\left(\sum_{k=0}^qkb_k-qb_q\right)
 =1+\frac2q\left(\frac q2-q2^{-q}\right)
 =2-2^{1-q}.
\]
\end{proof}

\subsection{Circuit implementation and amplification}
\label{subsec:lcuCircuit}

The preceding subsection gives coefficients \(\lambda_N\) with
\(\Lambda<2\).  We now construct a block-encoding of
\(\widetilde U=\sum_N\lambda_NP_N\) and amplify it.
Equation~\eqref{eq:VNpublicOutput} identifies each reuse unitary
\(V_N\) as a block-encoding of \(P_N\).  By the linear-combination
lemma for block-encoded matrices~\cite[Lemma~52]{GSLW19}, it
suffices to prepare the coefficients and construct a unitary that
applies \(\operatorname{sgn}(\lambda_N)V_N\) conditioned on the
selected value of \(N\).  We pad the coefficient sum to
\(2\) with a block-encoding of the zero operator.  This gives a
block-encoding of \(\widetilde U=\sum_N\lambda_NP_N\) with
normalization \(2\).  The construction below implements this
conditional unitary with \(4q\) oracle queries.

To represent every \(V_N\), for \(N\leq4q\), on the same registers, use
a \(\lceil\log_2(4q)\rceil\)-qubit reuse register \(\mathsf K\) and
extend its reuse-register operations by the identity on the labels
\(N,\ldots,4q-1\).  The span of
\(\ket0_{\mathsf K},\ldots,\ket{N-1}_{\mathsf K}\) is invariant, and
the restriction to this span is the original \(V_N\).  For a selected
value of \(N\), write \(T_\ell^{(N)}\) for the corresponding extension
of \(T_\ell\) in \eqref{eq:VN}; only its final shift
\(\widetilde X_N\) depends on \(N\).

Let the selection register \(\mathsf B\) contain the states
\(\ket0_{\mathsf B}\), \(\ket{N}_{\mathsf B}\) for
\(\lambda_N\neq0\), and an extra state \(\ket{\perp}_{\mathsf B}\).
Since \(\Lambda\leq2\), choose a unitary
\(\operatorname{PREP}\) satisfying
\[
 \operatorname{PREP}\ket0_{\mathsf B}
 =
 \sum_{\lambda_N\neq0}
 \sqrt{\frac{|\lambda_N|}{2}}\ket{N}_{\mathsf B}
 +\sqrt{\frac{2-\Lambda}{2}}\ket{\perp}_{\mathsf B}.
\]

Define the corresponding SELECT unitary by the following action.  For
every state
\(\ket\phi\) of
\(\mathsf{KPTAS}\),
\begin{equation}
 \operatorname{SELECT}
 \bigl(\ket N_{\mathsf B}\ket\phi\bigr)
 =\operatorname{sgn}(\lambda_N)
 \ket N_{\mathsf B}V_N\ket\phi.
 \label{eq:SELECTaction}
\end{equation}
Algorithm~\ref{alg:SELECT} implements \eqref{eq:SELECTaction} using the
reuse operations from \eqref{eq:VN}.  Comparisons with \(N\) return
false when \(\mathsf B\) is in \(\ket\perp_{\mathsf B}\) or an unused
basis state.

\begin{algorithm}[H]
\caption{Implementation of \(\operatorname{SELECT}\)}
\label{alg:SELECT}
\begin{algorithmic}[1]
\State For each \(N\) with \(\lambda_N\neq0\), controlled on
       \(\mathsf B=N\), apply the phase \(\operatorname{sgn}(\lambda_N)\)
       and then \(\widetilde F_N\).
\For{\(\ell=0,\ldots,4q-1\)}
  \State From \(\mathsf B\), set a work qubit to \(\ket1\) if
         \(N>\ell\), and to \(\ket0\) otherwise.
  \State Controlled on the work qubit, apply
         \(I_{\mathsf K}\otimes\operatorname{ctrl}(\HAMT)\), then
         \(S_\ell^\circ\), and then the shift \(\widetilde X_N\)
         selected by the value \(N\) in \(\mathsf B\).
  \State Uncompute the work qubit.
\EndFor
\State For each \(N\) with \(\lambda_N\neq0\), controlled on
       \(\mathsf B=N\), apply \(\widetilde F_N^\dagger\).
\State On \(\mathsf B=\perp\), flip \(\mathsf P\).
\end{algorithmic}
\end{algorithm}

For a selected \(N\), the loop in Algorithm~\ref{alg:SELECT} executes
\(T_\ell^{(N)}\) exactly for
\(\ell=0,\ldots,N-1\).  Apart from the controlled sign, this component
therefore undergoes
\(\widetilde F_N^\dagger T_{N-1}^{(N)}\cdots
T_0^{(N)}\widetilde F_N=V_N\) by
\eqref{eq:VN}.  Each of the \(4q\) layers contains one \(\HAMT\) query
controlled on \(N>\ell\) and \(\mathsf P=1\), so SELECT uses \(4q\)
oracle calls.

The last step maps the \(\ket\perp_{\mathsf B}\) component to
\(\mathsf P=1\), while SELECT is the identity on unused basis states of
\(\mathsf B\).  Reversing these steps gives
\(\operatorname{SELECT}^\dagger\), again with \(4q\) calls because
\(\HAMT^\dagger=\HAMT\).

Let \(\mathsf W\) collect all auxiliary registers, including
\(\mathsf B,\mathsf K,\mathsf P,\mathsf T,\mathsf A\), and
the temporary comparison controls, and
let \(\ket0_{\mathsf W}\) denote their joint all-zero state.  Define
\[
 V=(\operatorname{PREP}^\dagger\otimes I)
 \operatorname{SELECT}
 (\operatorname{PREP}\otimes I).
\]
For any operator \(Y\) on the work and system registers, write
\[
 [Y]_0
 :=(\bra0_{\mathsf W}\otimes I)Y
   (\ket0_{\mathsf W}\otimes I).
\]
The \(\ket\perp_{\mathsf B}\) term has \(\mathsf P=1\) after SELECT
and therefore does not contribute to \([V]_0\).  For every other
label \(N\), \(\operatorname{PREP}\) and
\(\operatorname{PREP}^\dagger\) each contribute a factor
\(\sqrt{|\lambda_N|/2}\).  Together with the controlled sign, this
gives the contribution \(\lambda_NP_N/2\) to \([V]_0\).
Equations~\eqref{eq:VNpublicOutput} and \eqref{eq:SELECTaction}
therefore give
\begin{equation}
 \begin{aligned}
 {}[V]_0
 &=\sum_{\lambda_N\neq0}
   \frac{|\lambda_N|}{2}\operatorname{sgn}(\lambda_N)P_N\\
 &=\frac12\sum_{\lambda_N\neq0}\lambda_NP_N
 =\frac{\widetilde U}{2}.
\end{aligned}
\label{eq:half}
\end{equation}
Thus, \(V\) is a
block-encoding of \(\widetilde U\) with normalization \(2\).

Let
\[
 \Pi_{\mathsf W,0}:=\ketbra{0}{0}_{\mathsf W}\otimes I,
 \qquad
 \mathcal R_{\mathsf W}:=2\Pi_{\mathsf W,0}-I .
\]
The unitary \(V\) may also produce components orthogonal to
\(\ket0_{\mathsf W}\).  The degree-three case of robust oblivious
amplitude amplification corrects the normalization and suppresses
these components~\cite{BerryEtAl15,GSLW19}.  In our reflection
convention, the degree-three amplification sequence of Berry
et al.~\cite[Eq.~(12)]{BerryEtAl15} is
\[
 U_{\mathrm{sim}}
 :=-V\mathcal R_{\mathsf W}V^\dagger
       \mathcal R_{\mathsf W}V.
\]
Their reflection \(I-2\Pi_{\mathsf W,0}\) equals
\(-\mathcal R_{\mathsf W}\).  The two minus signs cancel because the
sequence contains two reflections.  Taking
\(P=\Pi_{\mathsf W,0}\), \(W=V\), and \(s=2\) in
\cite[Eqs.~(13)--(14)]{BerryEtAl15}, and using \eqref{eq:half}, gives
\begin{equation}
 [U_{\mathrm{sim}}]_0
 =3[V]_0-4[V]_0[V]_0^\dagger[V]_0
 =\frac32\widetilde U
 -\frac12\widetilde U\widetilde U^\dagger\widetilde U.
\label{eq:OAA}
\end{equation}
The reflections act only on \(\mathsf W\), so the same circuit works
for every system input \(\ket\psi_{\mathsf S}\).
The following lemma converts the block identity \eqref{eq:OAA} into the
corresponding bound on the full output state.

\begin{lemma}[Robust one-step amplification]
\label{lem:robustOAA}
Let \(U\) be unitary and set
\(\eta=\norm{\widetilde U-U}\).  If \(\eta\leq1/8\), then, for every unit state
\(\ket\psi_{\mathsf S}\),
\begin{equation}
 \norm{
 U_{\mathrm{sim}}\bigl(\ket0_{\mathsf W}\ket\psi\bigr)
 -\ket0_{\mathsf W}U\ket\psi}
 \leq3\eta.
 \label{eq:stateerror}
\end{equation}
\end{lemma}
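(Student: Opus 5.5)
The plan is to split the output of \(U_{\mathrm{sim}}\) into its \(\ket0_{\mathsf W}\) component and its orthogonal remainder, and to control the latter using unitarity of \(U_{\mathrm{sim}}\).

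\textbf{Good component.} Write \(\Delta_U:=\widetilde U-U\), so that \(\norm{\Delta_U}=\eta\). By \eqref{eq:OAA}, the good part of the output is \(\ket0_{\mathsf W}M\ket\psi\), where
\[
 M:=\tfrac32\widetilde U-\tfrac12\widetilde U\widetilde U^\dagger\widetilde U .
\]
Expanding \(M\) around \(U\) and using \(UU^\dagger U=U\), the first-order terms are
\[
 \tfrac32\Delta_U-\tfrac12\bigl(\Delta_U+UU^\dagger\Delta_U+U\Delta_U^\dagger U\bigr)
 =\tfrac12\Delta_U-\tfrac12U\Delta_U^\dagger U .
\]
Their norm is at most \(\eta\). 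The second- and third-order terms are bounded by \(\tfrac12(3\eta^2+\eta^3)\). Hence
\[
 \norm{M-U}\leq\eta+\tfrac32\eta^2+\tfrac12\eta^3=:\eta' .
\]

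\textbf{Bad component.} Because \(U_{\mathrm{sim}}\) is unitary and \(\ket\psi\) is a unit vector, the part of the output orthogonal to \(\ket0_{\mathsf W}\) has squared norm
\[
 1-\norm{M\ket\psi}^2\leq1-(1-\eta')^2\leq2\eta' .
\]
This bound alone gives only an \(O(\sqrt\eta)\) estimate, which is the main obstacle. To obtain linear scaling I will use the structure of \(M\).

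\textbf{Sharpening the norm of the good part.} Take the singular value decomposition of \(\widetilde U\). Each singular value \(\sigma\) of \(\widetilde U\) satisfies \(|\sigma-1|\leq\eta\), because \(U\) is unitary and therefore \(\bigl|\sigma_i(\widetilde U)-1\bigr|\leq\norm{\widetilde U-U}\). The operator \(M\) has the same singular vectors as \(\widetilde U\), with singular values \(p(\sigma)=\tfrac32\sigma-\tfrac12\sigma^3\). Since \(p'(1)=0\) and \(p(1)=1\),
\[
 1-p(\sigma)=\tfrac12(1-\sigma)^2(2+\sigma)=O(\eta^2),
 \qquad\text{and}\qquad p(\sigma)\leq1 \text{ for all }\sigma\geq0 .
\]
Consequently \(\norm{M\ket\psi}^2\geq1-O(\eta^2)\), so the bad component has norm \(O(\eta)\).

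\textbf{Conclusion.} Adding the two contributions gives a total error of \(\eta'+O(\eta)\). The remaining work is to track constants so that the total is at most \(3\eta\) when \(\eta\leq1/8\).

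For the bad part, \(1-p(\sigma)^2\leq2\bigl(1-p(\sigma)\bigr)\leq(1-\sigma)^2(2+\sigma)\leq\eta^2\cdot\tfrac{17}{8}\). This gives a bad component of norm at most \(\sqrt{17/8}\,\eta\approx1.46\eta\).

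The good part should be re-estimated with the same singular value decomposition. Let \(\widetilde U=W\Sigma V^\dagger\). Then \(M-U=W\bigl(p(\Sigma)-\Sigma\bigr)V^\dagger+(\widetilde U-U)\). Here \(\norm{p(\Sigma)-\Sigma}=\max_\sigma|p(\sigma)-\sigma|=\max_\sigma\tfrac12|\sigma|\,|1-\sigma^2|\leq\tfrac12(1+\eta)\eta(2+\eta)\approx1.2\eta\). This bound is loose, but combined with \(\norm{\widetilde U-U}=\eta\) it gives about \(2.2\eta\).

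The two components are orthogonal, so the total error is the root-sum-square, \(\sqrt{(2.2\eta)^2+(1.46\eta)^2}\approx2.65\eta\leq3\eta\). This is the final step I would check carefully.
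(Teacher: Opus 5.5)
Your proof is correct and follows essentially the paper's route: split the error orthogonally into the $\ket0_{\mathsf W}$ part and its complement, use that the singular values of $\widetilde U$ lie in $[1-\eta,1+\eta]$ and that $1-p(\sigma)=\tfrac12(1-\sigma)^2(2+\sigma)$ to make the complement $O(\eta)$ via unitarity, and combine the two by Pythagoras. The paper uses the polar decomposition $\widetilde U=Z\Sigma$ and bounds the good part by $\norm{Z-U}+\norm{f(\Sigma)-I}\leq2\eta+\tfrac{13}{8}\eta^2$; your first-order expansion bound $\eta+\tfrac32\eta^2+\tfrac12\eta^3$ is in fact sharper than both that and your SVD re-estimate.

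There is one arithmetic slip. Since $\sigma\leq1+\eta$, you get $(1-\sigma)^2(2+\sigma)\leq(3+\eta)\eta^2\leq\tfrac{25}{8}\eta^2$, not $\tfrac{17}{8}\eta^2$, so the bad part is bounded by about $1.77\eta$. The root-sum-square is then about $2.82\eta$ with your SVD good-part bound, or about $2.13\eta$ with your first good-part bound, so the $3\eta$ conclusion still holds.
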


\begin{proof}
Use the polar decomposition \(\widetilde U=Z\Sigma\), where
\(\Sigma\geq0\).  Every singular value of \(\widetilde U\) lies in
\([1-\eta,1+\eta]\), so \(Z\) is unitary and
\(\norm{\Sigma-I}\leq\eta\).  Consequently,
\[
 \norm{Z-U}
 \leq\norm{Z-\widetilde U}+\norm{\widetilde U-U}
 =\norm{I-\Sigma}+\eta
 \leq2\eta.
\]
Equation \eqref{eq:OAA} gives
\[
 [U_{\mathrm{sim}}]_0=Zf(\Sigma),\qquad
 f(\sigma)=\frac{\sigma(3-\sigma^2)}2.
\]
Writing \(\sigma=1+t\), we have
\[
 f(1+t)-1=-\frac32t^2-\frac12t^3.
\]
Therefore, if \(|t|\leq\eta\leq1/8\), then
\[
 \lvert f(1+t)-1\rvert
 \leq\frac32\eta^2+\frac12\eta^3
 \leq\frac{13}{8}\eta^2.
\]
Applying this bound to every singular value of \(\widetilde U\) gives
\(\norm{f(\Sigma)-I}\leq13\eta^2/8\).  Hence
\[
 \begin{aligned}
 \norm{[U_{\mathrm{sim}}]_0-U}
 &=\norm{Zf(\Sigma)-U}\\
 &\leq\norm{f(\Sigma)-I}+\norm{Z-U}\\
 &\leq2\eta+\frac{13}{8}\eta^2.
 \end{aligned}
\]
The same singular-value bound gives, for every unit \(\ket\psi\),
\[
 \norm{f(\Sigma)\ket\psi}
 \geq1-\frac{13}{8}\eta^2.
\]
It follows that
\[
 1-\norm{f(\Sigma)\ket\psi}^2
 \leq1-\left(1-\frac{13}{8}\eta^2\right)^2
 \leq\frac{13}{4}\eta^2.
\]
Unitarity of \(U_{\mathrm{sim}}\) therefore gives
\[
 \norm{(I-\Pi_{\mathsf W,0})U_{\mathrm{sim}}
   \bigl(\ket0_{\mathsf W}\ket\psi\bigr)}^2
 =1-\norm{[U_{\mathrm{sim}}]_0\ket\psi}^2
 \leq\frac{13}{4}\eta^2.
\]
The error vector has the orthogonal decomposition
\[
 \begin{aligned}
 U_{\mathrm{sim}}\bigl(\ket0_{\mathsf W}\ket\psi\bigr)
   -\ket0_{\mathsf W}U\ket\psi=
 \ket0_{\mathsf W}\bigl([U_{\mathrm{sim}}]_0-U\bigr)\ket\psi
 +(I-\Pi_{\mathsf W,0})U_{\mathrm{sim}}
   \bigl(\ket0_{\mathsf W}\ket\psi\bigr).
 \end{aligned}
\]
The first summand lies in \(\operatorname{ran}\Pi_{\mathsf W,0}\),
whereas the second lies in \(\ker\Pi_{\mathsf W,0}\).  The Pythagorean
theorem therefore gives
\[
 \norm{
 U_{\mathrm{sim}}
 \bigl(\ket0_{\mathsf W}\ket\psi\bigr)
 -\ket0_{\mathsf W}U\ket\psi}^2
 \leq
 \left(2\eta+\frac{13}{8}\eta^2\right)^2
 {}+\frac{13}{4}\eta^2
 =\eta^2\left[\left(2+\frac{13}{8}\eta\right)^2
   +\frac{13}{4}\right]
 \leq9\eta^2,
\]
where the last inequality uses \(\eta\leq1/8\).  This proves
\eqref{eq:stateerror}.
\end{proof}

The circuit \(U_{\mathrm{sim}}\) uses \(V,V^\dagger,V\), each with
\(4q\) oracle calls.  Its total \(\HAMT\) query count is
\begin{equation}
 3(4q)=12q.
 \label{eq:querycount}
\end{equation}

\subsection{Query and gate complexity}

The identity circuit suffices when \(\alpha T\leq\eps\).  For
\(\alpha T>\eps\), the theorem below gives the complexity of the
construction, which is summarized in
Algorithm~\ref{alg:timeDependentSimulation}.

\begin{theorem}[Query and gate complexity]
\label{thm:main}
Let \(H\colon[0,T]\to\operatorname{Herm}(\Sys)\) satisfy
\(\norm{H(t)}\leq\alpha\) and
\(\norm{H(t)-H(s)}\leq\beta|t-s|\) for all \(s,t\in[0,T]\), and be given
through the \(\HAMT\) oracle in Definition~\ref{def:HAMTOracle}.  For every
\(0<\eps\leq1/2\), there exist an auxiliary register \(\mathsf W\) and
a unitary circuit \(U_{\mathrm{sim}}\) such that, for every unit initial
state \(\ket{\psi_0}_{\mathsf S}\),
\[
 \norm{
 U_{\mathrm{sim}}\bigl(\ket0_{\mathsf W}\otimes\ket{\psi_0}\bigr)
 -\ket0_{\mathsf W}\otimes U_H(T)\ket{\psi_0}}
 \leq\eps.
\]
The circuit uses \(12q\) queries, where
\[
 q=O\!\left(
 \alpha T+\frac{\log(1/\eps)}
 {\log\!\bigl(e+\log(1/\eps)/(\alpha T)\bigr)}
 \right)
\]
and can be implemented using
\(\widetilde O\bigl(q(a+1)\allowbreak
\max\{q,\allowbreak\beta T^2/\eps,\allowbreak
(\alpha T)^{3/2}/\sqrt\eps\}\bigr)\)
one- and two-qubit gates.
\end{theorem}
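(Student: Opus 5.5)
We assemble the pieces already proved and split the error budget between the Cayley discretization and the catalyst-omission error. The target is \(U=U_H(T)\). Write \(\widetilde U=\sum_N\lambda_NP_N\) as in Proposition~\ref{prop:clockFilter}. Then
\[
 \eta:=\norm{\widetilde U-U_H(T)}\leq\norm{\widetilde U-U_C}+\norm{U_C-U_H(T)},
\]
and we will arrange each term to be at most \(\eps/12\). Then \(\eta\leq\eps/6\leq1/12\leq1/8\), and Lemma~\ref{lem:robustOAA} gives state error \(3\eta\leq\eps/2\leq\eps\). The query count \(12q\) is \eqref{eq:querycount}.

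\textbf{Choice of \(J\).} By \eqref{eq:CayleyEvolutionError}, it suffices to take \(J\) a power of two with
\[
 J\geq\max\Bigl\{q,\;12\beta T^2/\eps,\;(\alpha T)^{3/2}/\sqrt{\eps}\Bigr\}.
\]
The second entry makes the \(\beta\) term at most \(\eps/24\). The third makes the cubic term \((\alpha T)^3/(12J^2)\leq\eps/12\); we adjust the constant so the two terms sum to at most \(\eps/12\). The first entry ensures \(w=\alpha T/J\leq\alpha T/q\), which Proposition~\ref{prop:factorialDecay} requires. Together with \(q\geq\alpha T\), it also gives \(w\leq1\).

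\textbf{Choice of \(q\).} By \eqref{eq:filterError}, we need
\[
 \sqrt{\alpha T}\,(12e\alpha T/q)^q\leq\eps/12.
\]
We take \(q\geq\max\{1,\,24e\,\alpha T\}\), so the base is at most \(1/2\). Then \(q\geq\log_2(12\sqrt{\alpha T}/\eps)\) suffices when \(\log(1/\eps)\lesssim\alpha T\). Note that \(\log\sqrt{\alpha T}\) is dominated by \(\alpha T\) and is absorbed in that term. In the regime \(\log(1/\eps)\gg\alpha T\), set \(L=\log(12\sqrt{\alpha T}/\eps)\) and take
\[
 q=\left\lceil\frac{cL}{\log(e+L/(\alpha T))}\right\rceil .
\]
The standard check is that \(q\log(q/(12e\alpha T))\geq L\) for a suitable constant \(c\). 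This holds because
\[
 q/(\alpha T)\gtrsim \frac{L/(\alpha T)}{\log(L/(\alpha T))},
\]
whose logarithm is \(\Theta(\log(L/\alpha T))\). This Lambert-W-type inversion is the main technical point. I will handle it by the case split \(L\leq C\alpha T\) versus \(L>C\alpha T\), and verify the inequality using \(\log(x/\log x)\geq\tfrac12\log x\) for large \(x\). Finally, \(L\) differs from \(\log(1/\eps)\) only by \(\tfrac12\log(12^2\alpha T)\), which is absorbed, giving the stated \(q\). When \(\alpha T<1\), the \(\max\) with \(1\) handles the constant.

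\textbf{Gate count.} Each \(V_N\) layer applies \(S^\circ=R_{J-1}\cdots R_0\), costing \(O(J(a+\log J))\) gates by the remark after \eqref{eq:globalUpdate}, plus polylogarithmic control overhead in \(q\). The operations \(\widetilde F_N\), \(\widetilde X_N\), PREP, comparisons and reflections are polylogarithmic per layer. With \(4q\) layers and three uses of \(V\) or \(V^\dagger\), the total is \(\widetilde O(qJ(a+1))\). Substituting the choice of \(J\) gives \(\widetilde O\bigl(q(a+1)\max\{q,\beta T^2/\eps,(\alpha T)^{3/2}/\sqrt\eps\}\bigr)\).
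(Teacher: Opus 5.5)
Your proof follows the paper's own argument. You choose \(q\) so that \(\sqrt{\alpha T}(12e\alpha T/q)^q\le\eps/12\) through a two-regime inversion, take \(J\ge\max\{q,\Theta(\beta T^2/\eps),\Theta((\alpha T)^{3/2}/\sqrt\eps)\}\), count \(12q\) queries, and get \(\widetilde O(qJ(a+1))\) gates. You apply Lemma~\ref{lem:robustOAA} with \(U=U_H(T)\) and fold the Cayley error into \(\eta\), whereas the paper applies it with \(U=U_C\) and adds \eqref{eq:CayleyEvolutionError} afterwards by the triangle inequality. That variant is harmless: it only triples the discretization budget, and your constants absorb this.

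One step fails as written: the remark that for \(\alpha T<1\) the \(\max\) with \(1\) is absorbed into the constant. Fix \(\eps\), say \(\eps=1/2\), and let \(\alpha T\to0\). Then \(\alpha T+\log(1/\eps)/\log\bigl(e+\log(1/\eps)/(\alpha T)\bigr)\) tends to \(0\), because the denominator grows like \(\log(1/(\alpha T))\). Your construction always has \(q\ge1\), so it uses at least \(12\) queries, and no universal constant in the \(O(\cdot)\) can work.

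The paper closes this by treating \(\alpha T\le\eps\) separately. There the identity circuit uses no queries and has error \(\norm{U_H(T)-I}\le\int_0^T\norm{H(t)}\,\mathrm{d}t\le\alpha T\le\eps\). For \(\alpha T>\eps\), set \(L=\log(12/\eps)\); then \(\log(e+L/(\alpha T))\le 2L\), so \(L/\log(e+L/(\alpha T))\ge 1/2\). Only with this lower bound are the ceiling and the requirement \(q\ge\max\{1,\alpha T\}\) absorbed into the constant. Adding this case completes your proof.
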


\begin{algorithm}[t]
\caption{Time-dependent Hamiltonian simulation}
\label{alg:timeDependentSimulation}
\small
\begin{algorithmic}[1]
\Require A Hamiltonian \(H\colon[0,T]\to\operatorname{Herm}(\Sys)\) satisfying
         \(\norm{H(t)}\leq\alpha\) and
         \(\norm{H(t)-H(s)}\leq\beta|t-s|\) for all \(s,t\in[0,T]\),
         \(\HAMT\) access to its
         time-indexed block-encodings, and a target error
         \(0<\eps\leq1/2\).
\Ensure A unitary \(U_{\mathrm{sim}}\) such that, if \(\mathsf W\)
        collects all work registers, then every unit system state
        \(\ket\psi\) satisfies
        \[
        \norm{
        U_{\mathrm{sim}}(\ket0_{\mathsf W}\ket\psi)
        -\ket0_{\mathsf W}U_H(T)\ket\psi}
        \leq\eps.
        \]
\State Choose the smallest integer \(q\geq\max\{1,\alpha T\}\) such
       that \(\sqrt{\alpha T}(12e\alpha T/q)^q\leq\eps/12\), and
       choose the smallest power of two \(J\) satisfying
       \[
       J\geq\max\left\{
       q,\frac{\beta T^2}{\eps},
       \frac{(\alpha T)^{3/2}}{\sqrt{3\eps}}
       \right\}.
       \]
\State Set \(t_j=jT/J\) and \(w=\alpha T/J\) for
       \(j=0,\ldots,J-1\).  Define the Cayley steps and their
       time-ordered product by
       \[
       U_j^C:=
       \left(I-\frac{iw}{2\alpha}H(t_j)\right)
       \left(I+\frac{iw}{2\alpha}H(t_j)\right)^{-1},
       \qquad
       U_C:=U_{J-1}^C\cdots U_0^C.
       \]
\State Construct the one-query Cayley transducer \(S\) and its
       catalyst map \(\Gamma\), characterized by
       \[
       S\bigl(\ket\psi\oplus\Gamma\ket\psi\bigr)
       =U_C\ket\psi\oplus\Gamma\ket\psi,
       \qquad \norm{\Gamma}^2\leq\alpha T.
       \]
\State For each reuse length \(N=1,\ldots,4q\), form the finite-reuse
       circuit from \(N\) calls to \(S\) with zero private input, and
       denote the encoded operator by \(P_N\).
\State Define \(G_N(z):=N^{-1}\sum_{k=0}^{N-1}z^k\) and
       \(Q(z):=(1+z^2)/2\), and choose the real coefficients
       \(\lambda_1,\ldots,\lambda_{4q}\) satisfying
       \[
       Q(z)^qG_{2q}(z)
       =\sum_{N=1}^{4q}\lambda_NG_N(z).
       \]
\State Apply the LCU construction to block-encode
       \(\widetilde U:=\sum_{N=1}^{4q}\lambda_NP_N\) with
       normalization \(2\), and apply one degree-three robust
       oblivious amplitude-amplification step to obtain
       \(U_{\mathrm{sim}}\).
\State \Return \(U_{\mathrm{sim}}\).
\end{algorithmic}
\end{algorithm}

\begin{proof}
If \(\alpha T\leq\eps\), then
\(\norm{U_H(T)-I}\leq\int_0^T\norm{H(t)}\,\mathrm{d}t\leq\alpha T\), so the
identity circuit proves the claim without an oracle query.  Hence assume
\(\alpha T>\eps\) and fix a unit initial state \(\ket{\psi_0}\).  Write
\(U_{\mathrm{sim}}^{(J)}\) for the amplified circuit constructed from
the \(J\)-step Cayley transducer.  Let
\(L=\log(12/\eps)\).  Since \(L\geq\log 24\) and
\(\alpha T>\eps\), we have
\(\log(e+L/(\alpha T))\leq2L\), so
\(L/\log(e+L/(\alpha T))\geq1/2\).  Choose
\[
 q=\left\lceil c_\star\left(
 \alpha T+\frac{L}{\log(e+L/(\alpha T))}
 \right)\right\rceil
\]
for a sufficiently large universal constant \(c_\star\).  This choice has
the order stated in the theorem.  We first show that
\(\sqrt{\alpha T}(12e\alpha T/q)^q\leq\eps/12\).
Suppose \(L\leq\alpha T\).  Increasing
\(c_\star\) if necessary gives
\[
 q\log\!\left(\frac{q}{12e\alpha T}\right)
 \geq c_\star\alpha T\log\!\left(\frac{c_\star}{12e}\right)
 \geq2\alpha T
 \geq L+\frac12\log(\alpha T).
\]
If \(L>\alpha T\), set \(x=L/(\alpha T)>1\).  Using
\(\log(e+x)\leq2\sqrt{x}\) and \(e+x\leq(e+1)x\), we obtain, for
sufficiently large \(c_\star\),
\[
 \log\!\left(\frac{q}{12e\alpha T}\right)
 \geq\log\!\left(\frac{c_\star\sqrt{x}}{24e}\right)
 \geq\frac14\log(e+x).
\]
Together with \(q\geq c_\star L/\log(e+x)\), this gives
\[
 q\log\!\left(\frac{q}{12e\alpha T}\right)
 \geq\frac{c_\star L}{4}
 \geq L+\frac12\max\{\log(\alpha T),0\}.
\]
The same final inequality holds in the first case.  Consequently,
\[
 \sqrt{\alpha T}
 \left(\frac{12e\alpha T}{q}\right)^q
 =\exp\!\left[
  \frac12\log(\alpha T)
  -q\log\!\left(\frac{q}{12e\alpha T}\right)
 \right]
 \leq e^{-L}=\frac{\eps}{12}.
\]

Choose \(J\) to be the smallest power of two satisfying
\begin{equation}
 J\geq\max\left\{
 q,\frac{\beta T^2}{\eps},
 \frac{(\alpha T)^{3/2}}{\sqrt{3\eps}}
 \right\}.
 \label{eq:finiteCayleyJ}
\end{equation}
Since \(J\geq q\),
\(w=\alpha T/J\leq\alpha T/q\).  Proposition~\ref{prop:clockFilter}
gives \(\eta:=\big\|U_C-\widetilde U\big\|\leq\eps/12<1/8\), so
Lemma~\ref{lem:robustOAA} applies with \(U=U_C\).  Together with
\eqref{eq:CayleyEvolutionError}, it gives
\begin{equation}
\begin{aligned}
 &\norm{
 U_{\mathrm{sim}}^{(J)}
 \bigl(\ket0_{\mathsf W}\ket{\psi_0}\bigr)
 -\ket0_{\mathsf W}U_H(T)\ket{\psi_0}}\leq
 3\sqrt{\alpha T}
 \left(\frac{12e\alpha T}{q}\right)^q
 +\frac{\beta T^2}{2J}
 +\frac{(\alpha T)^3}{12J^2}
 \leq\eps.
\end{aligned}
\label{eq:finiteCayleyTotalError}
\end{equation}
Indeed, the three terms in the last bound are at most
\(\eps/4\), \(\eps/2\), and \(\eps/4\), respectively, by the
choices of \(q\) and \(J\).

The query count is \eqref{eq:querycount}.  The amplification circuit
applies \(S^\circ\) or \((S^\circ)^\dagger\) \(12q\) times, and each
application contains \(J\) updates of the form \(R_j\).  By the gate
bound following \eqref{eq:globalUpdate}, these updates cost
\(\widetilde O(qJ(a+1))\) gates.  The remaining operations cost
\(\widetilde O(q+a)\): preparing the coefficient state over \(O(q)\)
labels costs \(O(q)\), the \(O(q)\) controlled state preparations,
shifts, and comparisons have polylogarithmic cost per operation, and
the two reflections cost \(\widetilde O(a)\).  Hence the complete gate count is
\[
 \widetilde O\bigl(qJ(a+1)+q+a\bigr)
 =\widetilde O\bigl(qJ(a+1)\bigr),
\]
where the equality uses \(J\geq q\geq1\).  The choice
\eqref{eq:finiteCayleyJ} satisfies
\[
 J=O\!\left(
 \max\left\{q,\frac{\beta T^2}{\eps},
 \frac{(\alpha T)^{3/2}}{\sqrt\eps}\right\}
 \right),
\]
which gives the gate bound in the theorem.  The tilde suppresses
polylogarithmic overhead from equality tests, reversible arithmetic,
controls on the time-index and reuse registers, and single-qubit
rotation synthesis.  For a fixed
universal gate set, run the construction with target error \(\eps/2\)
and synthesize the required rotations so that their accumulated error
is at most \(\eps/2\).  This changes only the suppressed polylogarithmic
factor and yields the stated error \(\eps\).
\end{proof}

\section*{Acknowledgements}

The authors thank Tongyang Li and Zhengfeng Ji for helpful discussions and valuable comments.

\paragraph{Statement on AI use.} The project was initiated and motivated by the authors. The main ideas underlying the work, including the central proof strategies, were generated by large language models. The human authors carefully studied, refined, and verified these ideas, and substantially rewrote and reorganized their presentation, adding the motivation and exposition necessary to communicate the arguments clearly. The final paper reflects the authors' own understanding of the results, and the authors take full responsibility for every claim, proof, and citation in it.

\begingroup
\small
\bibliographystyle{alphaurl}
\bibliography{ref,alpharef}
\endgroup

\appendix

\section{Construction with a continuous time register}
\label{app:continuousClock}
This appendix gives a continuous-time counterpart of the finite-step
transducer in the main text.  It can be viewed as the construction suggested
by taking the finite time-step size to zero.  We first fix the sampled
Hamiltonian that the continuous transducer will implement.

Fix a power of two \(M\), and define the sampling intervals and the
piecewise-constant Hamiltonian by
\[
 I_m:=\left[\frac{mT}{M},\frac{(m+1)T}{M}\right),\qquad
 H_M(t):=H(mT/M)\quad(t\in I_m).
\]
Let \(U_M(t,s)\) be the propagator generated by \(H_M\), and write
\(U_M:=U_M(T,0)\).  The oracle used in this appendix is
\[
 \HAMT_M=\sum_{m=0}^{M-1}\ketbra{m}{m}_{\mathsf T}
 \otimes O_{mT/M}.
\]
For \(t\in I_m\), set \(O_t:=O_{mT/M}\).  These operators satisfy
\begin{equation}
 O_t^\dagger=O_t,\qquad O_t^2=I,
 \label{eq:continuousOtProperties}
\end{equation}
and
\begin{equation}
 (\bra0_{\mathsf A}\otimes I)O_t
 (\ket0_{\mathsf A}\otimes I)=\frac{H_M(t)}{\alpha}.
 \label{eq:continuousOtBlock}
\end{equation}
The Lipschitz bound and Duhamel's formula give
\begin{equation}
 \norm{U_H(T)-U_M}
 \leq\int_0^T\norm{H(t)-H_M(t)}\,\mathrm{d}t
 \leq\frac{\beta T^2}{2M}.
 \label{eq:timeDiscretization}
\end{equation}


\subsection{An exact transducer on a continuous time register}

In this subsection, take the public space to be \(\Sys\) and the
private space to be
\[
 \Priv=L^2([0,T])\otimes\Anc\otimes\Sys
\]
and use generalized time kets \(\ket t_{\mathsf T}\).  On this space,
the oracle \(\HAMT_M\) is represented by the multiplication operator
\begin{equation}
 O=\int_0^T
 \ketbra{t}{t}_{\mathsf T}\otimes O_t\,\mathrm{d}t.
 \label{eq:continuousOracle}
\end{equation}
Equation~\eqref{eq:continuousOracle} is used only to analyze the given
oracle \(\HAMT_M\); it does not define a new access model.

Define the Volterra integral operator
\begin{equation}
 K_0=\int_0^T\int_0^t
 e^{-\alpha(t-s)}\ketbra{t}{s}_{\mathsf T}\,\mathrm{d}s\,\mathrm{d}t
 \label{eq:volterraKernel}
\end{equation}
on \(L^2([0,T])\).  As in Section~\ref{subsec:onequery}, let
\(\iota_0=\ket0_{\mathsf A}\otimes I\) and
\(\Pi_{\mathsf A,0}=\iota_0\iota_0^\dagger\), and set
\(\widehat\Pi_0=I_{L^2([0,T])}\otimes\Pi_{\mathsf A,0}\).
The kernel in \eqref{eq:volterraKernel} vanishes when the output time
precedes the input time.  The factor \(e^{-\alpha(t-s)}\) is part of the
unitary construction below and does not describe dissipative dynamics.

The following four blocks are independent of the oracle and
act between the public and private spaces:
\begin{align}
 A_0&=iI-2i\alpha\widehat\Pi_0
 (K_0\otimes I_{\mathsf{AS}})\widehat\Pi_0,\nonumber\\
 B&=\sqrt{2\alpha}\int_0^T e^{-\alpha t}
 \ket t_{\mathsf T}\ket0_{\mathsf A}\otimes I_{\mathsf S}\,\mathrm{d}t,
 \nonumber\\
 C_0&=-i\sqrt{2\alpha}\int_0^T e^{-\alpha(T-s)}
 \bra s_{\mathsf T}\bra0_{\mathsf A}\otimes I_{\mathsf S}\,\mathrm{d}s,
 \nonumber\\
 D&=e^{-\alpha T}I_{\mathsf S}.
 \label{eq:continuousColligationBlocks}
\end{align}
Write
\begin{equation}
 R=\begin{pmatrix}D&C_0\\ B&A_0\end{pmatrix}
 \colon\Sys\oplus\Priv\longrightarrow\Sys\oplus\Priv.
 \label{eq:continuousColligation}
\end{equation}

\begin{lemma}[Unitarity of \(R\)]
\label{lem:continuousColligation}
The operator \(R\) is unitary and is independent of the Hamiltonian
oracle.
\end{lemma}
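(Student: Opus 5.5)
The proof reduces to a scalar operator on \(\mathbb C\oplus L^2([0,T])\). We then verify both \(R^\dagger R=I\) and \(RR^\dagger=I\) by explicit kernel computations; we need both, because the private space is infinite-dimensional and an isometry need not be unitary. Independence from the oracle is immediate: the blocks in \eqref{eq:continuousColligationBlocks} involve only \(\alpha\), \(T\), \(\iota_0\) and \(K_0\), never \(O_t\) or \(H\).

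\emph{Reduction.} Write \(\Priv=\operatorname{ran}\widehat\Pi_0\oplus\operatorname{ran}(I-\widehat\Pi_0)\). On \(0\oplus\operatorname{ran}(I-\widehat\Pi_0)\), the blocks \(C_0\) and \(B^\dagger\) vanish, because both carry \(\ket0_{\mathsf A}\). Moreover \(A_0=iI\) there. So this subspace is invariant under \(R\), and \(R\) acts on it as the unitary \(iI\). On the complement \(\Sys\oplus(L^2([0,T])\otimes\ket0_{\mathsf A}\otimes\Sys)\), every block acts as a scalar operator tensored with \(I_{\mathsf S}\). It therefore suffices to show that the following operator on \(\mathbb C\oplus L^2([0,T])\) is unitary:
\[
 r=\begin{pmatrix}e^{-\alpha T}&-i\sqrt{2\alpha}\,\bra f\\ \sqrt{2\alpha}\,\ket g& i(I-2\alpha K_0)\end{pmatrix},
 \qquad f(s)=e^{-\alpha(T-s)},\quad g(t)=e^{-\alpha t}.
\]

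\emph{Key kernel identity.} For \(s,s'\in[0,T]\), integrating over \(t\geq\max(s,s')\) gives
\[
 2\alpha(K_0^\dagger K_0)(s,s')=e^{-\alpha|s-s'|}-e^{-\alpha(2T-s-s')}.
\]
The kernel of \(K_0+K_0^\dagger\) is \(e^{-\alpha|s-s'|}\) (the diagonal has measure zero). Hence
\[
 K_0+K_0^\dagger-2\alpha K_0^\dagger K_0=\ketbra ff.
\]

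\emph{Isometry, \(r^\dagger r=I\).} We check the three independent entries.
\begin{itemize}
 \item Public–public entry: \(e^{-2\alpha T}+2\alpha\norm g^2=e^{-2\alpha T}+(1-e^{-2\alpha T})=1\).
 \item Private–private entry: \(2\alpha\ketbra ff+(I-2\alpha K_0^\dagger)(I-2\alpha K_0)=I+2\alpha\bigl(\ketbra ff-K_0-K_0^\dagger+2\alpha K_0^\dagger K_0\bigr)=I\), by the key identity.
 \item Off-diagonal entry: it reduces to \(e^{-\alpha T}\bra f=\bra g(I-2\alpha K_0)\). This follows from \((\bra gK_0)(s)=e^{\alpha s}(e^{-2\alpha s}-e^{-2\alpha T})/(2\alpha)\), since then \(\bra g-2\alpha\bra gK_0\) has value \(e^{-2\alpha T}e^{\alpha s}=e^{-\alpha T}f(s)\) at \(s\).
\end{itemize}

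\emph{Co-isometry, \(rr^\dagger=I\).} This is the step I expect to be the main obstacle. In infinite dimensions the previous computation alone does not give unitarity, so this step must be verified separately. I would use time reversal \((\mathcal Jh)(t)=h(T-t)\). It satisfies \(\mathcal Jf=g\) and \(\mathcal JK_0\mathcal J=K_0^\dagger\). Under this symmetry the roles of \(B\) and \(C_0^\dagger\) are exchanged, and \(K_0\) is exchanged with \(K_0^\dagger\). Consequently the three identities needed for \(rr^\dagger=I\) are the images of the ones above:
\[
 K_0+K_0^\dagger-2\alpha K_0K_0^\dagger=\ketbra gg,\qquad
 (I-2\alpha K_0)\ket f=e^{-\alpha T}\ket g,\qquad
 e^{-2\alpha T}+2\alpha\norm f^2=1.
\]
Each of these can also be checked directly by the same one-line integrals. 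Together with the reduction, this shows that \(R\) is unitary.
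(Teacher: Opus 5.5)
Your proof is correct, and it takes a genuinely different route from the paper's.

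The paper argues dynamically. It rewrites \(R(\ket\psi\oplus\ket f)=\ket{q_T}\oplus\ket g\) as the input--output system
\[
\dot q_t=-\alpha q_t-i\sqrt{2\alpha}\,f_{0,t},\qquad q_0=\psi,\qquad g_t=if_t+\sqrt{2\alpha}\,\iota_0q_t .
\]
From this it derives the energy balance \(\tfrac{\mathrm d}{\mathrm dt}\norm{q_t}^2=\norm{f_{0,t}}^2-\norm{\iota_0^\dagger g_t}^2\), together with pointwise norm preservation of the component orthogonal to \(\ket0_{\mathsf A}\). Integrating gives the isometry. Surjectivity comes from solving the state equation backwards from an arbitrary prescribed output.

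You instead split off the \(iI\) action away from \(\ket0_{\mathsf A}\) and reduce to the scalar operator \(r\) on \(\mathbb C\oplus L^2([0,T])\). You then verify \(r^\dagger r=I\) and \(rr^\dagger=I\) entry by entry. The computations check out: the kernel of \(K_0^\dagger K_0\), the formula for \(\bra gK_0\), and the three time-reversed identities under \(\mathcal J\) (with \(\mathcal Jf=g\) and \(\mathcal JK_0\mathcal J=K_0^\dagger\)).

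The two proofs are closely related. Your key identity \(K_0+K_0^\dagger-2\alpha K_0^\dagger K_0=\ketbra{f}{f}\) is exactly the integrated, operator-level form of the paper's energy balance. Your time reversal plays the role of the paper's backward solve.

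What each approach buys:
\begin{itemize}
\item The paper's argument is shorter and explains why \(R\) is unitary: it is a lossless scattering system with internal state \(q_t\). It would also carry over to other realizations of this type without recomputing kernels.
\item Your version is fully explicit and closed-form, and needs no ODE solvability or regularity considerations. Its co-isometry check \(rr^\dagger=I\) is an airtight replacement for the paper's somewhat sketched inverse construction. Since \(K_0\) has a bounded kernel on a finite interval, it is Hilbert--Schmidt, so all your kernel manipulations are justified.
\end{itemize}
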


\begin{proof}
Suppose
\(R(\ket\psi\oplus\ket f)=\ket{q_T}\oplus\ket g\).
Represent the private input and output by the functions
\(t\mapsto\ket{f_t}\) and \(t\mapsto\ket{g_t}\), and put
\(\ket{f_{0,t}}=\iota_0^\dagger\ket{f_t}\).  The definitions above are
equivalent to
\begin{align}
 \frac{\mathrm{d}}{\mathrm{d}t}\ket{q_t}
 &=-\alpha\ket{q_t}-i\sqrt{2\alpha}\ket{f_{0,t}},
 \qquad \ket{q_0}=\ket\psi,\nonumber\\
 \ket{g_t}
 &=i\ket{f_t}
   +\sqrt{2\alpha}\,\iota_0\ket{q_t}
 \label{eq:continuousInputOutput}
\end{align}
Differentiating \(\norm{\ket{q_t}}^2\)
gives
\[
 \frac{\mathrm{d}}{\mathrm{d}t}\norm{\ket{q_t}}^2
 =\norm{\ket{f_{0,t}}}^2-\norm{\iota_0^\dagger\ket{g_t}}^2.
\]
The second line of \eqref{eq:continuousInputOutput} gives
\(
 \norm{(I-\Pi_{\mathsf A,0})\ket{g_t}}
 =\norm{(I-\Pi_{\mathsf A,0})\ket{f_t}}
\)
for almost every \(t\).  Integrating this equality together with the
preceding differential identity gives
\[
 \norm{\ket g}^2+\norm{\ket{q_T}}^2
 =\norm{\ket f}^2+\norm{\ket\psi}^2.
\]
Thus \(R\) is an isometry.  To prove that it is onto, prescribe an
arbitrary output \(\ket{q_T}\oplus\ket g\), solve
\(\dot q_t=\alpha q_t-\sqrt{2\alpha}\,
\iota_0^\dagger g_t\) backwards from \(T\), and recover
\[
 \ket{f_{0,t}}
 =i\bigl(\sqrt{2\alpha}\ket{q_t}
 -\iota_0^\dagger\ket{g_t}\bigr),\qquad
 (I-\Pi_{\mathsf A,0})\ket{f_t}
 =-i(I-\Pi_{\mathsf A,0})\ket{g_t}.
\]
This constructs an inverse for every output and proves unitarity.
\end{proof}

Apply the oracle once on the private input before applying \(R\):
\begin{equation}
 S=R(I\oplus O)
 =\begin{pmatrix}D&C\\ B&A\end{pmatrix},
 \qquad A=A_0O,\quad C=C_0O.
 \label{eq:continuousTransducerBlocks}
\end{equation}
For the propagator \(U_M\), define
\begin{equation}
 \Phi=\sqrt\alpha\int_0^T
 \ket t_{\mathsf T}\ket0_{\mathsf A}\otimes U_M(t,0)\,\mathrm{d}t,
 \qquad
 \Gamma=\frac{I+iO}{\sqrt2}\Phi.
 \label{eq:continuousCatalyst}
\end{equation}

\begin{proposition}[Exact transducer for \(U_M\)]
\label{prop:continuousTransducer}
The unitary \(S\) uses one query to \(\HAMT_M\) and, for every
initial state \(\ket{\psi_0}\in\Sys\), satisfies
\begin{equation}
 S\bigl(\ket{\psi_0}\oplus\Gamma\ket{\psi_0}\bigr)
 =U_M\ket{\psi_0}\oplus\Gamma\ket{\psi_0},
 \qquad
 \Gamma^\dagger\Gamma=\alpha T I.
 \label{eq:continuousGraph}
\end{equation}
\end{proposition}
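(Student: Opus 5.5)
The plan is to verify the transducer identity through the input–output description \eqref{eq:continuousInputOutput} of \(R\) from the proof of Lemma~\ref{lem:continuousColligation}, and to compute the norm of \(\Gamma\) directly.

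\textbf{Step 1: the catalyst norm and the query count.} The query count is immediate: \(S=R(I\oplus O)\), \(R\) is oracle-independent by Lemma~\ref{lem:continuousColligation}, and \(O\) represents one call to \(\HAMT_M\). For the norm, \((I+iO)/\sqrt2\) is unitary because \(O\) is a Hermitian unitary by \eqref{eq:continuousOtProperties}. Hence \(\Gamma^\dagger\Gamma=\Phi^\dagger\Phi\). Since \(U_M(t,0)\) is unitary for every \(t\), we get
\[
 \Phi^\dagger\Phi=\alpha\int_0^T U_M(t,0)^\dagger U_M(t,0)\,\mathrm dt=\alpha T I.
\]

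\textbf{Step 2: the private input seen by \(R\).} Set \(\ket f:=O\Gamma\ket{\psi_0}\). Using \(O^2=I\), we have
\[
 O(I+iO)=O+iI=i(I-iO),
\]
so \(\ket f=\tfrac{i}{\sqrt2}(I-iO)\Phi\ket{\psi_0}\). Write \(\ket{u_t}:=U_M(t,0)\ket{\psi_0}\), so that \(\ket{\phi_t}=\sqrt\alpha\,\iota_0\ket{u_t}\). Then
\[
 \ket{f_t}=\tfrac{i}{\sqrt2}(I-iO_t)\sqrt\alpha\,\iota_0\ket{u_t}.
\]
By \eqref{eq:continuousOtBlock}, its \(\ket0_{\mathsf A}\) component is
\[
 \ket{f_{0,t}}=i\sqrt{\alpha/2}\,\bigl(I-iH_M(t)/\alpha\bigr)\ket{u_t}.
\]

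\textbf{Step 3: the public trajectory.} Substitute this into the ODE of \eqref{eq:continuousInputOutput}:
\[
 \dot q_t=-\alpha q_t+\alpha\bigl(I-iH_M(t)/\alpha\bigr)u_t=-\alpha q_t+\alpha u_t-iH_M(t)u_t.
\]
The function \(q_t=u_t\) solves this equation with \(q_0=\ket{\psi_0}\), because \(\dot u_t=-iH_M(t)u_t\) holds almost everywhere; \(H_M\) is piecewise constant, so this is justified interval by interval. By uniqueness of solutions, \(q_t=u_t\), and the public output is \(q_T=U_M\ket{\psi_0}\).

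\textbf{Step 4: the private output.} From \eqref{eq:continuousInputOutput},
\[
 \ket{g_t}=i\ket{f_t}+\sqrt{2\alpha}\,\iota_0\ket{u_t}=-\tfrac1{\sqrt2}(I-iO_t)\sqrt\alpha\,\iota_0\ket{u_t}+\sqrt2\,\sqrt\alpha\,\iota_0\ket{u_t}.
\]
Collecting terms gives \(\tfrac1{\sqrt2}(I+iO_t)\sqrt\alpha\,\iota_0\ket{u_t}\), which is exactly the \(t\)-component of \(\Gamma\ket{\psi_0}\). This is a pointwise algebraic identity, so \(g=\Gamma\ket{\psi_0}\), and the transducer identity \eqref{eq:continuousGraph} follows.

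\textbf{Main obstacle.} The delicate point is justifying the input–output form \eqref{eq:continuousInputOutput} of the block operators in \eqref{eq:continuousColligationBlocks}, which I take as given from Lemma~\ref{lem:continuousColligation}.

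If one wants to check it independently, the public output is \(e^{-\alpha T}\psi+C_0f\), which is variation of constants for \(\dot q=-\alpha q-i\sqrt{2\alpha}f_0\). For the private output, \(A_0f+B\psi\) equals \(if\) plus \(\iota_0\) applied to
\[
 -2i\alpha\int_0^t e^{-\alpha(t-s)}f_{0,s}\,\mathrm ds+\sqrt{2\alpha}\,e^{-\alpha t}\psi,
\]
and this bracket is exactly \(\sqrt{2\alpha}\,q_t\). A secondary technical issue is that \(q_t\) and \(u_t\) are only absolutely continuous. Uniqueness still holds for this linear ODE with \(L^2\) forcing, so the argument goes through.
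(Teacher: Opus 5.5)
Your proof is correct and follows the paper's argument essentially step for step. You compute the $\ket0_{\mathsf A}$ component of $O\Gamma\ket{\psi_0}$ and substitute it into \eqref{eq:continuousInputOutput}, so the damping term cancels and $q_t=U_M(t,0)\ket{\psi_0}$; you then read off the private output as $\Gamma\ket{\psi_0}$ pointwise and obtain $\Gamma^\dagger\Gamma=\Phi^\dagger\Phi=\alpha T I$ from the unitarity of $(I+iO)/\sqrt2$. Your explicit uniqueness argument and your independent check of \eqref{eq:continuousInputOutput} from the block formulas only spell out what the paper leaves implicit.
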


\begin{proof}
Fix \(\ket{\psi_0}\), write
\(\ket{v_t}=U_M(t,0)\ket{\psi_0}\), and take
\(\ket f=O\Gamma\ket{\psi_0}\) as the private input to \(R\).
Projecting \(\ket f_t\) onto \(\ket0_{\mathsf A}\) and using
\eqref{eq:continuousCatalyst} and \eqref{eq:continuousOtBlock} gives
\[
 \ket{f_{0,t}}
 =\frac{H_M(t)\ket{v_t}}{\sqrt{2\alpha}}
  +i\sqrt{\frac\alpha2}\ket{v_t}.
\]
Substitution into the first line of
\eqref{eq:continuousInputOutput} cancels the damping term and gives
\(\dot v_t=-iH_M(t)v_t\).  Substituting
\(\ket{q_t}=\ket{v_t}\) and
\(\ket{f_t}=(O\Gamma\ket{\psi_0})_t\) into the last two lines gives
\(\ket{g_t}=(\Gamma\ket{\psi_0})_t\) for almost every \(t\).
Thus the private output is the original catalyst and the public output
is \(\ket{v_T}=U_M\ket{\psi_0}\), proving the transducer identity.

For the catalyst norm, \((I+iO)/\sqrt2\) is unitary because \(O\) is Hermitian and
satisfies \(O^2=I\), pointwise by
\eqref{eq:continuousOtProperties}.  Using the
definition of \(\Phi\) in \eqref{eq:continuousCatalyst} and the
unitarity of \(U_M(t,0)\),
\[
 \Phi^\dagger\Phi
 =\alpha\int_0^T U_M(t,0)^\dagger U_M(t,0)\,\mathrm{d}t
 =\alpha T I,
\]
which proves the catalyst identity.
\end{proof}

\subsection{Factorial bound and combination of reuse lengths}

We next prove the factorial bound directly for the private-to-private
block of this transducer.  Set
\[
 L_0=\alpha\widehat\Pi_0
 (K_0\otimes I_{\mathsf{AS}})\widehat\Pi_0.
\]
Equation~\eqref{eq:continuousTransducerBlocks}, together with
\(O^2=I\) from \eqref{eq:continuousOtProperties}, gives
\begin{equation}
 A=i(I-2L_0)O,\qquad
 I+A^2=2L_0+2OL_0O-4L_0OL_0O.
 \label{eq:continuousPairedExpansion}
\end{equation}
Substituting the private-to-private block \(A\) into
\(Q(z)=(1+z^2)/2\) from \eqref{eq:Qpoly} gives
\begin{equation}
 Q:=\frac{I+A^2}{2}
 =L_0+OL_0O-2L_0OL_0O.
 \label{eq:continuousQ}
\end{equation}
Because the identity term has canceled and every remaining term
contains \(L_0\), \(Q\) is causal: its integral kernel is supported
only where the input time precedes the output time.

For a private vector \(\ket f\), represent it by the function
\(s\mapsto\ket{f_s}\).  Define the operator-valued kernel \(Q(t,s)\) by
\[
 (\bra t_{\mathsf T}\otimes I)Q\ket f
 =\int_0^TQ(t,s)\ket{f_s}\,\mathrm{d}s.
\]
The kernel vanishes for \(s\geq t\), apart from the measure-zero
diagonal.  Suppose \(s<t\), with \(t\) in sampling interval \(m\) and
\(s\) in sampling interval \(\ell\).  Using \eqref{eq:continuousQ} and the kernel
in \eqref{eq:volterraKernel}, the first two terms have kernels
\[
 \alpha e^{-\alpha(t-s)}\Pi_{\mathsf A,0},
 \qquad
 \alpha e^{-\alpha(t-s)}
 O_t\Pi_{\mathsf A,0}O_s.
\]
Every oracle block is unitary and
\(\Pi_{\mathsf A,0}\) is an orthogonal projection.  The remaining
term therefore satisfies
\[
 \norm{(L_0OL_0O)(t,s)}
 \leq\alpha^2(t-s)e^{-\alpha(t-s)}.
\]
The triangle inequality gives
\begin{equation}
 \norm{Q(t,s)}
 \leq\kappa(t-s),\qquad
 \kappa(\tau):=2\alpha e^{-\alpha\tau}(1+\alpha\tau),
 \quad \tau\geq0.
 \label{eq:continuousKernelBound}
\end{equation}

For \(q\geq1\), define the causal convolution powers by
\[
 \kappa^{*1}=\kappa,\qquad
 \kappa^{*(q+1)}(\tau)
 =\int_0^\tau\kappa(\tau-u)\kappa^{*q}(u)\,\mathrm{d}u.
\]
The kernel of \(Q^q\) integrates only over intermediate times
\(s<u_1<\cdots<u_{q-1}<t\), the continuous analogue of the
nondecreasing index order in the discrete proof.  Composition of causal
integral kernels and
\eqref{eq:continuousKernelBound} therefore give inductively, for
\(0\leq s<t\leq T\),
\begin{equation}
 \norm{Q^q(t,s)}\leq\kappa^{*q}(t-s)
 \label{eq:continuousConvolutionBound}
\end{equation}

Schur's test and causality imply
\begin{equation}
 \norm{Q^q}
 \leq\int_0^T\kappa^{*q}(\tau)\,\mathrm{d}\tau.
 \label{eq:continuousSchur}
\end{equation}
For any \(\theta>0\), the definition of \(\kappa\) in
\eqref{eq:continuousKernelBound}, its nonnegativity, and the
Laplace-transform rule for convolution give
\[
\begin{aligned}
 \int_0^T\kappa^{*q}(\tau)\,\mathrm{d}\tau
 &\leq e^{\theta T}
 \left(\int_0^\infty e^{-\theta\tau}\kappa(\tau)\,\mathrm{d}\tau\right)^q\\
 &=e^{\theta T}
 \left(
 \frac{2\alpha}{\theta+\alpha}
 +
 \frac{2\alpha^2}{(\theta+\alpha)^2}
 \right)^q.
\end{aligned}
\]
Taking \(\theta=q/T\) and using \(q\geq\alpha T\), the quantity in
parentheses is at most \(4\alpha T/q\).  We have proved the following
continuous analogue of Proposition~\ref{prop:factorialDecay}.

\begin{proposition}[Factorial bound for the continuous transducer]
\label{prop:continuousFactorial}
For every integer \(q\geq\max\{1,\alpha T\}\),
\begin{equation}
 \norm{Q^q}
 \leq\left(\frac{4e\alpha T}{q}\right)^q.
 \label{eq:continuousFactorial}
\end{equation}
\end{proposition}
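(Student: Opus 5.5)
The bound follows by assembling the estimates already derived just before the statement into a single chain of inequalities. The plan has four steps, carried out in order.

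\textbf{Step 1: the kernel bound.} From \eqref{eq:continuousQ} and the kernel \eqref{eq:volterraKernel}, the kernel $Q(t,s)$ vanishes for $s>t$. It is bounded in operator norm by $\kappa(t-s)$, where
\[
\kappa(\tau)=2\alpha e^{-\alpha\tau}(1+\alpha\tau).
\]
The first two terms of \eqref{eq:continuousQ} each contribute at most $\alpha e^{-\alpha\tau}$, since the $O_t$ are unitary and $\Pi_{\mathsf A,0}$ is a projection. For the third term, the composed kernel $\int_s^t e^{-\alpha(t-u)}e^{-\alpha(u-s)}\,\mathrm{d}u$ equals $(t-s)e^{-\alpha(t-s)}$, so with the factor $2\alpha^2$ it contributes $2\alpha^2\tau e^{-\alpha\tau}$. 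This is \eqref{eq:continuousKernelBound}.

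\textbf{Step 2: powers of $Q$.} I will show by induction on $q$ that the kernel of $Q^q$ is causal and dominated by $\kappa^{*q}(t-s)$. The composition integral of two causal kernels runs only over $s<u<t$, and the triangle inequality inside the Bochner integral carries the bound through each step. This is \eqref{eq:continuousConvolutionBound}.

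\textbf{Step 3: from kernels to operator norm.} Apply Schur's test on $L^2([0,T])\otimes\Anc\otimes\Sys$ with the scalar majorant $\kappa^{*q}(t-s)\mathbf 1_{s<t}$. Every row sum $\sup_t\int_0^t\kappa^{*q}(t-s)\,\mathrm{d}s$ and every column sum $\sup_s\int_s^T\kappa^{*q}(t-s)\,\mathrm{d}t$ is at most $\int_0^T\kappa^{*q}(\tau)\,\mathrm{d}\tau$. Hence
\[
\norm{Q^q}\le\int_0^T\kappa^{*q}(\tau)\,\mathrm{d}\tau.
\]
I expect this step to be the main technical point, because it requires Schur's test for operator-valued kernels. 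I would handle it by applying the scalar test to the norm function: bound $\norm{\int Q^q(t,s)f_s\,\mathrm{d}s}$ by $\int\norm{Q^q(t,s)}\norm{f_s}\,\mathrm{d}s$ and then use Cauchy--Schwarz.

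\textbf{Step 4: the Laplace-transform estimate.} Since $\kappa\ge0$, we have $\mathbf 1_{\tau\le T}\le e^{\theta(T-\tau)}$. Therefore
\[
\int_0^T\kappa^{*q}\le e^{\theta T}\,\widehat\kappa(\theta)^q,
\]
where $\widehat\kappa(\theta)=\int_0^\infty e^{-\theta\tau}\kappa(\tau)\,\mathrm{d}\tau$; the Laplace transform of a convolution power is the $q$th power of the transform. A direct computation gives
\[
\widehat\kappa(\theta)=\frac{2\alpha}{\theta+\alpha}+\frac{2\alpha^2}{(\theta+\alpha)^2}.
\]
Now take $\theta=q/T$. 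Then $\alpha/(\theta+\alpha)\le\alpha T/q\le1$, so
\[
\widehat\kappa\le 2\alpha T/q+2(\alpha T/q)^2\le 4\alpha T/q .
\]
Combining this with $e^{\theta T}=e^q$ yields $\norm{Q^q}\le(4e\alpha T/q)^q$.
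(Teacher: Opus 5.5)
Your proposal is correct and follows essentially the same route as the paper: the kernel bound \(\norm{Q(t,s)}\leq\kappa(t-s)\), domination of the kernel of \(Q^q\) by the causal convolution power \(\kappa^{*q}\), Schur's test with causality to obtain \(\norm{Q^q}\leq\int_0^T\kappa^{*q}(\tau)\,\mathrm{d}\tau\), and the Laplace-transform estimate with \(\theta=q/T\). You also fill in two steps the paper leaves implicit: the reduction of the operator-valued Schur test to the scalar one, and the check \(\alpha T/q\leq1\) that absorbs the quadratic term into \(4\alpha T/q\).
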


For the continuous transducer, define \(P_N\) by
\eqref{eq:VNpublicOutput}.  Since the proof of
Lemma~\ref{lem:zeroCatalystBlock} uses only the block identities of a
unitary transducer,
\[
 U_M-P_N=C\,G_N(A)\Gamma.
\]
Use the coefficients \(\lambda_N\) obtained from \(E_q\) through
\eqref{eq:clockExpansion} and set
\(\widetilde U=\sum_{N=1}^{4q}\lambda_NP_N\).  Summing the preceding
equation with weights \(\lambda_N\), using \(\sum_N\lambda_N=1\), and
substituting \(E_q(z)=Q(z)^qG_{2q}(z)\) from \eqref{eq:Eq} gives
\[
 U_M-\widetilde U
 =C\,E_q(A)\Gamma
 =C\,Q^qG_{2q}(A)\Gamma.
\]
Since \(A\) and \(C\) are contractions,
\(\norm{G_{2q}(A)}\leq1\).  Proposition
\ref{prop:continuousTransducer} and
Proposition~\ref{prop:continuousFactorial} therefore give
\begin{equation}
 \norm{U_M-\widetilde U}
 \leq\sqrt{\alpha T}
 \left(\frac{4e\alpha T}{q}\right)^q.
 \label{eq:continuousFilterError}
\end{equation}
Proposition~\ref{prop:clockFilter} gives
\(\sum_N|\lambda_N|<2\), and the largest reuse length remains \(4q\).

\subsection{Finite-dimensional restriction}
\label{subsec:finiteRestriction}

We now identify a finite-dimensional subspace containing every private
state reachable from zero private input in at most \(4q\) calls.  For each
sampling interval \(I_m=[mT/M,(m+1)T/M)\) and integer \(a\geq0\),
define
\begin{equation}
 \ket{m,a}_{\mathsf T}
 :=\int_{mT/M}^{(m+1)T/M}
 e^{-\alpha t}(t-mT/M)^a\ket t_{\mathsf T}\,\mathrm{d}t.
 \label{eq:finiteClockKets}
\end{equation}
Here \(m\) labels a sampling interval; there is no Cayley-step index in
this construction.  These vectors specify a subspace, not an orthonormal
computational basis.  Let
\begin{equation}
 \mathcal V_k=\operatorname{span}
 \{\ket{m,a}_{\mathsf T}:0\leq m<M,\ 0\leq a\leq k\}.
 \label{eq:finiteClockSpan}
\end{equation}
The supports of different sampling intervals are disjoint, and the
polynomials within each interval are linearly independent.  Hence
\(\dim\mathcal V_k=M(k+1)\).

The multiplication oracle preserves every \(\mathcal V_k\), because
\(O_t\) is constant on each sampling interval.  The operator \(K_0\)
defined in \eqref{eq:volterraKernel} increases the polynomial degree
within an interval by at most one:
\begin{equation}
 K_0\ket{m,a}_{\mathsf T}
 =\frac{\ket{m,a+1}_{\mathsf T}}{a+1}
 +\frac{(T/M)^{a+1}}{a+1}
  \sum_{\ell=m+1}^{M-1}\ket{\ell,0}_{\mathsf T}.
 \label{eq:finiteClockRecurrence}
\end{equation}
The public-to-private block \(B\) from
\eqref{eq:continuousColligationBlocks} satisfies
\[
 B\ket\psi
 =\sqrt{2\alpha}\sum_{m=0}^{M-1}
 \ket{m,0}_{\mathsf T}\ket0_{\mathsf A}\ket\psi
 \in\mathcal V_0\otimes\Anc\otimes\Sys.
\]
Since each subsequent application of \(A=A_0O\) from
\eqref{eq:continuousTransducerBlocks} raises the degree by at most one,
after \(k\) transducer calls every private state
reachable from zero private input lies in
\(\mathcal V_k\otimes\Anc\otimes\Sys\).

Choose an orthonormal basis separately within the subspace for each
sampling interval.
In this basis the oracle is block diagonal in the sampling-interval
index \(m\) and acts as the identity on the basis index within each
interval.
It is therefore \(\HAMT_M\) tensored with an identity, and still counts
as one query.
At call \(k\), the operator \(R\) in
\eqref{eq:continuousColligation} maps
\[
 \Sys\oplus(\mathcal V_k\otimes\Anc\otimes\Sys)
 \quad\hbox{isometrically into}\quad
 \Sys\oplus(\mathcal V_{k+1}\otimes\Anc\otimes\Sys).
\]
Extend this isometry to a unitary on
\(\Sys\oplus(\mathcal V_{4q}\otimes\Anc\otimes\Sys)\), and use its
adjoint in the inverse circuit.  Since the private state before call
\(k\) lies in \(\mathcal V_k\otimes\Anc\otimes\Sys\), induction shows
that these finite-dimensional unitaries agree with the continuous
construction for every sequence of at most \(4q\) transducer calls.
They therefore reproduce every reuse circuit entering
\(\widetilde U\) in \eqref{eq:continuousFilterError}.

The relevant subspace has dimension
\begin{equation}
 \dim\mathcal V_{4q}=M(4q+1).
 \label{eq:finiteClockDimension}
\end{equation}
The restriction introduces no truncation error and preserves the
identity \([V]_0=\widetilde U/2\) from \eqref{eq:half}.  Choosing \(M\)
to control
\eqref{eq:timeDiscretization}, and then choosing \(q\) as in the proof
of Theorem~\ref{thm:main}, the LCU and amplification construction in
Section~\ref{sec:combining} gives the same asymptotic query scaling,
with \(12q\) queries to \(\HAMT_M\).

We do not give an efficient circuit for \(K_0\) restricted to
\(\mathcal V_{4q}\) or for a unitary extension of \(R\) on the
finite-dimensional subspaces used above.  The unitary completion is
only an existence argument.  We
therefore make no gate-complexity claim for this construction, and it
does not improve the one- and two-qubit gate bound in
Theorem~\ref{thm:main}.

\end{document}